\newtheorem{theorem}{Theorem}[section]
\newtheorem{proposition}[theorem]{Proposition}
\newtheorem{corollary}[theorem]{Corollary}
\newtheorem{lemma}[theorem]{Lemma}
\newtheorem{preremark}[theorem]{Remark}
\newtheorem{predefinition}[theorem]{Definition}
\newtheorem{preexample}[theorem]{Example}
\newtheorem{prenotation}[theorem]{Notation}
\newtheorem{preconjecture}[theorem]{Conjecture}
\newtheorem{assumption}[theorem]{Assumption}
\newenvironment{remark}{\begin{preremark}\rm}{\end{preremark}}
\newenvironment{definition}{\begin{predefinition}\rm}
{\end{predefinition}}
\newcommand{\ZZ}{{\mathbb{Z}}}
\newcommand{\NN}{{\mathbb{N}}}
\newcommand{\FF}{{\mathbb{F}}}
\newcommand{\X}{{\mathbb{X}}}
\newcommand{\PP}{{\mathbb{P}}}
\newcommand{\m}{\mathfrak{m}}
\renewcommand{\Im}{{\mathop{\rm Im}}}
\newcommand{\JJ}{{\mathfrak{J}}}
\newcommand{\DD}{{\mathcal{D}}}
\newcommand{\II}{{\mathcal{I}}}
\let\epsilon=\varepsilon
\def\phi{{\varphi}}
\let\Psi=\varPsi
\let\Phi=\varPhi
\let\theta=\vartheta
\let\rho=\varrho
\newcommand{\wt}{\mathop{\rm wt}\nolimits}
\newcommand{\GL}{\mathop{\rm GL}\nolimits}
\newcommand{\Mat}{\mathop{\rm Mat}\nolimits}
\newcommand{\Hom}{\mathop{\rm Hom}\nolimits}
\newcommand{\gHom}{\underline{\mathop{\rm Hom}\nolimits}}
\newcommand{\HF}{\mathop{\rm HF}\nolimits}
\newcommand{\Soc}{\mathop{\rm Soc}\nolimits}
\newcommand{\Ann}{\mathop{\rm Ann}\nolimits}
\newcommand{\sepdeg}{\mathop{\rm sepdeg}\nolimits}
\def\LT{\mathop{\rm LT}\nolimits}
\def\Hom{\mathop{\rm Hom}\nolimits}
\def\rk{\mathop{\rm rk}\nolimits}
\def\TTo#1{\mathop{\longrightarrow}\limits ^{#1}}
\def\tfrac #1#2{{\textstyle\frac{#1}{#2}}}
\def\tsum_#1^#2{{\textstyle\sum\limits_{#1}^{#2}}}
\def\tprod_#1^#2{{\textstyle\prod\limits_{#1}^{#2}}}
\definecolor{red}{rgb}{1.0, 0.0, 0.0}
\def\cocoa{\mbox{\rm
  C\kern-.13em o\kern-.07 em C\kern-.13em o\kern-.15em A}}
\def\apcocoa{\mbox{\rm
A\kern-0.13em p\kern -0.07em C\kern-.13em o\kern-.07 em C\kern-.13em
o\kern-.15em A}}
\begin{document}

\title{Code Equivalence, Point Set Equivalence, and Polynomial Isomorphism}

\author{Martin Kreuzer}
\address{Fakult\"at f\"ur Informatik und Mathematik, Universit\"at
Passau, D-94030 Passau, Germany}
\email{Martin.Kreuzer@uni-passau.de}

\date{\today}

\begin{abstract}
The linear code equivalence (LCE) problem is shown to be equivalent to
the point set equivalence (PSE) problem, i.e., the problem to check whether two sets
of points in a projective space over a finite field differ by a linear change of coordinates.
For such a point set~$\X$, let $R$ be its homogeneous coordinate ring
and $\JJ_\X$ its canonical ideal. Then the LCE problem is shown to be equivalent to
an algebra isomorphism problem for the doubling $R/\JJ_\X$. As this doubling is
an Artinian Gorenstein algebra, we can use its Macaulay inverse system
to reduce the LCE problem to a Polynomial Isomorphism (PI) problem for homogeneous polynomials.
The last step is polynomial time under some mild assumptions about the codes. Moreover, for
indecomposable iso-dual codes we can reduce the LCE search problem to the 
PI search problem of degree~3 by noting that the
corresponding point sets are self-associated and arithmetically Gorenstein, so that
we can use the isomorphism problem for the Artinian reductions of the coordinate rings
and form their Macaulay inverse systems.
\end{abstract}

\keywords{code equivalence problem, polynomial equivalence, polynomial isomorphism,
canonical module, canonical ideal, Macaulay inverse system}

\subjclass[2010]{Primary 94B27; Secondary  14G50, 94B05, 13C13}

\maketitle

%
%

\section{Introduction}

Given two linear $[n,k]_q$-codes $C$ and~$C'$ with generator matrices $G$ and~$G'$, respectively, 
the \textit{Linear Code Equivalence (LCE)} problem asks whether there exist a matrix $A\in \GL_k(\FF_q)$, 
a diagonal matrix $D\in \GL_n(\FF_q)$, and a permutation matrix $P \in \GL_n(\FF_q)$ such that 
$G' = A\cdot G\cdot D\cdot P$, and if this is the case, to find the matrices $A$, $D$, and~$P$.
This problem lies at the heart of the security of many
code-based cryptographic primitives, for instance the McEliece cryptosystem (see~\cite{McE}), the Niederreiter 
cryptosystem (see~\cite{Nie}), or the recent LESS signature scheme (see~\cite{BMPS}).
Although it has been studied extensively, its precise complexity seems to be unknown. 
For an overview of the main results about this topic we refer to~\cite{BBPS},
\cite{PS}, and~\cite{BMi}. For special classes of codes, some structural attacks
are able to solve the LCE problem, for instance when the dimension of the hull of~$C$ is small (see~\cite{BMS}).
However, for self-dual codes, or, more generally, iso-dual codes, the problem has been considered to be hard.

In this paper we study reductions of the LCE problem to other problems, in particular to the
Polynomial Isomorphism (PI) problem. Further polynomial time reductions of this problem considered before were
a reduction to the lattice isomorphism problem (see~\cite{BW}) and to the permutation equivalence problem
(see~\cite{SS}). Notice, however, that the latter reduction is not polynomial time in the sense
considered here, because we require polynomial time in $(n,k,\log(q))$, whereas~\cite{SS}
only provides a polynomial time algorithm in $(n,k,q)$. Moreover, in~\cite[Prop.~3.3]{GQ2}
the authors construct a reduction of the LCE problem to the 3-Tensor Isomorphism (3-TI) problem
which has been shown to reduce to the PI problem in degree 3 over fields satisfying certain assumptions 
on their characteristic (see~\cite{AS1}, \cite{AS2} and~\cite{GQ1}). Note that these reductions result in the study of
cubic polynomials in significantly more intdeterminates than the ones considered here. 
A somewhat better reduction from 3-TI to 3-PI follows from~\cite{GQ3}, but it still yields
cubic polynomials involving significantly more indeterminates than necessary in this paper.

The approach we take here is to
view the columns of the generator matrices~$G$ and~$G'$ as sets of points in the projective space
$\PP^{k-1}$ over~$\FF_q$ and to apply algebraic geometry methods from the theory of 0-dimensional
subschemes of projective spaces. The first observation is that it is easy to reduce the LCE problem
to the case when~$C$ and~$C'$ are projective codes, i.e., when no two columns in~$G$ or in~$G'$
are $\FF_q$-linearly dependent. Then the associated point sets~$\X$ and~$\X'$ in~$\PP^{k-1}$,
given by the columns of~$G$ and~$G'$, respectively, consist of~$n$ points each.

Our second observation is that the LCE problem is polynomial time equivalent to the
\textit{Point Set Equivalence (PSE)} problem which asks whether there exists a linear change
of coordinates $\Lambda:\; \PP^{k-1} \longrightarrow \PP^{k-1}$ such that $\Lambda(\X)=\X'$
(see Prop.~\ref{prop:CEPandPSEP}).
Here ``polynomial time'' refers to the size of the input parameters $n$, $k$, and~$\log(q)$.

Next we employ the canonical module~$\omega_R$ of the homogeneous coordinate ring
$R= \FF_q[x_1,\dots,x_k]/I_{\X}$ of~$\X$. We may assume that there exists a linear
non-zero\-divisor $\ell \in R_1$. Then the canonical module of~$R$ is
$\omega_R = \gHom_{\FF_q[\ell]}(R,\FF_q[\ell])(-1)$. The properties that the code~$C$ is indecomposable,
or that the set~$\X$ is indecomposable (i.e., not contained in a union of disjoint linear spaces),
are equivalent to the property that~$\omega_R$ is generated in degrees $\le -1$ 
(see Prop.~\ref{prop:indecomp}). Moreover, the property that~$C$ is
iso-dual, i.e., equivalent to its dual, is equivalent to the property that~$\X$ is self-associated,
i.e., equal to its Gale transform (see Prop.~\ref{prop:IsoDualSelfAssoc}). If we combine this and look
at an indecomposable iso-dual code~$C$, we get arithmetically Gorenstein schemes~$\X$ with Hilbert function
$\HF_\X:\; 1\;k{-}1\; k{-}1\;1$, or equivalently, sets of points whose canonical module~$\omega_R$
is graded free of rank~1 and its Hilbert function satisfies $\HF_{\omega_R}(-2)=1$ (see Thm.~\ref{thm:CharIsoDual}).

To be able to use the canonical module for studying the LCE problem, we imbed it as an ideal into the homogeneous coordinate
ring~$R$ of~$\X$. Different such embeddings are possible (see Prop.~\ref{prop:CanId}), but the simplest and
most natural one is the canonical ideal $\JJ_{R/\FF_q[\ell]}$. This is a homogeneous ideal
of~$R$ which starts in degree $r_\X = \min \{ i\ge 0 \mid \HF_R(i)=n\}$ and can be computed
explicitly in polynomial time (see~Prop.~\ref{prop:CompCanId}). Moreover, in the case of an
iso-dual code, its generator has a very simple description (see Prop.~\ref{prop:CanIdOfIsoDual}).
The key property for our purposes is that the residue class ring $D_\X = R/\JJ_{R/\FF_q[\ell]}$
is a graded Artinian Gorenstein ring with socle degree $2r_\X-1$. This ring is called the
doubling of~$\X$ and its main use for our purposes is revealed by Theorem~\ref{thm:EquivDoubling}:
the LCE and PSE problems are equivalent to the property that there exists
a linear change of coordinates which induces an $\FF_q$-algebra isomorphism $\lambda_D:\;
D_{\X} \longrightarrow D_{\X'}$.

In order to check the latter property, we introduce our final tool, the Macaulay inverse system.
Starting with the divided power algebra $\DD = \gHom_{\FF_q}(P, \FF_q) = \FF_q[\pi_1,\dots,\pi_k]$, 
where $P=\FF_q[x_1,\dots,x_k]$ and $\pi_i$ is the projection to~$x_i$, we consider the action~$\circ$ of~$P$ 
on~$D$ by contraction and define for every homogeneous ideal $J\subseteq P$
its Macaulay inverse system as $J^\perp = \{ \psi \in \DD \mid f\circ\psi = 0$ for all $f\in J\}$.
By a theorem of F.S.\ Macaulay, the inverse system of an Artinian Gorenstein algebra with socle degree~$d$
is a principal ideal generated by a homogeneous polynomial of degree~$d$ (see Thm.~~\ref{thm:MacInverse})
and vice versa. Since this duality is equivariant with respect to the action of $\GL_k(\FF_q)$ by 
linear coordinate changes (see Prop.~\ref{prop:EquivarMacInv}), we reduce the LCE and PSE 
problem to the \textit{Polynomial Isomorphism (PI)} problem for the Macaulay inverse polynomials $\Phi_{\X}$
and $\Phi_{\X'}$ of~$\X$ and~$\X'$, respectively. The latter problem is to check whether these two homogeneous polynomials
of degree $2r_\X-1$ differ only by a homogeneous linear change of coordinates, and if this is the case, to find the
matrix defining that linear change of coordinates (see Thm.~\ref{thm:EquivReduce}).

The calculation of the Macaulay inverse polynomial $\Phi_\X$
of the ideal $\widehat{J}_\X \subseteq P$ defining the doubling $D_\X = P / \widehat{J}_\X$
is made explicit in Thm.~\ref{thm:CompMacInv}.
Although this is not a polynomial time algorithm in general, it is so if we have a bound on the regularity
index of~$\X$. For instance, if~$\X$ is associated to a linear code with a good data rate and
minimal distance, such bounds are available (see Rem.~\ref{rem:RegBound}).

Another setting in which we have a polynomial time reduction is the case 
of indecomposable iso-dual codes, or equivalently, of indecomposable self-associated point sets.
Namely, we can use the fact that the Artinian reduction $\overline{R} = R / \langle \ell\rangle$
of the homogeneous coordinate ring of~$\X$ is an Artinian Gorenstein ring with socle degree~3.
Thus the computation of the corresponding Macaulay inverse system is achieved in polynomial time,
and we have reduced the LCE search problem to the PI search problem
(see Prop.~\ref{prop:ArithGorMacInverse} and Cor.~\ref{cor:ArithGorPEQ}).

The PI search problem in degree 3 has been suggested as the security certificate 
of the IP1S identification scheme by J.\ Patarin (see~\cite{Pat}). This scheme has been extensively
cryptoanalyzed and is not considered secure anymore (see~\cite{BFFP}). As far as the original 
question about the complexity of the LCE problem is concerned, we can say that it is known to be not
NP-hard (unless the polynomial hierarchy collapses, see~\cite{PR}), 
and that it reduces to the computation of (special) Macaulay inverse polynomials
and the PI problem. The latter is known to be graph-isomorphism hard (see~\cite{PGC}).
For the special case of iso-dual codes, we have reduced the LCE search problem in polynomial time
to the PI search problem in degree~3, which is also known as the IP1S problem or the cubic equivalence problem 
and has been studied intensely.

Unless explicitly noted otherwise, the notation in this paper follows the books~\cite{KR1} 
and~\cite{KR2}. For basic definitions and facts about set of points in projective spaces,
we refer the readers to~\cite[Sect.~6.3]{KR2}.

\bigskip\bigbreak
%
%

\section{Linear Code Equivalence and Point Set Equivalence}\label{sec2}

In this paper we work with linear codes defines over a finite field~$\FF_q$,
where $q=p^e$ is a power of a prime number~$p$.
A \textit{linear $[n,k,d]_q$-code} is a linear code of length $n\ge 1$ and dimension 
$k = \dim_{\FF_q}(C)\ge 1$ with \textit{minimal distance} $d = \min\{ \wt(c) \mid c\in C \setminus \{0\}\}$. 
Here $\wt(c)$ denotes the \textit{Hamming weight} of a code word $c\in C$. 
In other words, the set~$C$ is a $k$-dimensional $\FF_q$-vector subspace of $\FF_q^n$.
If the minimal distance is not known or of no relevance, we also say that $C$ is an $[n,k]_q$-code.

When we put the coordinates of the elements of a basis of~$C$ into the rows of a
matrix $G \in \Mat_{k,n}(\FF_q)$, we obtain a \textit{generator matrix} of~$C$.
Of course, the generator matrix is not unique: if we choose another basis of~$C$, i.e., 
if we multiply~$G$ from the left with an element of $\GL_k(\FF_q)$, we obtain another
generator matrix of the same code.

Two linear $[n,k]_q$-codes $C,C'$ in $\FF_q^n$ are called \textit{isometric}
if there exists a bijective $\FF_q$-linear map $\phi:\; C \longrightarrow C'$
such that~$\phi$ preserves the Hamming metric, i.e.\ such that $\wt(\phi(c))=\wt(c)$
for all $c\in C$. The following famous result of Florence MacWilliams classifies
these isometries (see~\cite{Wil}).

\begin{theorem}[MacWilliams Extension Theorem]$\mathstrut$\\
Let $C,C'$ be two linear $[n,k]_q$-codes. Then~$C$ and~$C'$ are (linearly)
isometric if and only if there exist a permutation matrix $P \in \GL_n(\FF_q)$ 
and an invertible diagonal matrix $D\in \GL_n(\FF_q)$ such that $C' = \{ c\,D\,P \mid
c\in C\}$. 
\end{theorem}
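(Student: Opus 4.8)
The plan is to check the ``if'' direction directly and to reduce the ``only if'' direction to the statement that a finite multiset of points in $\PP^{k-1}(\FF_q)$ is recovered from the numbers of its points lying on the various hyperplanes. For the ``if'' direction, suppose $C'=\{\,c\,D\,P\mid c\in C\,\}$ with $P\in\GL_n(\FF_q)$ a permutation matrix and $D\in\GL_n(\FF_q)$ diagonal. Then $\phi\colon C\to C'$, $\phi(c)=c\,D\,P$, is $\FF_q$-linear and bijective, its inverse being right multiplication by $P^{-1}D^{-1}$. Since $D\,P$ is a monomial matrix, right multiplication by it only permutes the coordinates of a vector and rescales each by a nonzero scalar, so it neither creates nor destroys zero entries; hence $\wt(\phi(c))=\wt(c)$ for all $c\in C$, and $\phi$ is a linear isometry.

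For the ``only if'' direction, let $\phi\colon C\to C'$ be a linear isometry. First I would fix an $\FF_q$-basis $c_1,\dots,c_k$ of $C$ and record it in the rows of a generator matrix $G\in\Mat_{k,n}(\FF_q)$; since $\phi$ is bijective, $\phi(c_1),\dots,\phi(c_k)$ is a basis of $C'$, and I let $G'\in\Mat_{k,n}(\FF_q)$ be the generator matrix of $C'$ with these rows. Then $\phi(uG)=uG'$ for every row vector $u\in\FF_q^k$, so the hypothesis becomes $\wt(uG)=\wt(uG')$ for all $u\in\FF_q^k$. Writing $g_1,\dots,g_n$ for the columns of $G$ and $g'_1,\dots,g'_n$ for those of $G'$, the $i$-th coordinate of $uG$ is the standard pairing $\langle u,g_i\rangle$, so $\wt(uG)=n-\#\{\,i\mid\langle u,g_i\rangle=0\,\}$, and the hypothesis says that for every $u$ the number of columns of $G$ annihilated by $u$ equals the number of columns of $G'$ annihilated by $u$. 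Setting aside the columns equal to $0$ (annihilated by every $u$) and passing to projective classes, the remaining columns of $G$, resp.\ of $G'$, define a finite multiset of points of $\PP^{k-1}(\FF_q)$ with multiplicity function $m$, resp.\ $m'$; letting $z$, resp.\ $z'$, denote the number of zero columns, the hypothesis --- ranging over the hyperplanes $H=\{\,Q\mid\langle u,Q\rangle=0\,\}$ for $u\ne 0$ --- together with the trivial equation counting all $n$ columns reads
\[
z+\sum_{Q\in H}m(Q)=z'+\sum_{Q\in H}m'(Q)\ \ \text{for every hyperplane }H,\qquad z+\sum_{Q}m(Q)=z'+\sum_{Q}m'(Q).
\]

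The crux is the classical fact that the point--hyperplane incidence matrix $N$ of $\PP^{k-1}(\FF_q)$ is nonsingular over $\QQ$: it is square of size $\tfrac{q^k-1}{q-1}$, and $NN\tr=(\theta_1-\theta_2)I+\theta_2J$ with $J$ the all-ones matrix, $\theta_1=\tfrac{q^{k-1}-1}{q-1}$ the number of hyperplanes through a point and $\theta_2=\tfrac{q^{k-2}-1}{q-1}$ the number through two distinct points, so its eigenvalues $\theta_1-\theta_2=q^{k-2}$ and $\theta_1+(\tfrac{q^k-1}{q-1}-1)\theta_2$ are positive; the case $k=1$, with no hyperplanes, is trivial. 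Adjoining to $N$ one extra row and column to encode the all-columns count and the ``zero column'' slot keeps the matrix nonsingular, since $\tfrac{q^k-1}{q-1}\ne\tfrac{q^{k-1}-1}{q-1}$. Thus the two displayed conditions force $m=m'$ and $z=z'$. Hence there are a permutation $\pi$ of $\{1,\dots,n\}$ and scalars $\lambda_1,\dots,\lambda_n\in\FF_q^{*}$ with $g'_{\pi(i)}=\lambda_ig_i$ for all $i$ (taking $\lambda_i=1$ when $g_i=0$); equivalently $G'=G\,D\,P$ for a suitable diagonal matrix $D\in\GL_n(\FF_q)$ and permutation matrix $P\in\GL_n(\FF_q)$. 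Therefore $C'=\{\,uG'\mid u\in\FF_q^k\,\}=\{\,(uG)\,D\,P\mid u\in\FF_q^k\,\}=\{\,c\,D\,P\mid c\in C\,\}$, as claimed.

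The single genuinely substantive ingredient is the nonsingularity of the incidence matrix --- equivalently, the character sums at the heart of MacWilliams's original argument --- and I expect that to be the main obstacle. The rest is bookkeeping with generator matrices and multisets, the only subtleties being the treatment of zero columns and the degenerate case $k=1$.
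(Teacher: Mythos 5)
Your proof is correct. Note, though, that the paper does not prove this statement at all: it quotes the MacWilliams Extension Theorem as a classical result and refers to MacWilliams's dissertation \cite{Wil}, so there is no in-paper argument to compare against. What you supply is a self-contained proof along the standard counting/design-theoretic route: the ``if'' direction is the trivial observation that a monomial matrix preserves Hamming weight, and for ``only if'' you translate $\wt(uG)=\wt(uG')$ for all $u\in\FF_q^k$ into equal hyperplane counts for the column multisets in $\PP^{k-1}(\FF_q)$ and invoke the nonsingularity of the point--hyperplane incidence matrix $N$ (via $NN\tr=(\theta_1-\theta_2)I+\theta_2 J$ with $\theta_1-\theta_2=q^{k-2}$), which, as you say, plays the role of the character sums in MacWilliams's original argument. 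The bookkeeping with zero columns and the bordered matrix is also sound: the only step you state without computation, that adjoining the extra row and column for the total count and the zero-column slot preserves nonsingularity, follows in one line from the Schur complement, since $N\mathbf{1}=\kappa\mathbf{1}$ with $\kappa=\tfrac{q^{k-1}-1}{q-1}$ gives $1-\mathbf{1}\tr N^{-1}\mathbf{1}=1-\tfrac{v}{\kappa}\ne 0$, which is exactly the inequality $v=\tfrac{q^k-1}{q-1}\ne\kappa$ you cite; together with your separate treatment of $k=1$ this closes every case. So the proposal is a complete and correct proof of a statement the paper leaves to the literature, obtained by the classical incidence-matrix argument rather than by citation.
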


In this case we say that~$C$ and~$C'$ are \textit{linearly equivalent} and write
\hbox{$C\sim C'$}. Sometimes~$C$ and~$C'$ are also called \textit{monomially equivalent} or simply
\textit{equivalent}. It is clear that linearly equivalent codes have the same
length and the same dimension. Using their generator matrices $G$ and $G'$, respectively, 
we can detect whether $C\sim C'$ by checking whether $G' = A\, G\, D\, P$
for a matrix $A\in \GL_k(\FF_q)$, a diagonal matrix $D\in \GL_n(\FF_q)$,
and a permutation matrix $P\in  \GL_n(\FF_q)$.
Notice that we do not include automorphisms of~$\FF_q$ in our definition of code equivalence,
as these are normally not used in the application scenarios we are interested in and lead to
\textit{semi-linearly equivalent codes}. Thus we arrive at the following important problems.

\medskip
\noindent{\bf Linear Code Equivalence Decision Problem.} Decide whether the linear codes generated by~$G$ and~$G'$
are linearly equivalent, i.e., whether there exist an invertible matrix $A\in \GL_k(\FF_q)$, a
diagonal matrix $D\in \GL_n(\FF_q)$, and a permutation matrix $P\in \GL_n(\FF_q)$ 
such that $G' = A \cdot G \cdot D \cdot P$.
\smallskip

\noindent{\bf Linear Code Equivalence Search Problem.} Given two linear codes with generator matrices~$G$
and~$G'$ which are linearly equivalent, find a matrix $A\in \GL_k(\FF_q)$, a diagonal matrix 
$D\in \GL_n(\FF_q)$, and a permutation matrix $P\in \GL_n(\FF_q)$ 
such that $G' = A \cdot G \cdot D \cdot P$.
\smallskip

\noindent Usually, this search problem is simply called the {\bf Linear Code
Equivalence (LCE)} problem.
\smallskip

\medskip
Next we recall the relation between linear codes and sets of points in the
$(k-1)$-dimensional projective space~$\PP^{k-1}$ over~$\FF_q$.
The homogeneous coordinate ring of~$\PP^{k-1}$ will be denoted by $P=\FF_q[x_1,\dots,x_k]$.

Given a set of~$n$ $\FF_q$-rational points $\X = \{p_1,\dots,p_n\}$ 
in $\PP^{k-1}$, we let $I_{\X} \subseteq P$
be its \textit{homogeneous vanishing ideal}. This is the ideal generated by
all homogeneous polynomials $f \in P$ such that
$f(p_1) = \cdots = f(p_n)=0$. Notice that the condition whether or not~$f$ vanishes at a
point~$p_i$ does not depend on the choice of a particular tuple of homogeneous
coordinates of~$p_i$.

The ring $R = P / I_{\X}$ is called the \textit{homogeneous coordinate ring}
of~$\X$. Since~$I_\X$ is a homogeneous ideal, the ring~$R$ is non-negatively
graded, i.e., we have $R = \bigoplus_{i\ge 0} R_i = \bigoplus_{i\ge 0} 
P_i  /(I_\X)_i$, where $P_i$ and $R_i$ denote the vector spaces
of homogeneous elements of degree~$i$.

Let us fix a homogeneous coordinate tuple $p_j = (p_{j1} : \dots : p_{jk})$
for every point $p_j\in\X$. Then every element $f \in R_i$ of~$R$
has a well-defined value $f(p_j)= f(p_{j1},\dots,p_{jk})$ at~$p_j$.
For every $i\ge 1$, we now define a map 
$$
\epsilon_i:\; R_i \longrightarrow (\FF_q)^n
\hbox{\quad\rm by letting\quad} \epsilon_i(f) = (f(p_1),\dots,f(p_n)).
$$
Then $\epsilon_i$ is an injective $\FF_q$-linear map, and its image
$C_i = \epsilon_i(R_i)$ is called the \textit{$i$-th evaluation code}
(or the \textit{$i$-th generalized Reed-Muller code}) associated to~$\X$.

On the other hand, given a linear $[n,k]_q$-code~$C$ with
generator matrix~$G$, we let~$\X$ be the set of points in~$\PP^{k-1}$
defined by the non-zero columns of~$G$. 
It is called the \textit{associated set of points} corresponding to~$G$.
Notice that~$\X$ may consist of less than~$n$ points. 
The following case is particularly useful here.

\begin{definition}
Let $C$ be a linear $[n,k]_q$-code, and let $G \in \Mat_{k,n}(\FF_q)$
be a generator matrix for~$C$. If no two columns of~$G$ are $\FF_q$-linearly
equivalent, then~$C$ is called a \textit{projective} linear code.
\end{definition}

Notice that this condition is independent of the choice of the generator matrix~$G$
and implies that no column of~$G$ is zero. Clearly, a linear $[n,k]_q$-code~$C$
is projective if and only if the associated set of points~$\X$ in~$\PP^{k-1}$
consists of~$n$ distinct points.
Every projective linear code is an evaluation code in the following manner.

\begin{remark}
Let $C$ be a projective linear $[n,k]_q$-code with 
generator matrix $G\in \Mat_{k,n}(\FF_q)$.
For $i=1,\dots,n$, let $(p_{i1},\dots,p_{ik})^{\rm tr}$
be the $i$-th column of~$G$, where $p_{ij}\in \FF_q$. Consider the set of points 
$\X = \{p_1,\dots,p_n\}$ in $\PP^{k-1}$ given by $p_i=(p_{i1} : \cdots : p_{ik})$.
Since any two columns of~$G$ are $\FF_q$-linearly independent, the set~$\X$ consists of~$n$
distinct points.

Then~$C = \epsilon_1(R_1)$ equals the first evaluation code of~$\X$. To see why 
this is true, it suffices to note that $\rk(G)=k$ implies that $\X$ spans $\PP^{k-1}$,
whence $R_1 = P_1 = \FF_q x_1 \oplus \cdots \oplus \FF_q x_k$, and that the evaluations
of the coordinate functions $x_1,\dots,x_k$ at the points of~$\X$ yield exactly
the matrix~$G$.
\end{remark}

The following two lemmas show that the hypothesis that the codes under consideration 
are projective linear codes is no serious restriction
for the study of the LCE problem. (Similar lemmas are shown in~\cite[Lemmas 3.4 and~3.5]{CSV}.)

\begin{lemma}\label{lemma:NoZeroColumn}
Let $C,C'$ be two linear $[\hat{n},k]_q$-codes in $(\FF_q)^{\hat{n}}$, and let $G,G' \in \Mat_{k,\hat{n}}(\FF_q)$
be generator matrices for~$C$ and~$C'$, respectively. Let $\overline{G}$ and $\overline{G'}$
be obtained by deleting the zero columns in~$G$ and~$G'$, respectively, and let $\overline{C},\overline{C}'$
be the codes generated by these matrices. Then $C \sim C'$ holds if and if $\overline{C}\sim \overline{C'}$.
\end{lemma}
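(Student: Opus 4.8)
The plan is to prove both implications by tracking the effect of a monomial equivalence on the zero columns of a generator matrix, relying on the elementary fact that the positions of these columns form an invariant of the code. First I would note that a column~$i$ of a generator matrix~$G$ of a linear code $C\subseteq(\FF_q)^{\hat n}$ is zero if and only if the $i$-th coordinate of every codeword of~$C$ vanishes, since the codewords are exactly the $\FF_q$-linear combinations of the rows of~$G$; hence the number of zero columns does not depend on the chosen generator matrix. Write~$z$ for the number of zero columns of~$G$ and~$z'$ for that of~$G'$, so that $\overline{G}$ and $\overline{G'}$ have $\hat n - z$ and $\hat n - z'$ columns, respectively. For the implication ``$C\sim C'\Rightarrow\overline{C}\sim\overline{C'}$'', suppose $G' = A\,G\,D\,P$ with $A\in\GL_k(\FF_q)$, an invertible diagonal matrix~$D$, and a permutation matrix~$P$. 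Right multiplication by $D\,P$ scales and permutes the columns of~$G$, so it carries zero columns to zero columns and nonzero columns to nonzero columns, and left multiplication by the invertible matrix~$A$ does not affect whether a column is zero. Therefore~$G$ and~$G'$ have the same number of zero columns, namely $z = z'$, matched up by the permutation underlying~$P$, and deleting these columns turns $G' = A\,G\,D\,P$ into an equality $\overline{G'} = A\,\overline{G}\,\overline{D}\,\overline{P}$ with an invertible diagonal matrix~$\overline{D}$ and a permutation matrix~$\overline{P}$ of size $\hat n - z$. This proves $\overline{C}\sim\overline{C'}$.

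For the converse, I would first observe that $\overline{C}\sim\overline{C'}$ already forces $\hat n - z = \hat n - z'$, i.e.\ $z = z'$, since linearly equivalent codes have equal length. Let $S$ and $S'$ be the sets of positions of the nonzero columns of~$G$ and~$G'$, so $|S| = |S'| = \hat n - z$, and let a monomial equivalence $\overline{G'} = A\,\overline{G}\,\overline{D}\,\overline{P}$ be given by a bijection $\sigma_0\colon S'\to S$ together with nonzero scalars $(d_j)_{j\in S'}$ such that the $j$-th column of~$G'$ equals $d_j\,A$ times the $\sigma_0(j)$-th column of~$G$. Now extend~$\sigma_0$ to a bijection~$\sigma$ of $\{1,\dots,\hat n\}$ by picking any bijection from the complement of~$S'$ onto the complement of~$S$ (both sets have~$z$ elements), and put $d_j = 1$ for $j\notin S'$. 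With~$D$ the diagonal matrix whose $j$-th diagonal entry is~$d_j$ and~$P$ the permutation matrix of~$\sigma$, the identity $G' = A\,G\,D\,P$ holds column by column: for $j\in S'$ it holds by construction, and for $j\notin S'$ it holds because both the $j$-th column of~$G'$ and the $\sigma(j)$-th column of~$G$ are zero. Hence $C\sim C'$.

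The argument is little more than bookkeeping with monomial matrices, and the step I expect to need the most care is the converse: one must first extract the equality $z = z'$ from the equality of lengths of the reduced codes, and then correctly pad the monomial equivalence of the reduced codes by the identity on the zero coordinates --- choosing the auxiliary bijection between the two sets of deleted positions --- in order to recover a genuine monomial equivalence of the original codes.
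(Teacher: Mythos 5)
Your proof is correct, and it is essentially the argument the paper delegates to the cited reference (\cite[Lemma~3.5]{CSV}) and uses again in the padding step of Lemma~\ref{lemma:ReduceG}(c): zero columns are an invariant of the code, a monomial equivalence restricts to the nonzero columns, and conversely a monomial equivalence of the pruned matrices extends by any bijection between the two sets of zero positions. The only nitpick is the indexing of the scalars in the converse (with your convention the monomial matrix naturally comes out as $P\,D$ rather than $D\,P$), but since $P\,D = (P D P^{-1})\,P$ with $P D P^{-1}$ diagonal, this is harmless.
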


The proof is analogous to the proof of~\cite[Lemma~3.5]{CSV}.

\begin{lemma}\label{lemma:ReduceG}
Let $C,C'$ be two linear $[\hat{n},k]_q$-codes in $(\FF_q)^{\hat{n}}$, and let $G,G' \in \Mat_{k,\hat{n}}(\FF_q)$
be generator matrices for~$C$ and~$C'$, respectively.
\begin{enumerate}
\item[(a)] For $j = 1, \dots, \hat{n}$, delete the $j$-th column of~$G$ if it is $\FF_q$-linearly dependent
on one of the previous columns of~$G$. Let $n$ be the number of remaining columns, let $\overline{G}$
be the remaining matrix, and let $\overline{C} \subseteq \FF_q^n$ be the linear code with 
generator matrix~$\overline{G}$. Then~$G$ and~$\overline{G}$ have the same associated set of
points~$\X$, and this set consists of~$n$ points.

\item[(b)] Let us perform the same operation as in~(a) with~$G'$. We obtain a code $\overline{C}'$ with a 
generator matrix $\overline{G}' \in \Mat_{k,n'}(\FF_q)$. If~$C$ and~$C'$ are linearly equivalent, then
$n=n'$ and the code $\overline{C}$ is linearly equivalent to~$\overline{C}'$.

\item[(c)] Let us construct the matrices $\overline{G}$ and $\overline{G}'$ as in (a),(b).
If the codes $\overline{C}$ and~$\overline{C'}$ are linearly equivalent, then the codes~$C$ and~$C'$
are linearly equivalent and the matrices defining this linear equivalence can be computed from those
for $\overline{C} \sim \overline{C}'$ in polynomial time.

\end{enumerate}
\end{lemma}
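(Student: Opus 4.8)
\emph{Part (a).} The plan is to track what the deletion does at the level of the associated point set. Recall that a non-zero column $g$ of a matrix contributes the point $[g]\in\PP^{k-1}$, that $[g]=[g']$ holds exactly when $g$ and $g'$ are $\FF_q$-linearly dependent, and that a zero column contributes nothing. Thus every column deleted in~(a) is either a zero column or a column whose point already occurs among the earlier ones, so the associated point set is unchanged; hence $G$ and $\overline G$ have the same point set $\X$. For the equality $\#\X=n$ one notes that, after the process, no two surviving columns are $\FF_q$-linearly dependent — otherwise the later one would be dependent on the earlier one and would have been deleted — and, discarding a possible leading zero column as in Lemma~\ref{lemma:NoZeroColumn}, none is zero; so the $n$ surviving columns define $n$ distinct points. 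In particular $\overline C$ is a projective $[n,k]_q$-code with point set $\X$.

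\emph{Part (b).} I would assume $G'=A\,G\,D\,P$ with $A\in\GL_k(\FF_q)$, $D$ an invertible diagonal matrix, and $P$ a permutation matrix. The key observation is that, reading columns one at a time, $P$ permutes the columns, $D$ rescales each by a non-zero scalar, and $A$ applies one and the same induced linear change of coordinates $\Lambda:\PP^{k-1}\to\PP^{k-1}$. Hence the permutation underlying $P$ carries the partition of the column index set of $G$ into ``zero columns'' together with ``$\FF_q$-linear-dependence classes of non-zero columns'' bijectively onto the corresponding partition of $G'$, and it respects the sizes of these classes; counting the non-zero classes gives $n=n'$, and one obtains $\X'=\Lambda(\X)$ with matching multiplicities. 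Finally, writing $\overline G=[v_1\mid\dots\mid v_n]$ and $\overline G'=[v_1'\mid\dots\mid v_n']$ for the chosen representatives, $A\,v_i$ is a non-zero multiple of $v'_{\sigma(i)}$ for a suitable permutation $\sigma$; packaging these scalars into a diagonal matrix and $\sigma$ into a permutation matrix, and commuting them (which only permutes the diagonal entries), yields $\overline G'=A\,\overline G\,\overline D\,\overline P$ for suitable $\overline D,\overline P$, i.e.\ $\overline C\sim\overline C'$.

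\emph{Part (c).} The plan is to run the argument of~(b) in reverse and lift the equivalence, carrying along the multiplicity data. During the reduction in~(a) one records, in polynomial time, for each surviving column of $G$ the non-empty set of original columns it represents (the zero columns and the columns $\FF_q$-linearly dependent on it), and likewise for $G'$; this gives multiplicity functions on $\X$ and on $\X'$. Given a solution $\overline G'=\overline A\,\overline G\,\overline D\,\overline P$ of the LCE instance for $\overline C\sim\overline C'$, the matrix $\overline A$ induces a coordinate change $\Lambda$ with $\Lambda(\X)=\X'$; applying $\overline A$ to $G$ and then permuting and rescaling its columns class by class so as to line them up with those of $G'$ produces $A=\overline A$, a diagonal matrix $D$, and a permutation matrix $P$ with $G'=A\,G\,D\,P$, all in polynomial time. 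The step I expect to be the real obstacle is precisely this column-matching: it is possible exactly when the recorded multiplicities correspond under $\Lambda$ (and the numbers of zero columns agree). In the situation relevant to the LCE search problem, where $C$ and $C'$ are already known to be linearly equivalent, this compatibility holds automatically by~(b), so the lifting goes through; otherwise one checks it directly beforehand, which is again polynomial time.
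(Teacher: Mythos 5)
Parts (a) and (b) of your proposal are correct and run along essentially the same lines as the paper: deleting a column that is dependent on an earlier one does not change the associated point set, and an equivalence $G'=A\,G\,D\,P$ induces a coordinate change $\Lambda$ that matches the dependence classes of columns, whence $n=n'$ and $\overline{G}'=A\,\overline{G}\,\overline{D}\,\overline{P}$ by matching the representative columns. (The paper delegates the counting to \cite{CSV} and manipulates the monomial matrix $D\,P$ directly; your class-matching argument is, if anything, more explicit, and your handling of a possible leading zero column via Lemma~\ref{lemma:NoZeroColumn} is fine.)

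In (c), however, there is a genuine gap, and you have in fact put your finger on exactly the delicate point before arguing it away too quickly. Your lifting fixes $A=\overline{A}$, where $\overline{G}'=\overline{A}\,\overline{G}\,\overline{D}\,\overline{P}$ is the \emph{given} solution for $\overline{C}\sim\overline{C}'$, and you claim that in the search setting, where $C\sim C'$ is known, the multiplicity compatibility of the induced $\Lambda$ ``holds automatically by (b)''. It does not: part (b) produces a multiplicity-compatible coordinate change from an equivalence of $C$ and $C'$, but the solution you are handed for the reduced codes is an arbitrary one and may differ from that by an automorphism of the reduced code which permutes points of different multiplicities; for that particular $\overline{A}$ no matrices $D,P$ with $G'=\overline{A}\,G\,D\,P$ exist, so your column-matching step fails. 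Concretely, for $k=2$ take $G=(e_1\;e_1\;e_2)$ and $G'=(e_1\;e_2\;e_2)$: here $C\sim C'$ (reverse the coordinates), $\overline{G}=\overline{G}'=(e_1\;e_2)$, and the identity is a legitimate solution of the reduced instance, yet it cannot be lifted with $A=I$. Repairing this requires either an oracle answer that respects the recorded multiplicities or a further search through automorphisms of $\overline{C}'$, neither of which your ``check it beforehand'' fallback supplies; and without some multiplicity hypothesis the implication in (c) itself is problematic (e.g.\ $G=(e_1\;e_1\;e_1\;e_2)$ and $G'=(e_1\;e_1\;e_2\;e_2)$ have equal reductions but different weight distributions, so $C\not\sim C'$). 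To be fair, the paper's own proof passes over the same point --- its phrase ``as before, the canceled columns correspond to each other one to one'' presupposes exactly the compatibility you isolated --- but as a standalone argument your step (c) does not go through as written.
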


\begin{proof}
Claim~(a) follows from the observation that two columns of~$G$ are $\FF_q$-linearly
equivalent if and only if they yield the same point in~$\PP^{k-1}$. 

Next we prove~(b). As in~\cite[Lemma~3.4]{CSV}, we see that the number of columns of~$G$ which corresponds 
to a certain point in~$\PP^{k-1}$ equals the number of columns of~$G'$ with this property. This implies $n=n'$,
and that the columns which are canceled in~$G$ and~$G'$ correspond to each other one to one. Therefore,
if we consider $G' = A \cdot G \cdot D \cdot P$ and delete in the matrix $D\,P$ the row-column pairs corresponding
to the canceled columns, we get $\overline{G}' = A\cdot \overline{G}\cdot \overline{D\;P}$, and this yields the claim.

Finally, we show~(c). As before, the columns which have been canceled in~$G$ and~$G'$ correspond to each 
other one to one. By multiplying~$G$ and~$G'$ on the right with suitable permutation matrices and
arguing by induction, we may assume that~$G$ and~$G'$ arise from~$\overline{G}$ and $\overline{G}'$, respectively, 
by appending a column which is a non-zero multiple of the last column. By multiplying~$G$ and~$G'$ on the right
by a suitable invertible diagonal matrix, we may even assume that the last two columns of both~$G$ and~$G'$ are
equal. Starting from $\overline{G}' = A \cdot \overline{G} \cdot \overline{D} \cdot \overline{P}$, it now 
sufficies to append the row $(0,\dots,0,1)$ and the column $(0,\dots,0,1)^{\rm tr}$ to~$\overline{D}$
and to~$\overline{P}$ in order to get matrices $D,\, P$ such that $G' = A \cdot G \cdot D \cdot P$, as required. 
\end{proof}

In view of these two lemmas we may assume that the linear codes for which we want to study the LCE
problem are projective.
Using the correspondence between projective linear codes and set of points, the
LCE problem can be reformulated using the following problem.
\smallskip

\noindent{\bf Point Set Equivalence Decision Problem.} Given two sets of points $\X,\X'$
in the projective space~$\PP^{k-1}$ over $\FF_q$, decide whether there exists
a linear change of coordinates $\Lambda:\; \PP^{k-1} \longrightarrow \PP^{k-1}$
such that $\Lambda(\X)=\X'$. In this case we say that~$\X$ and~$\X$ are {\it equivalent} point sets
and write $\X \sim \X'$.
\smallskip

\noindent{\bf Point Set Equivalence Search Problem.} Given two equivalent point sets~$\X$
and~$\X'$ in~$\PP^{k-1}$, find a linear change of coordinates $\Lambda:\; \PP^{k-1} \longrightarrow \PP^{k-1}$
such that $\Lambda(\X)=\X'$.
\smallskip

\noindent Usually, this search problem is simply called the \textbf{Point Set Equivalence (PSE)} problem. 

Equivalently, we can ask for an $\FF_q$-algebra isomorphism $\lambda:\; R \longrightarrow R'$
between the homogeneous coordinate rings $R = P/I_\X$ and
$R' = P/I_\X'$ which is defined by homogeneous linear polynomials.
\smallskip

Now we have the following equivalence between the LCE and PSE problems.

\begin{proposition}[Linear Code Equivalence and Point Set Equivalence]\label{prop:CEPandPSEP}$\mathstrut$\\
Let $C,C'$ be two projective linear $[n,k]_q$-codes with generating matrices $G,G' \in \Mat_{k,n}(\FF_q)$,
and let $\X,\X'$ be their associated sets of points. 

Then the LCE decision problem and the PSE decision problem are equivalent.
Moreover, also the LCE search problem and the PSE search problem are equivalent.
\end{proposition}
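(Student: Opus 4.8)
The plan is to set up a dictionary between the two kinds of data and then check that the translation is effective in both directions. First I would fix homogeneous coordinate representatives: write $g_1,\dots,g_n\in\FF_q^k$ for the columns of~$G$ and $g'_1,\dots,g'_n$ for the columns of~$G'$, so that $\X=\{[g_1],\dots,[g_n]\}$ and $\X'=\{[g'_1],\dots,[g'_n]\}$ are sets of $n$ distinct points, and recall that a linear change of coordinates $\Lambda$ on~$\PP^{k-1}$ is precisely the map $[v]\mapsto[Bv]$ induced by some $B\in\GL_k(\FF_q)$ acting on column vectors, two vectors defining the same point exactly when they are $\FF_q$-proportional.

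Next I would prove the existence-level biconditional. For $C\sim C'\Rightarrow\X\sim\X'$, starting from $G'=A\,G\,D\,P$ with $D=\mathrm{diag}(d_1,\dots,d_n)$ and $P$ the permutation matrix whose $j$-th column is $e_{\pi(j)}$, a direct column computation gives $g'_j=d_{\pi(j)}\,A\,g_{\pi(j)}$ for all~$j$; hence $[g'_j]=\Lambda([g_{\pi(j)}])$ for the map~$\Lambda$ induced by~$A$, and since~$\pi$ is a bijection this says exactly $\Lambda(\X)=\X'$. For the converse, $\Lambda(\X)=\X'$ with $\Lambda$ induced by $B\in\GL_k(\FF_q)$ forces a bijection~$\tau$ of $\{1,\dots,n\}$ and scalars $\mu_j\in\FF_q^{\ast}$ with $B\,g_j=\mu_j\,g'_{\tau(j)}$; packaging the~$\mu_j$ into a diagonal matrix $D_0$ and~$\tau$ into a permutation matrix $P_0$ turns this into the identity $B\,G=G'\,P_0\,D_0$, and solving gives $G'=B\,G\,D_0^{-1}\,P_0^{-1}$, which is of the required form $A\,G\,D\,P$ with $A=B$. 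This already shows that, once the point sets are identified with the columns of the generator matrices, the LCE and PSE decision problems are literally the same yes/no question.

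Finally I would observe that both translations are constructive and run in time polynomial in $(n,k,\log q)$, which upgrades the equivalence to the search problems. Passing from $(G,G')$ to $(\X,\X')$ is just reading off columns, and an LCE solution $(A,D,P)$ yields the PSE solution $\Lambda$ induced by~$A$ with no further computation. Conversely, from a PSE solution~$B$ one recovers $(A,D,P)$ by setting $A=B$ and, for each~$j$, locating the unique column~$g'_i$ of~$G'$ proportional to~$B\,g_j$ and recording the ratio; this amounts to $O(n^2)$ proportionality tests of vectors in~$\FF_q^k$, from which~$D$ and~$P$ are read off. The only point requiring care is that~$G$ and~$G'$ usually have more columns than rows, so one cannot simply ``invert~$G'$'': the matrices $D$ and $P$ must be assembled by hand from the combinatorial data $(\tau,(\mu_j))$, keeping track of the order of the factors and of the inverses. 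This bookkeeping is the sole mild obstacle; there is no conceptual difficulty.
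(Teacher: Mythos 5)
Your proposal is correct and follows essentially the same route as the paper: right multiplication by $D$ and $P$ only changes coordinate representatives and the ordering of the points, while left multiplication by $A$ is exactly a linear change of coordinates, and the converse plus the search-problem direction is handled by the same explicit column-matching and bookkeeping (packaging $\tau$ and the scalars $\mu_j$ into $P$ and $D$) that the paper carries out in Remark~\ref{rem:CEPvsPSEP}. Your write-up is merely a more explicit, matrix-level version of the paper's argument.
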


\begin{proof}
Multiplying a generator matrix~$G$ from the right by a permutation matrix amounts
to changing the order of the associated points. Multiplying it from the right by an invertible 
diagonal matrix amounts to changing the tuples of homogeneous coordinates representing the individual points.
Both operations do not change the associated set of points~$\X$. 
Lastly, we note that multiplying~$G$ from the left by an invertible matrix of size~$k$
amounts to a linear change of coordinates in~$\PP^{k-1}$.
\end{proof}

In this paper we will frequently mention algorithms which are of polynomial time.
Here \textit{polynomial time} refers to a polynomial bound in terms of the
input parameters~$k$, $n$, and $\log(q)$ of the code~$C$ or the associated
point set~$\X$. 
Both directions of the equivalences in the preceding proposition
can be carried out in polynomial time as follows.

\begin{remark}\label{rem:CEPvsPSEP}
Let $C,C'$ be two projective linear $[n,k]_q$-codes with generator matrices 
$G, G' \in \Mat_{k,n}(\FF_q)$, and let $\X,\X'$ be their associated sets of points. 
\begin{enumerate}
\item[(a)] Assume that~$C$ and~$C'$ are linearly equivalent, i.e., that
there exist a matrix $A\in \GL_k(\FF_q)$, a diagonal matrix 
$D\in \GL_n(\FF_q)$, and a permutation matrix $P\in \GL_n(\FF_q)$ 
such that $G' = A \cdot G \cdot D \cdot P$.
Then the associated sets of points $\X = \{ p_1, \dots, p_n \}$ and $\X' = \{ p'_1, \dots, p'_n \}$ 
are equivalent via the homogeneous linear change of coordinates $\Lambda:\; \PP^{k-1} \longrightarrow \PP^{k-1}$
given by $\Lambda(p_i) = A\cdot p_i^{\rm tr}$.

\item[(b)] Conversely, assume that $\X \sim \X'$, and that the corresponding linear change of coordinates
$\Lambda:\; \PP^{k-1} \longrightarrow \PP^{k-1}$ is given by $\Lambda(p_i) = A\cdot p_i^{\rm tr}$.
Let us normalize the points of~$\X$ such that their first non-zero homogeneous coordinate is one.
Thus we write $A \cdot p_i^{\rm tr} = c_i \hat{p}_i$ with $c_i \in \FF_q \setminus \{0\}$ for $i=1,\dots,n$.
Moreover, we similarly normalize the points of~$\X'$ and write $p'_j = c'_j \hat{p}'_j$
with $c'_j \in \FF_q \setminus \{0\}$ for $j=1,\dots,n$. Using at most $n$ comparisons, we find
for every $i\in \{1,\dots,n\}$ the unique index $j\in \{1,\dots,n\}$ such that $\hat{p}_i = \hat{p}'_j$.
Hence we obtain a permutation matrix $P\in \GL_n(\FF_q)$ with $\hat{p}_i P = \hat{p}'_j$ for all corresponding
pairs $(i,j)$. Finally, the diagonal matrix $D\in \GL_n(\FF_q)$ such that $G' = A \cdot G \cdot D \cdot P$
is obtained from the factors~$c_i$ and~$c'_j$. All these operations are clearly performed in polynomial time.
They show that the codes~$C$ and~$C'$ are linearly equivalent.

\end{enumerate}
\end{remark}

In the subsequent sections, the following notion will be useful.

\begin{definition}
Let $R$ be a non-negatively graded, finitely generated $\FF_q$-algebra.
The map $\HF_R:\; \mathbb{Z} \longrightarrow \mathbb{N}$ defined by $\HF_R(i) = \dim_{\FF_q}(R_i)$
for all $i\ge 0$ and by $\HF_R(i)=0$ for $i<0$ is called the \textit{Hilbert function} of~$R$.
\end{definition}

If~$R$ is the homogeneous coordinate ring of a set of points in $\PP^{k-1}$, we call it also
the Hilbert function of~$\X$ and write $\HF_\X$ instead of $\HF_R$. The Hilbert function
of a set of $n$ points in $\PP^{k-1}$ has the following properties.
\begin{enumerate}
\item[(1)] $\HF_\X(i)=0$ for $i<0$ and $\HF_\X(0)=1$.

\item[(2)] There exists a number $r_\X$, called the \textit{regularity index} of~$\X$,
such that $1 = \HF_X(0) < \HF_\X(1) < \cdots < \HF_\X(r_X)=n$.

\item[(3)] $\HF_\X(i) = n$ for all $i\ge r_\X$.
\end{enumerate}

For the further examination of the associated point set~$\X$,
we introduce the following hypothesis.

\begin{assumption}\label{ass:nzd}
Given a finite set of $\FF_q$-rational points $\X$ in~$\PP^{k-1}$, we assume that
there exists a linear form $L \in P_1$ which defines a 
hyperplane $H = \mathcal{Z}(L)$ such that no point of~$\X$ is contained in~$H$.
\end{assumption}

In other words, we assume that the set~$\X$ is not a \textit{blocking set} in~$\PP^{k-1}\!$.
By~\cite[Thm.~2]{BB}, a blocking set consists of at least $\frac{q^{k-1}-1}{q-1}$ points.
For most of the associated point sets of the codes in this paper, the above assumption is
satisfied. However, if one wants to study a code whose associated point set does not
satisfy it, one can enlarge the base field $\FF_q \subseteq \FF_{q^f}$ with $f\ge 2$ and
choose a suitable linear form~$\ell$ which is not defined over~$\FF_q$. Under this base field
extension, the relevant invariants of the code and the point set, such as the parameters 
of the code, or the Hilbert function of the coordinate ring of the point set, do not change.
Moreover, if two codes $C,C'\subseteq \FF_q^n$ are equivalent as $\FF_{q^f}$-linear codes 
in~$(\FF_{q^f})^n$, the matrices defining this equivalence are automatically contained 
in $\GL_n(\FF_q)$, i.e., the codes are equivalent over~$\FF_q$. Thus it suffices 
to study the code equivalence problem over the extension field.

\smallskip

Next we extend $\{L \}$ to an $\FF_q$-basis $\{L, y_1, \dots, y_{k-1}\}$ of 
the vector space of linear forms $P_1$. Let $\overline{P}=\FF_q[y_1,\dots,y_{k-1}]$,
and let $\eta:\; P \longrightarrow \overline{P}$ be
the $\FF_q$-algebra epimorphism induced by letting $\eta(L)=0$. In degree one,
this is the projection to $\FF_q y_1 \oplus \cdots \oplus
\FF_q y_{k-1}$ along $\FF_q\,L$. Then the ring $\overline{R} = \overline{P} / 
\eta(I_\X) = R/\langle \ell\rangle$, where~$\ell$ is the residue class of~$L$
in~$R$, is called the \textit{Artinian reduction} of~$R$.
It is a 0-dimensional, local, graded Artinian $\FF_q$-algebra whose maximal ideal
is the ideal $\m = \langle \bar{y}_1,\dots, \bar{y}_{k-1} \rangle$ generated by the residue classes
of the indeterminates. Instead of $\eta(I_\X)$ we also write $\bar{I}_\X$.

\begin{remark}
The Hilbert function of $\overline{R} = R/\langle \ell\rangle$ is the (first) difference 
function of~$\HF_R$, because $\ell\in R_1$ is a homogeneous non-zerodivisor.
In other words, we have
$$
\HF_{\overline{R}}(i) \;=\; \Delta \HF_R(i) \;=\; \HF_R(i) - \HF_R(i-1)
\hbox{\quad for all\quad} i\in\ZZ.
$$
\end{remark}

Now we define the following property of the set of points~$\X$.

\begin{definition}
Let $\X = \{p_1,\dots,p_n\}$ be a set of $\FF_q$-rational points in $\PP^{k-1}$ as above.
We say that~$\X$ is \textit{arithmetically Gorenstein} if~$R$ is a Gorenstein ring.
\end{definition}

Here the ring~$R$ is called a Gorenstein ring if an only if~$\overline{R}$ is a Gorenstein ring,
and the latter property is defined by the condition that the \textit{socle} 
$\Soc(\overline{R}) = \Ann_{\overline{R}}(\m)$ of~$\overline{R}$ is a 1-dimensional $\FF_q$-vector space.
It is well-known that this condition does not depend on the choice of~$\ell$.

The following characterization of arithmetically Gorenstein sets of points was first shown
in~\cite{DGO}. It was generalized to 0-dimensional schemes over algebraically closed fields
in~\cite{Kre1} and to 0-dimensional schemes over arbitrary fields in~\cite{KLR}.

\begin{proposition}\label{prop:CharArithGor}
A set of points $\X = \{p_1,\dots,p_n\}$ in $\PP^{k-1}$ as above
is arithmetically Gorenstein if and only if the following properties are satisfied.
\begin{enumerate}
\item[(a)] The Hilbert function of~$\X$ is {\rm symmetric}, that is
$\Delta \HF_\X(i) = \Delta \HF_\X( r_\X-i )$ for $i=0,\dots,r_\X$.  

\item[(b)] The set of points~$\X$ has the {\rm Cayley-Bacharach property}, that is,
for $i=1,\dots,n$, the set $\mathbb{Y}_i = \X \setminus \{p_i\}$ satisfies
$\HF_{\mathbb{Y}_i}(r_\X-1)= \HF_\X(r_\X-1)$.

\end{enumerate}
\end{proposition}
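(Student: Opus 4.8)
The plan is to prove the characterization of arithmetically Gorenstein point sets (Proposition~\ref{prop:CharArithGor}) by passing to the Artinian reduction $\overline R = R/\langle\ell\rangle$ and relating the Gorenstein condition, the symmetry of $\Delta\HF_\X$, and the Cayley--Bacharach property to dimension counts in the canonical module. Throughout I work under Assumption~\ref{ass:nzd}, so $\ell\in R_1$ is a homogeneous non-zerodivisor and $\overline R$ is a graded Artinian $\FF_q$-algebra with socle degree $r_\X$ (because $\HF_{\overline R}(i)=\Delta\HF_\X(i)$ vanishes for $i>r_\X$ and is non-zero for $i=r_\X$).

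First I would set up the duality machinery. Let $\omega_R=\gHom_{\FF_q[\ell]}(R,\FF_q[\ell])(-1)$ be the canonical module of $R$, and $\omega_{\overline R}=\gHom_{\FF_q}(\overline R,\FF_q)$ the graded $\FF_q$-dual of $\overline R$, which is the canonical module of the Artinian ring $\overline R$. Since $\ell$ is a non-zerodivisor, there is a graded isomorphism $\omega_R/\ell\,\omega_R\cong\omega_{\overline R}$. By graded local duality, $\HF_{\omega_{\overline R}}(i)=\HF_{\overline R}(-i)=\Delta\HF_\X(-i)$. In particular $R$ (equivalently $\overline R$) is Gorenstein iff $\omega_{\overline R}$ is a cyclic $\overline R$-module, iff $\Soc(\overline R)=\Ann_{\overline R}(\m)$ is one-dimensional over $\FF_q$. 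The socle degree of $\overline R$ being $r_\X$ means the socle, if one-dimensional, sits in degree $r_\X$; the heart of the argument is to locate $\Soc(\overline R)$ and control its dimension degree by degree.

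Next I would translate conditions (a) and (b) into statements about $\overline R$. For the symmetry in~(a): the Gorenstein property forces $\omega_{\overline R}\cong\overline R(s)$ for the appropriate shift $s=-r_\X$, so comparing Hilbert functions gives $\Delta\HF_\X(i)=\HF_{\overline R}(i)=\HF_{\omega_{\overline R}}(-i)=\HF_{\overline R(-r_\X)}(-i)=\HF_{\overline R}(r_\X-i)=\Delta\HF_\X(r_\X-i)$, which is exactly~(a); this gives the easy direction ``arithmetically Gorenstein $\Rightarrow$ (a)''. For (b): the known interpretation (see \cite{KR2}, or the references \cite{DGO}, \cite{Kre1}, \cite{KLR}) is that the Cayley--Bacharach property is equivalent to saying that the socle of $\overline R$ is concentrated in the single top degree $r_\X$, i.e.\ $\Soc(\overline R)_i=0$ for $i<r_\X$. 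I would prove this equivalence by relating, for each point $p_i$, the ``extra'' element of $R$ vanishing on $\mathbb{Y}_i=\X\setminus\{p_i\}$ but not at $p_i$ to a would-be socle element: a nonzero socle element of $\overline R$ in degree $j<r_\X$ lifts to an element $g\in R_j$ with $\m\,\bar g=0$, which via the separators of $\X$ forces $\HF_{\mathbb{Y}_i}(r_\X-1)<\HF_\X(r_\X-1)$ for some $i$, and conversely a drop in $\HF_{\mathbb{Y}_i}$ at degree $r_\X-1$ produces such a socle element in degree $<r_\X$. So (b) $\iff$ $\Soc(\overline R)$ lives only in degree $r_\X$.

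Finally I would assemble the two directions. For ``arithmetically Gorenstein $\Rightarrow$ (a) and (b)'': Gorenstein gives $\dim_{\FF_q}\Soc(\overline R)=1$, and since the socle always contains $\overline R_{r_\X}$ (the top nonzero degree, on which $\m$ acts as zero), one-dimensionality forces $\overline R_{r_\X}$ to be one-dimensional and the socle to be exactly $\overline R_{r_\X}$, hence concentrated in degree $r_\X$ — that is~(b); and the Hilbert-function symmetry computed above is~(a). For the converse ``(a) and (b) $\Rightarrow$ arithmetically Gorenstein'': by~(b) the socle is concentrated in degree $r_\X$, so $\Soc(\overline R)=\overline R_{r_\X}$ and $\dim_{\FF_q}\Soc(\overline R)=\HF_{\overline R}(r_\X)=\Delta\HF_\X(r_\X)$; by~(a) with $i=r_\X$ (and property~(2) of the Hilbert function giving $\Delta\HF_\X(0)=1$ after noting $\Delta\HF_\X(0)=\HF_\X(0)=1$) we get $\Delta\HF_\X(r_\X)=\Delta\HF_\X(0)=1$, so the socle is one-dimensional and $\overline R$ is Gorenstein, i.e.\ $\X$ is arithmetically Gorenstein. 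I expect the main obstacle to be the careful proof that the Cayley--Bacharach condition is equivalent to the socle being concentrated in top degree; this requires the theory of separators of points in $\X$ and the precise relationship between $\HF_{\mathbb{Y}_i}$ and the degrees in which socle elements of $\overline R$ can occur, and it is where I would lean most heavily on the cited references \cite{DGO}, \cite{Kre1}, \cite{KLR} and on \cite[Sect.~6.3]{KR2}.
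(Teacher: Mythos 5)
Your forward direction is fine (Gorenstein gives a one-dimensional socle, which must be $\overline R_{r_\X}$, and the $\omega_{\overline R}\cong\overline R(\mathrm{shift})$ comparison gives the symmetry), and the implication you need there, namely ``socle concentrated in degree $r_\X$ $\Rightarrow$ Cayley--Bacharach'', is correct: a separator of minimal degree $j<r_\X$ for some $p_i$ has $h\cdot g=\tfrac{h(p_i)}{\ell(p_i)}\,\ell g$ in $R_{j+1}$ for every $h\in R_1$ (both sides agree at all points and evaluation is injective), so its class is a nonzero socle element in degree $j$. The gap is in the converse: your key lemma ``(b) $\Leftrightarrow$ $\Soc(\overline R)$ concentrated in degree $r_\X$'' is false in the direction you use. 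Counterexample: let $\X\subset\PP^3$ (say $q\ge 3$) consist of three points on each of two skew lines, e.g.\ $(1{:}0{:}0{:}0),(1{:}1{:}0{:}0),(1{:}2{:}0{:}0)$ on $\{z=w=0\}$ and $(0{:}0{:}1{:}0),(0{:}0{:}1{:}1),(0{:}0{:}1{:}2)$ on $\{x=y=0\}$, with $\ell=x+z$. Then $\HF_\X:1\,4\,6\,6\dots$, so $r_\X=2$; every $5$-point subset spans $\PP^3$, so no point has a linear separator and $\X$ is Cayley--Bacharach. Yet $xz,yz,xw,yw\in I_\X$ and $x^2=(x+z)x-xz$, $xy=(x+z)y-yz$, so the class of $x$ is a nonzero element of $\Soc(\overline R)_1$. (This is a general phenomenon: by Proposition~\ref{prop:indecomp} any decomposable point set has $\omega_R$ with a minimal generator in degree $\ge 0$, hence socle below the top degree, while it can perfectly well be Cayley--Bacharach.) So ``(b) $\Rightarrow$ socle only in degree $r_\X$'' is not available, and your converse argument, which brings in (a) only at the very end to count $\dim_{\FF_q}\overline R_{r_\X}$, does not go through.

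To repair it you must use (a) and (b) jointly before any conclusion about the socle: e.g.\ show that under symmetry plus Cayley--Bacharach the multiplication pairing $\overline R_j\times\overline R_{r_\X-j}\to\overline R_{r_\X}\cong\FF_q$ is perfect for all $j$, which is what excludes socle elements in degrees $<r_\X$; this interleaving of the two hypotheses is exactly the substance of the Davis--Geramita--Orecchia theorem and is not a formal dimension count. Note also that the paper does not reprove this proposition at all: it quotes it from \cite{DGO}, with the generalizations in \cite{Kre1} and \cite{KLR}, so your attempt is not ``the same route'' as the paper but an independent proof sketch whose converse half is, as it stands, incomplete.
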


The Cayley-Bacharach property can also be formulated using the following notion. 
For $i\in \{1,\dots,n\}$, a polynomial $f\in P$ is called a \textit{separator}
(or sometimes an \textit{indicator function})
of~$p_i$ in~$\X$ if $f(p_i)=1$ and $f(p_j)=0$ for $j\ne i$ (see also~\cite{GKR}
or~\cite[Sect.~6.3]{KR2}).
Here we use the coordinate tuple of~$p_i$ with respect to the affine
space $D_+(L)$ to define $f(p_i)$. More precisely, we use the coordinate system
$(L,y_1,\dots,y_{k-1})$ of~$\PP^{k-1}$ and the homogeneous coordinates 
$p_i = (1: \bar{p}_{i1} : \cdots \bar{p}_{i\, k{-}1})$ with respect to this coordinate system.

Clearly, the separator of a point in~$\X$ is not unique, as it can be 
modified by adding an element of~$I_\X$. The smallest degree of a separator
of~$p_i$ in~$\X$ is called the \textit{separator degree} of~$p_i$ in~$\X$
(or simply the \textit{degree} of~$p_i$ in~$X$, or the \textit{v-number} of~$p_i$)
and denoted by $\sepdeg_{\X}(p_i)$. 

\begin{remark}\label{rem:sepdeg}
For $i=1,\dots,n$, we have $\sepdeg_{\X}(p_i) \le r_\X$.
The set of points~$\X$ has the Cayley-Bacharach property if and only if
$\sepdeg_{\X}(p_i)=r_\X$ for $i=1,\dots,n$
(see~\cite{GKR} or~\cite[Tutorial 88]{KR2}).
\end{remark}

Some basic data of~$\X$ can be computed in polynomial time as follows.

\begin{remark}\label{rem:BuMo}
Given a finite set of $\FF_q$-rational points $\X = \{ p_1,\dots,p_n\}$ in $\PP^{k-1}$ 
through the coordinates of the points $p_1,\dots,p_n$, the following data can be computed
in polynomial time.
\begin{enumerate}
\item[(a)] The Buchberger-M\"oller Algorithm (cf.~\cite{BM} or~\cite[Thms.~2.7 and~3.12]{ABKR})
computes, for a given term ordering~$\sigma$, the reduced $\sigma$-Gr\"obner basis of~$I_\X$
and a set of terms $\mathcal{O}_\sigma(I_\X) = \mathbb{T}^k \setminus \LT_\sigma(I_\X)$
which represents an $\FF_q[\ell]$-basis of~$R$.

\item[(b)] If we use a degree compatible term ordering~$\sigma$ in~(a), the set~$\mathcal{O}_\sigma(I_\X)$ also
yields the Hilbert function $\HF_\X$ and the regularity index~$r_\X$ of~$\X$ in polynomial time.

\item[(c)] A suitably modified version of the Buchberger-M\"oller Algorithm also computes
the separators of~$\X$ and the separator degrees of the points of~$\X$ (cf.~\cite[Cor.~2.8 and Rem.~3.15]{ABKR}).

\end{enumerate}
For detailed complexity estimates, we refer the readers to~\cite{MMM} and~\cite[Sect.~4]{ABKR}.
\end{remark}

\bigskip\bigbreak
%
%

\section{The Canonical Module}

As before, we let $\X = \{p_1,\dots, p_n\}$ be a set of $\FF_q$-rational 
points in~$\PP^{k-1}$. Recall that we let $R=\FF_q[x_1,\dots,x_k]/I_\X$ be the homogeneous 
coordinate ring of~$\X$ and that we assumed that there exists a linear form
$L\in \FF_q[x_1,\dots,x_k]_1$ such that $\X \cap \mathcal{Z}(L) = \emptyset$.

In the following we denote the residue class of~$L$ in~$R_1$ by~$\ell$.
Then~$\ell$ is a non-zerodivisor of~$R$, because it is not contained in any
minimal prime ideal of~$R$. The minimal primes of~$R$ are the ideals defining the
points~$p_i$. They are also the associated prime ideals of~$R$, as~$R$ is reduced.

Moreover, note that~$R$ is a 1-dimensional Cohen-Macaulay ring.
Thus $\FF_q[\ell] \subseteq R$ is a Noether normalization and~$R$
is a finitely generated free $\FF_q[\ell]$-module. The following module
plays a central role in this paper.

\begin{definition}(The Canonical Module)$\mathstrut$\\
Consider the graded $\FF_q[\ell]$-module $\omega_R = \gHom_{\FF_q[\ell]}(R, \FF_q[\ell])(-1)$,
where $\gHom$ denotes the module generated by the homogeneous $\FF_q[\ell]$-linear
maps and $(-1)$ denotes a degree shift by~$-1$.
This $\FF_q[\ell]$-module is turned into an $R$-module by defining
$$
a \cdot \phi(b) \;=\; \phi(ab) \hbox{\quad for\quad} \phi\in \omega_R \hbox{\quad and\quad} a,b\in R.
$$
The resulting $R$-module is called the \textit{canonical module} of~$R$.
\end{definition}

The canonical module of~$R$ is a finitely generated, graded $R$-module.
It is a dualizing module in the category of finitely generated graded $R$-modules. Its Hilbert
function $\HF_{\omega_R}(i) = \dim_{\FF_q}((\omega_R)_i)$ satisfies
$$
\HF_{\omega_R}(i) \;=\; n - \HF_R(-i) \hbox{\quad for all\quad} i\in\ZZ.
$$
The canonical module $\omega_R$ can be computed in polynomial time as follows.

\begin{remark}\label{rem:CompCanMod}
In the above setting, a presentation of the canonical module~$\omega_R$
can be calculated as follows (see~\cite[Sect.~4.5]{KR3}).
From the $\FF_q[\ell]$-basis $\mathcal{O}_\sigma(\bar{I}_\X) = \{t_1,\dots,t_n\}$ of~$R$
we calculate, in polynomial time, the multiplication matrices of~$x_1,\dots,x_k$
in this basis. Their transposed matrices are the multiplication matrices
of the canonical module $\omega_R = \FF_q[\ell]\, t_1^\ast \oplus \cdots\oplus
\FF_q[\ell]\, t_n^\ast$. 

Next we use these matrices to derive a presentation
$\omega_R \cong \bigoplus_{i=1}^n R\, e_i / U$ where~$U$
is generated by elements of the form $x_i e_j - \sum_{\nu=1}^n c_\nu e_\nu$
with $c_\nu \in \FF_q[\ell]$. Lastly, we minimize this presentation
using the graded version of Nakayama's Lemma (see~\cite[Prop.~1.7.15]{KR1}).

Clearly, all these computations can be achieved in
polynomial time. An improved algorithm for computing $\omega_R$
starts with the projections to the \textit{socle monomials} 
instead of $\{t_1^\ast, \dots, t_n^\ast\}$ (see~\cite{BK}).
\end{remark}

Next we also introduce the notions of indecomposable codes and sets of points.

\begin{definition}
Let $C$ be an $[n,k]_q$-code, and let $\X$ be a set of $n$ $\FF_q$-rational points in $\PP^{k-1}$.
\begin{enumerate}
\item[(a)] The code~$C$ is called the \textit{direct sum} of two codes~$C_1$ and~$C_2$
if it has a generator matrix of the form $G = \binom{G_1 \;\;0\;}{\;0\;\; G_2}$
where $G_i$ is a generator matrix for~$C_i$ for $i=1,2$.

\item[(b)] If~$C$ is not the direct sum of two codes, it is called \textit{indecomposable}.

\item[(c)] The set of points $\X$ is called \textit{decomposable} if $\X$ is a disjoint union 
$\X = \X_1 \cup \X_2$ such that there exist linear spaces $L_i$ in~$\PP^{k-1}$
with $\X_i \subseteq L_i$ for $i=1,2$ and with $L_1\cap L_2 = \emptyset$.

\item[(d)] If the set of points~$\X$ is not decomposable, it is called \textit{indecomposable}.
\end{enumerate}
\end{definition}

The indecomposability of a code can be checked using its associated set of points as follows.

\begin{proposition}\label{prop:indecomp}
Let $C$ be a projective $[n,k]_q$-code with a generator matrix~$G$, let~$\X$ be the associated set of points of~$G$, 
and let $R$ be the homogeneous coordinate ring of~$\X$. Then the following conditions are equivalent.
\begin{enumerate}
\item[(a)] The code~$C$ is indecomposable.

\item[(b)] The set of points~$\X$ is indecomposable.

\item[(c)] The canonical module $\omega_R$ is generated in degrees $\le -1$.
\end{enumerate}
\end{proposition}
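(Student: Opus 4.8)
The plan is to prove the chain of equivalences (a) $\Leftrightarrow$ (b) $\Leftrightarrow$ (c), using the generator-matrix description for (a) $\Leftrightarrow$ (b) and the structure of the canonical module over the Noether normalization $\FF_q[\ell]$ for (b) $\Leftrightarrow$ (c).

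\textbf{Step 1: (a) $\Leftrightarrow$ (b).} First I would translate the block-diagonal form of a generator matrix into a geometric statement about $\X$. If $G=\binom{G_1\ \ 0}{\ 0\ \ G_2}$ with $G_i \in \Mat_{k_i,n_i}(\FF_q)$ and $k_1+k_2=k$, then the columns coming from the first block lie in the linear subspace $L_1 = \mathcal Z(x_{k_1+1},\dots,x_k)$ and those from the second block lie in $L_2 = \mathcal Z(x_1,\dots,x_{k_1})$, and $L_1\cap L_2=\emptyset$ since $k_1+k_2=k$. Conversely, if $\X = \X_1\cup\X_2$ with $\X_i\subseteq L_i$ and $L_1\cap L_2=\emptyset$, then choosing coordinates adapted to the decomposition $\PP^{k-1} = L_1 \ast L_2$ (a join of complementary linear spaces, using $\dim L_1 + \dim L_2 = k-2$ forced by disjointness and spanning, since $\rk G = k$) puts $G$, after a left multiplication by a suitable $A\in\GL_k(\FF_q)$ and a column permutation, into block-diagonal form. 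The one subtlety is that the defining property of a \emph{projective} code's associated point set already forces $\X$ to span $\PP^{k-1}$, so the two complementary linear spaces necessarily partition the coordinates; I would spell this out so the equivalence is literally an equivalence of normal forms.

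\textbf{Step 2: (b) $\Leftrightarrow$ (c).} The key input is that $\X$ is decomposable, say $\X = \X_1 \sqcup \X_2$ with $\X_i$ spanning complementary linear spaces, if and only if $R$ decomposes as a graded $\FF_q$-algebra fibre product: more precisely $\overline R \cong \overline{R_1}\times_{\FF_q}\overline{R_2}$ on Artinian reductions, equivalently the graded ring $R$ has a nontrivial idempotent-type splitting in degree $r_\X$ coming from the separators of $\X_1$ and $\X_2$. Then I would compute the canonical module of such a product. Since $\omega_R = \gHom_{\FF_q[\ell]}(R,\FF_q[\ell])(-1)$ and $\gHom$ turns direct sums in the first argument into direct sums, a decomposition $R = R_1 \oplus R_2$ as $\FF_q[\ell]$-modules that is \emph{not} a ring decomposition does not immediately split $\omega_R$; the right statement is about the generation degree. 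Here one uses $\HF_{\omega_R}(i) = n - \HF_R(-i)$ and the fact that $\omega_R$ is generated in degrees $\le -1$ if and only if the minimal generators, which by graded Nakayama correspond to a basis of $\omega_R/\m\,\omega_R$, all sit in degrees $\le -1$; equivalently, $\omega_R$ has no minimal generator in degree $0$. A generator of $\omega_R$ in degree $0$ is a nonzero $\FF_q[\ell]$-linear functional $\varphi : R \to \FF_q[\ell]$ of degree $1$ that is not in $\m\,\omega_R = R_+\cdot\omega_R$; translating through the duality $\omega_R \leftrightarrow R$ and the Cayley–Bacharach/separator dictionary of Prop.~\ref{prop:CharArithGor} and Rem.~\ref{rem:sepdeg}, such a functional exists precisely when some proper nonempty subset $\X'\subsetneq\X$ is cut out from $\X$ by a linear form modulo $I_{\X'}$'s contribution — i.e. when $\X$ contains a point (or a cluster of points) separated in degree strictly less than would be forced by indecomposability, which is exactly the decomposability of $\X$. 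I would phrase this cleanly by recalling the known formula for the degrees of the minimal generators of $\omega_R$ in terms of the "last syzygies" / the shape of the minimal free resolution of $R$ over $P$, and the classical fact that the top-degree generator of $\omega_R$ lives in degree $0$ iff $R$ is a nontrivial direct sum of coordinate rings of point sets in complementary subspaces (this is in~\cite[Sect.~6.3]{KR2} in spirit).

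\textbf{Main obstacle.} The delicate point is the equivalence (b) $\Leftrightarrow$ (c): matching "$\X$ is contained in a union of two disjoint linear subspaces" with "$\omega_R$ has a degree-$0$ minimal generator." The forward direction is a computation with the explicit $\FF_q[\ell]$-basis and multiplication matrices from Rem.~\ref{rem:CompCanMod} (a block-diagonal presentation of $R$ transposes to a block structure on $\omega_R$, producing a generator of maximal degree $0$ in the "extra" block). The reverse direction is the real content: from a degree-$0$ generator $\varphi$ of $\omega_R$ one must \emph{construct} the decomposition of $\X$. I expect to do this by setting $\X_1 = \{p_i : \varphi$ does not vanish on the separator-dual of $p_i\}$ and $\X_2$ its complement, showing $\X_1,\X_2$ are nonempty and that $\langle x_j : x_j$ vanishes on $\X_1\rangle$ and $\langle x_j : x_j$ vanishes on $\X_2\rangle$ define disjoint linear spaces — which amounts to showing the two spans are complementary, forced by $\dim_{\FF_q}(R_1)=k$ and the fact that $\varphi$ being a \emph{minimal} generator in degree $0$ means the associated idempotent of $\overline R$ in the socle-dual picture is genuinely nontrivial and splits the degree-$1$ piece. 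Getting this construction to be clean, rather than a case analysis, is where I would spend the most effort; citing the resolution-theoretic characterization of indecomposable point sets from~\cite{KR2} should make it short.
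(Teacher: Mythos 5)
Your Step 1 ((a)$\Leftrightarrow$(b)) is essentially the paper's argument: pass to a coordinate system adapted to the two disjoint linear spaces (enlarged so that $\dim L_1+\dim L_2=k-2$) and observe that block-diagonal generator matrices correspond exactly to such decompositions; that part is fine. The genuine gap is in Step 2. The paper does not prove (b)$\Leftrightarrow$(c) at all: it invokes Eisenbud--Popescu \cite[Prop.~5.5]{EP}, after noting that $\X$ is non-degenerate because $\rk(G)=k$. You instead attempt a self-contained argument, but what you write does not establish the hard direction, namely producing a decomposition of $\X$ from a degree-$0$ minimal generator of $\omega_R$. The dictionary you assert (``such a functional exists precisely when some proper subset is cut out by a linear form / separated in lower degree'') is not justified, and the explicit recipe $\X_1=\{p_i:\varphi \text{ does not vanish on the separator-dual of } p_i\}$ is wrong already in the simplest example: for two points in $\PP^1$, say $p_1=(1:0)$, $p_2=(0:1)$, one has $r_\X=1$, the separators are $f_1=\bar x_1$, $f_2=\bar x_2$, and with $\ell=\bar x_1+\bar x_2$ the unique (up to scalar) degree-$0$ element of $\omega_R$ is the functional with $\varphi(f_1)=1$, $\varphi(f_2)=-1$; it vanishes on no separator, so your recipe gives $\X_1=\X$, $\X_2=\emptyset$ rather than the actual decomposition $\{p_1\}\sqcup\{p_2\}$. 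The correct picture is that degree-$0$ generators of $\omega_R$ behave like ``locally constant functions modulo constants'' on the pieces of a decomposition, so the partition must be read off from level sets of the values $\varphi(f_i)$ (suitably normalized), and the substantive step --- showing that the linear spans of those pieces are disjoint linear spaces, using non-degeneracy --- is exactly the content of \cite[Prop.~5.5]{EP}. You acknowledge this is where the effort lies, but the proposal does not supply the argument, and the fallback citation you offer (\cite[Sect.~6.3]{KR2} ``in spirit'') does not contain this result.

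So: (a)$\Leftrightarrow$(b) is correct and matches the paper; (b)$\Leftrightarrow$(c) is left with a real hole. To repair it, either cite \cite[Prop.~5.5]{EP} as the paper does (checking its non-degeneracy hypothesis via $\rk(G)=k$), or carry out the level-set construction sketched above and prove disjointness of the resulting spans, which is a nontrivial lemma and not a routine consequence of graded Nakayama or the multiplication-matrix presentation of $\omega_R$.
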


\begin{proof}
First we show that~(a) implies~(b). Suppose that~$\X$ is decomposable. Write
$\X=\X_1 \cup \X_2$ with $\X_i \subseteq L_i$ for $i=1,2$ and linear spaces
$L_i \subseteq \PP^{k-1}$ such that $L_1\cap L_2 = \emptyset$. The latter condition
implies $\dim(L_1) + \dim(L_2) \le k-2$. By enlarging $L_1$ or~$L_2$ if necessary, 
we may assume that we have equality here. Then we choose a coordinate system
$x'_1,\dots,x'_k$ of~$\PP^{k-1}$ such that $x'_1,\dots,x'_d$ is a coordinate system of~$L_1$,
where $\dim(L_1)=d-1$,
and $x'_{d+1},\dots,x'_k$ is a coordinate system of~$L_2$. Here we may assume that
$L_1 = \mathcal{Z}(x'_{d+1},\dots,x'_k)$ and $L_2 = \mathcal{Z}(x'_1, \dots, x'_d)$.
In this coordinate system, the matrix whose columns are the homogeneous coordinate tuples
of the points of~$\X$ is block diagonal. Therefore the code~$C$ is decomposable, in contradiction
to the hypothesis. 

The implication (b)$\Rightarrow$(a) follows analogously. If some generator matrix~$G$
is block diagonal, then the homogeneous coordinate tuples of the points of~$\X$
with respect to the corresponding coordinate system of~$\PP^{k-1}$ show that~$\X$
is decomposable.

The equivalence of~(b) and~(c) was shown in~\cite[Prop.~5.5]{EP}. Notice that the set
of points~$\X$ is non-degenerate, since $\rk(G)=k$.
\end{proof}

\bigskip\bigbreak
%
%

\section{Iso-Dual Codes and Self-Associated Sets of Points}\label{sec4}

Continuing in the above setting, we let $C$ be a projective linear $[n,k]_q$-code
with a generator matrix $G\in \Mat_{k,n}(\FF_q)$.
On the $\FF_q$-vector space $\FF_q^n$, we have the \textit{standard bilinear form}
$\Phi:\; \FF_q^n \times \FF_q^n  \longrightarrow \FF_q$ defined by
$$
\Phi((a_1,\dots,a_n),(b_1,\dots,b_n))= a_1 b_1 + \cdots + a_n b_n.
$$
Since it is non-degenerate, we can form the \textit{dual code}
$$
C^\perp \;=\; \{(c_1,\dots,c_n) \in \FF_q^n \mid a_1c_1 + \cdots + a_n c_n = 0
\hbox{\ \rm for all\ }(a_1,\dots,a_n) \in C\}
$$
This is an $[n,n-k,d^\perp]_q$-code, where the minimal distance $d^\perp$
appears to be difficult to determine.

\begin{definition}
Let $C$ be an $[n,k,d]_q$-code.
\begin{enumerate}
\item[(a)] The code~$C$ is called \textit{iso-dual} if $C\sim C^\perp$.

\item[(b)] The code $C$ is called \textit{self-dual} if $C=C^\perp$.

\item[(c)] The code $C$ is called \textit{weakly self-dual} if $C \subseteq C^\perp$.

\end{enumerate}
\end{definition}

Since the dimension of the dual of an $[n,k,d]_q$-code is $n-k$, it follows that
the length of an iso-dual code satisfies $n=2k$.

Our goal in this section is to reinterpret the dual code of~$C$ and the concept of iso-dual
codes in terms of the associated sets of points. Let~$G$ be a generator matrix
of~$C$ and $\X = \{p_1,\dots,p_n\}$ the associated set of $\FF_q$-rational points
in~$\PP^{k-1}$ defined by the columns of~$G$. The associated sets of points of~$C$
and $C^\perp$ are related using the following construction (see~\cite[Def.~1.1]{EP}).

\begin{definition}(The Gale Transform)$\mathstrut$\\
Let $\X$ be a set of~$n=k+\ell$ $\FF_q$-rational points in~$\PP^{k-1}$, where $k,\ell \ge 1$.
Let $G\in \Mat_{k,n}(\FF_q)$ be a matrix whose columns are given by 
homogeneous coordinate tuples for the points of~$\X$.
A set $\X'$ of~$n$ $\FF_q$-rational points in $\PP^{\ell-1}$ is called
a \textit{Gale transform} of~$\X$ if the homogeneous coordinate tuples of
the points of~$\X'$ form a matrix $G' \in \Mat_{\ell,n}(\FF_q)$
such that $G'\, D\; G^{\rm tr} = 0$ for some invertible diagonal matrix
$D\in \GL_n(\FF_q)$.
\end{definition}

Here the matrix~$D$ expresses the fact that the choice of the homogeneous coordinates of a 
point is determined up to a non-zero scalar. The Gale transform of~$\X$ is
unique only up to a homogeneous linear change of coordinates in~$\PP^{\ell-1}$, since the
definition of~$G'$ merely requires that its rows generate the kernel of $D\, G^{\rm tr}$. 

The set of points~$\X$ is called \textit{self-associated} if~$\X$ is its own Gale transform.
In this case we have $k=\ell$ and $n=2k$. The first mathematician to study self-associated sets
of points was G.\ Castelnuovo~\cite{Cas} in 1889. The general Gale transform was introduced 
by A.B.\ Coble in~\cite{Cob1,Cob2,Cob3}. A very general and extensive discussion of the Gale transform
in given in~\cite{EP}. The following relation between the Gale transform and the dual code
follows immediately from the definitions.

\begin{proposition}\label{prop:IsoDualSelfAssoc}
Let $C$ be a projective linear $[n,k]_q$-code, let $\X$ be an associated set of points of~$C$, let $C^\perp$
be the dual code of~$C$, assume that also $C^\perp$ is projective, and let $\X^\perp$ be an associated set 
of points of~$C^\perp$.
\begin{enumerate}
\item[(a)] The set of points $\X^\perp$ is a Gale transform of~$\X$.

\item[(b)] The code~$C$ is iso-dual if and only if~$\X$ is self-associated.

\end{enumerate}
\end{proposition}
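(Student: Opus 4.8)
The plan is to unwind the definitions of the Gale transform and of the dual code and observe that they encode exactly the same linear-algebraic datum, namely a pair of matrices whose row spaces are orthogonal complements under a diagonal twist of the standard bilinear form. For part~(a), I would start with a generator matrix $G\in\Mat_{k,n}(\FF_q)$ whose columns are homogeneous coordinate tuples of the points of~$\X$, and a generator matrix $G^\perp\in\Mat_{n-k,n}(\FF_q)$ for $C^\perp$. By the definition of the dual code, $C^\perp$ consists of the vectors orthogonal to every row of~$G$ under the standard form~$\Phi$, so the rows of $G^\perp$ span the kernel of $G^{\mathrm{tr}}$ acting on $\FF_q^n$; equivalently $G\,(G^\perp)^{\mathrm{tr}}=0$. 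Comparing with the definition of the Gale transform, which requires $G'\,D\,G^{\mathrm{tr}}=0$ for some invertible diagonal~$D$, we see that taking $D = I_n$ already exhibits $G^\perp$ (hence the point set~$\X^\perp$ it defines, which has $n-k$ coordinates, i.e. lives in $\PP^{(n-k)-1}$) as a Gale transform of~$\X$. The only points needing care here are bookkeeping: that $C^\perp$ being projective guarantees $\X^\perp$ genuinely consists of~$n$ distinct points, and that the diagonal matrix in the definition of the Gale transform is a red herring in this direction since it can be absorbed into the choice of homogeneous coordinate representatives of the points of~$\X$ (or simply set to the identity).

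For part~(b), I would argue both implications through the code/point-set dictionary already established. First suppose~$C$ is iso-dual, i.e. $C\sim C^\perp$. Since iso-duality forces $n=2k$, the point set~$\X^\perp$ of part~(a) lives in $\PP^{k-1}$, and by Proposition~\ref{prop:CEPandPSEP} the linear equivalence $C\sim C^\perp$ translates into a point set equivalence $\X\sim\X^\perp$, i.e. a linear change of coordinates~$\Lambda$ with $\Lambda(\X)=\X^\perp$. By part~(a), $\X^\perp$ is a Gale transform of~$\X$, and since the Gale transform is well-defined up to a homogeneous linear change of coordinates in $\PP^{k-1}$, so is $\Lambda(\X)$; hence $\X$ is equal to its own Gale transform, i.e. self-associated. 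Conversely, if~$\X$ is self-associated, then by definition~$\X$ is (up to a linear change of coordinates) its own Gale transform, which by part~(a) is an associated point set of~$C^\perp$; thus $\X\sim\X^\perp$ in $\PP^{k-1}$, and Proposition~\ref{prop:CEPandPSEP} gives $C\sim C^\perp$, i.e.~$C$ is iso-dual. In this direction one should note that self-associatedness already builds in $k=\ell$ and $n=2k$, so both codes have matching parameters and the translation through Proposition~\ref{prop:CEPandPSEP} is legitimate.

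I expect the statement to be essentially a matter of matching two definitions, so there is no deep obstacle; the one subtle point that deserves an explicit sentence is the role of the diagonal matrix~$D$. In passing from $G\,(G^\perp)^{\mathrm{tr}}=0$ to the Gale-transform condition $G'\,D\,G^{\mathrm{tr}}=0$, one must remember that rescaling the columns of~$G$ (equivalently, choosing different homogeneous representatives of the points of~$\X$) does not change~$\X$ but does change~$G$ by right multiplication with an invertible diagonal matrix, which is exactly what the matrix~$D$ in the definition accommodates; and conversely, given a Gale transform with a nontrivial~$D$, one absorbs~$D$ into the coordinate representatives of~$\X$ to recover the orthogonality with respect to the untwisted standard form~$\Phi$. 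Once this is said, both parts follow immediately from the definitions and from Proposition~\ref{prop:CEPandPSEP}, and one can simply write: the claim follows immediately from the definitions.
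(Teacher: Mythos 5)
Your proposal is correct, and it follows exactly the route the paper intends: the paper gives no written proof, stating that the proposition ``follows immediately from the definitions,'' and your argument is precisely that verification --- identifying the dual-code condition $G^\perp G^{\rm tr}=0$ with the Gale-transform condition (with $D=I_n$, the diagonal and column permutations being absorbable into the choice of coordinate representatives and ordering), and then translating $C\sim C^\perp$ into $\X\sim\X^\perp$ via Proposition~\ref{prop:CEPandPSEP} together with the uniqueness of the Gale transform up to a linear change of coordinates.
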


Using the associated set of points, iso-dual codes can be characterized
as follows.

\begin{theorem}[Characterization of Indecomposable Iso-Dual Codes]$\mathstrut$\label{thm:CharIsoDual}\\
Let $C$ be a projective $[n,k]_q$-code with generator matrix $G\in \Mat_{k,n}(\FF_q)$, and let~$\X$
by the associated set of points in~$\PP^{k-1}$. Then the following conditions
are equivalent.
\begin{enumerate}
\item[(a)] The code $C$ is indecomposable and iso-dual.

\item[(b)] The set of points~$\X$ is arithmetically Gorenstein
and satisfies $\Delta\HF_\X:\; 1\; k{-}1 \; k{-}1\; 1\; 0\; \dots$.

\item[(c)] The canonical module $\omega_R$ is a graded free $R$-module of rank~1
and satisfies $\HF_{\omega_R}(-2)=1$, $\HF_{\omega_R}(-1) = k$, $\HF_{\omega_R}(0)=2k-1$,
and $\HF_{\omega_R}(i)= 2k=n$ for $i\ge 1$.

\end{enumerate}
\end{theorem}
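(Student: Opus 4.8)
The plan is to prove the cycle of implications by first settling (b) $\Leftrightarrow$ (c), which is essentially Hilbert-function bookkeeping plus a standard fact about canonical modules, and then (a) $\Leftrightarrow$ (b), where ``iso-dual'' and ``indecomposable'' get translated into statements about $\X$ and $\omega_R$ via Propositions~\ref{prop:IsoDualSelfAssoc} and~\ref{prop:indecomp}, the genuinely new ingredient being the Gale-transform characterization of self-associated point sets. For (b) $\Leftrightarrow$ (c): since $R$ is a one-dimensional graded Cohen--Macaulay ring, it is Gorenstein if and only if its canonical module is graded free of rank one, i.e.\ $\omega_R\cong R(a)$ for some $a\in\ZZ$ (a standard fact), so the ring-theoretic parts of (b) and (c) coincide; for the numerics I would use the identity $\HF_{\omega_R}(i)=n-\HF_\X(-i)$ throughout. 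From (b), the difference function $\Delta\HF_\X\colon 1\ k{-}1\ k{-}1\ 1\ 0\ \dots$ gives $n=\sum_i\Delta\HF_\X(i)=2k$, $r_\X=3$, and $\HF_\X\colon 1\ k\ 2k{-}1\ 2k\ 2k\ \dots$; substituting into the identity produces exactly the values of $\HF_{\omega_R}$ listed in (c), together with $\HF_{\omega_R}(i)=0$ for $i\le-3$. Conversely, from (c) the identity forces $n=2k$, $\HF_\X(1)=k$, $\HF_\X(2)=2k-1$; writing $\omega_R\cong R(a)$, its bottom degree is $-a$ and $\HF_\X(j)=1$ holds only for $j=0$, so $\HF_{\omega_R}(-2)=1$ pins down $a=2$, whence $\HF_\X(3)=\HF_{\omega_R}(1)=2k$, so $r_\X=3$ and $\Delta\HF_\X\colon 1\ k{-}1\ k{-}1\ 1$. (The case $k=1$ is vacuous, as a set of $n=2k$ distinct points cannot lie in $\PP^{k-1}=\PP^0$.)

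For (a) $\Rightarrow$ (b): suppose $C$ is indecomposable and iso-dual. By Proposition~\ref{prop:IsoDualSelfAssoc}(b), $\X$ is self-associated, hence $n=2k$ and, taking $G'=G$ in the definition of the Gale transform, there is an invertible diagonal matrix $D$ with $G\,D\,G^{\rm tr}=0$; equivalently $C$ is self-dual for the $D$-weighted standard form, i.e.\ $c\,D\,c'^{\rm tr}=0$ for all $c,c'\in C$. A one-line computation then shows that the all-ones vector $\mathbf 1$ is $D$-orthogonal to $C\ast C=\epsilon_2(R_2)$ (coordinatewise product), so $\HF_\X(2)=\dim(C\ast C)\le n-1$, and since $\HF_\X$ is strictly increasing on $\{0,\dots,r_\X\}$ with $\HF_\X(1)=k$ this forces $r_\X\ge3$. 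By Proposition~\ref{prop:indecomp}, $\omega_R$ is generated in degrees $\le-1$. The crucial step is now that, for an indecomposable (hence non-degenerate) self-associated point set, the canonical module is cyclic, $\omega_R\cong R(a)$: I would obtain this from the Gale-transform theory of~\cite{EP}, using that $\X$ equals its own Gale transform and that $\omega_R$ has no generator in non-negative degrees, the latter ruling out exactly the reducible self-associated configurations whose canonical module fails to be cyclic (e.g.\ a disjoint union of two lines each carrying $k$ of the $n=2k$ points). Once $\omega_R\cong R(a)$, comparing bottom degrees gives $a=r_\X-1$, and the identity $\HF_\X(i+r_\X-1)=n-\HF_\X(-i)$ at $i=-1$ yields $\HF_\X(r_\X-2)=n-k=k=\HF_\X(1)$, so $r_\X=3$ by strict monotonicity; hence $\Delta\HF_\X\colon 1\ k{-}1\ k{-}1\ 1$, and $\omega_R$ being free of rank one, $R$ is Gorenstein, i.e.\ $\X$ is arithmetically Gorenstein. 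This is (b).

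For (b) $\Rightarrow$ (a): assume (b). By the part already proved, $\omega_R\cong R(2)$ is generated in degree $-2\le-1$, so $C$ is indecomposable by Proposition~\ref{prop:indecomp}. By Proposition~\ref{prop:IsoDualSelfAssoc}(b) it remains to see that $\X$ is self-associated. Here $n=2k$ and $\HF_\X(2)=2k-1$, so $\X$ fails to impose independent conditions on quadrics by exactly one, and the arithmetic Gorenstein property makes this failure ``balanced'': by the classical self-association criterion going back to Castelnuovo~\cite{Cas} and systematized through the Gale transform in~\cite{EP}, the one-dimensional space of such conditions produces an invertible diagonal matrix $D$ with $G\,D\,G^{\rm tr}=0$, so $\X$ is its own Gale transform; hence $C$ is iso-dual.

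The step I expect to be the main obstacle is exactly this equivalence, for a non-degenerate set of $2k$ points in $\PP^{k-1}$, between being self-associated and being arithmetically Gorenstein with the minimal Hilbert function $1\ k\ 2k{-}1\ 2k$: it is where the apolarity/Gale-transform machinery enters and the only place where indecomposability is genuinely used, since self-association alone does not force the Gorenstein property. Everything else is either an invocation of Propositions~\ref{prop:CEPandPSEP}, \ref{prop:indecomp}, \ref{prop:IsoDualSelfAssoc}, \ref{prop:CharArithGor} or a routine manipulation of the identity $\HF_{\omega_R}(i)=n-\HF_\X(-i)$.
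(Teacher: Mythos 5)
Your proposal is correct and follows essentially the same route as the paper: translate (a) via Propositions~\ref{prop:IsoDualSelfAssoc} and~\ref{prop:indecomp}, settle the crux by the Gale-transform theory of Eisenbud--Popescu, and get (b)$\Leftrightarrow$(c) from the standard fact that the one-dimensional graded Cohen--Macaulay ring $R$ is Gorenstein iff $\omega_R$ is graded free of rank one, plus bookkeeping with $\HF_{\omega_R}(i)=n-\HF_\X(-i)$. The step you flag as the main obstacle --- that for an indecomposable (equivalently, $\omega_R$ generated in degrees $\le -1$) self-associated point set, self-association is equivalent to the arithmetically Gorenstein property --- is exactly what the paper invokes as \cite[Thm.~7.2]{EP}, so your appeal to ``the Gale-transform theory of~\cite{EP}'' is the same move, only not pinned to the precise statement. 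A small bonus of your sketch: in (b)$\Rightarrow$(a) the invertibility of the diagonal matrix $D$ with $G\,D\,G^{\rm tr}=0$, i.e.\ that the unique (up to scalar) linear relation $\sum_i d_i\,f(p_i)=0$ valid for all quadrics has every $d_i\ne 0$, follows directly from the Cayley--Bacharach property in Proposition~\ref{prop:CharArithGor} together with Remark~\ref{rem:sepdeg}, since $d_i=0$ would give $p_i$ a separator of degree $2<r_\X=3$; filled in this way, that direction becomes more self-contained than the paper's one-line citation, while the converse direction still genuinely needs the Eisenbud--Popescu theorem.
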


\begin{proof}
First we show that~(a) is equivalent to~(b). Since~$C$ is iso-dual, we know that $n=2k$.
As the rank of~$G$ is~$k$, the points of~$\X$ span~$\PP^{k-1}$. This implies
$\HF_\X(1) = k+1$. By Proposition~\ref{prop:indecomp}, the condition in~(a)
that~$C$ is indecomposable is equivalent to the condition that $\omega_R$ is generated
in degrees $\le -1$. Hence we can apply~\cite[Thm.~7.2]{EP}. It yields that~(a)
is equivalent to the condition that~$\X$ is arithmetically Gorenstein.
In particular, this forces the Hilbert function of~$\X$ to be symmetric,
and thus we get the claimed function $\Delta \HF_\X$.

The equivalence of~(b) and~(c) is well-known in commutative algebra (see~\cite[Satz 6.14]{HK}
or~\cite[Prop.~2.1.3]{GW}). In our setting, it also follows by combining Prop.~\ref{prop:CharArithGor}
and~\cite[Thm.~3.5]{GKR}. The stated Hilbert function of~$\omega_R$ follows from the above formula.
\end{proof}

Notice that the Hilbert function in part~(c) of this theorem is already fixed by
$\HF_{\omega_R}(-2)=1$, since~$\X$ is non-degenerate and consists of $2k$ points.
In particular, $\omega_R$ is graded free with a basis element in degree -2.

\bigskip\bigbreak
%
%

\section{The Canonical Ideal}
\label{sec5}

Once again we use the above setting. Let $C$ be a projective $[n,k]_q$-code with generator matrix
$G \in \Mat_{k,n}(\FF_q)$, and let $\X = \{p_1, \dots, p_n\}$ be the
associated set of points in~$\PP^{k-1}$. Since~$\X$ is a reduced
scheme, the canonical module~$\omega_R$ of the homogeneous coordinate
ring~$R$ of~$\X$ can be embedded as an ideal in~$R$ (see~\cite{HK}
and~\cite{Kre2}). More precisely, it can be embedded into the part of~$R$
which we can describe very explicitly as follows.

By Assumption~\ref{ass:nzd}, there exists a linear form $L\in P_1$
whose residue class $\ell \in R_1$ is a non-zerodivisor. 
Thus $\X$ is contained in the affine space $D_+(L)$, and using
this we can form well-defined functional values for elements of~$R$.

Recall that a homogeneous element $f_i\in R$ is called a separator
of~$p_i$ in~$\X$ if $f_i(p_i)=1$ and $f_i(p_j)=0$ for $j\ne i$.
As shown in~\cite{GKR}, there exists a separator $f_i \in R_{r_\X}$ for
every $i\in \{1,\dots,n\}$. These separators $f_1,\dots,f_n$ form an 
$\FF_q$-basis of~$R_{r_\X}$. Moreover, for every $j\ge 0$, we have
$$
R_{r_\X+j} \;=\; \FF_q \,\ell^j f_1 \oplus \cdots \oplus \FF_1\, \ell^j f_n
$$
and the multiplication in this part of the ring~$R$ is given by $g\cdot f_i = 
g(p_1)\, \ell^j f_1 + \cdots + g(p_n)\, \ell^j f_n$ for every $g\in R_j$.

Using the identification $\omega_R = \gHom_{\FF_q[\ell]}(R, \FF_q[\ell])(-1)$,
the canonical module of~$R$ can now be embedded as an ideal of~$R$, as the next
result shows (see~\cite[Prop.~1.9]{Kre2}).

\begin{proposition}[The Canonical Ideal of a Set of Points]\label{prop:CanId}$\mathstrut$\\
Consider the map $\Phi:\; \omega_R \longrightarrow R(2r_\X-1)$ which is defined as
follows. Let $j\ge 0$, and let $\phi\in (\omega_R)_{-r_\X+1+j} = 
\gHom_{\FF_q[\ell]}(R,\FF_q[\ell])_{-r_\X+j}$ be given by $\phi(f_i)=c_i \ell^j$
with $c_i\in \FF_q$ for $i=1,\dots,n$. If we let 
$$
\Phi(\phi) \;=\; \ell^j\,(c_1f_1 + \cdots + c_n f_n) \;=\; \phi(f_1)\, f_1 
+ \cdots + \phi(f_n)\, f_n
$$
then~$\Phi$ is an injective homomorphism of graded $R$-modules. 

In particular, 
$\mathfrak{J}_{R/\FF_q[\ell]} = \Im(\Phi)(-2r_\X+1)$ is a homogeneous ideal of~$R$
which is, up to a degree shift, isomorphic to~$\omega_R$. It is called the
{\rm canonical ideal} of~$R$.  
\end{proposition}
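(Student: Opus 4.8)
The plan is to verify directly that the map $\Phi$ is well defined, $R$-linear, graded of the stated degree, and injective, and then to read off the statement about $\mathfrak{J}_{R/\FF_q[\ell]}$ as a formal consequence. First I would check well-definedness: a homogeneous element $\phi \in (\omega_R)_{-r_\X+1+j}$ corresponds, under the identification $\omega_R = \gHom_{\FF_q[\ell]}(R,\FF_q[\ell])(-1)$, to a homogeneous $\FF_q[\ell]$-linear map $R \to \FF_q[\ell]$ of degree $-r_\X+j$; since $R_{r_\X+j} = \bigoplus_i \FF_q\,\ell^j f_i$ and $\phi$ sends each $\ell^j f_i$ into $(\FF_q[\ell])_j = \FF_q\,\ell^j$, writing $\phi(f_i) = c_i\ell^j$ with $c_i \in \FF_q$ is legitimate, and these scalars determine $\phi$ completely on $R_{r_\X+j}$ (and hence on all of $R$ in degrees $\ge r_\X$, which already pins down $\phi$ as an element of a torsion-free $\FF_q[\ell]$-module of rank $n$ — here one should note $\phi$ is also determined in lower degrees once known in degrees $\gg 0$ because multiplication by $\ell$ is injective). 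The element $\Phi(\phi) = \ell^j(c_1 f_1 + \cdots + c_n f_n)$ then lies in $R_{r_\X+j}$, and the degree bookkeeping is: $\phi$ has degree $-r_\X+1+j$ in $\omega_R$, $\Phi(\phi)$ has degree $r_\X+j$ in $R$, so $\Phi$ raises degree by $2r_\X-1$, i.e. $\Phi\colon \omega_R \to R(2r_\X-1)$ is a graded map of degree $0$, as claimed.

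Next I would check $R$-linearity. It suffices to check $\Phi(\ell \cdot \phi) = \ell\cdot\Phi(\phi)$ and $\Phi(g\cdot\phi) = g\cdot\Phi(\phi)$ for $g \in R_1$ a lift of the $y_i$, but it is cleanest to verify it for an arbitrary homogeneous $g \in R_m$. By the $R$-module structure on $\omega_R$, $(g\cdot\phi)(h) = \phi(gh)$; evaluating on $f_i$ and using the explicit multiplication rule $g\cdot f_i = \sum_\nu g(p_\nu)\,\ell^m f_\nu$ valid in the top part of $R$ (here one replaces $f_i$ by $\ell^{m}f_i \in R_{r_\X+m}$ if needed so that the product lands in the range where the separator description holds), one computes $(g\cdot\phi)(f_i) = \phi(g\cdot f_i) = \sum_\nu g(p_\nu)\phi(\ell^m f_\nu) = \sum_\nu g(p_\nu)c_\nu \ell^{j+m}$, so the scalars of $g\cdot\phi$ are $\big(\sum_\nu g(p_\nu)c_\nu\big)$-type — wait, more precisely the $\nu$-th separator-coefficient of $g\cdot\phi$ is $g(p_\nu)c_\nu$. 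Hence
$$
\Phi(g\cdot\phi) = \ell^{j+m}\sum_\nu g(p_\nu)c_\nu f_\nu = g\cdot\Big(\ell^j\sum_\nu c_\nu f_\nu\Big) = g\cdot\Phi(\phi),
$$
where the middle equality is again the multiplication rule $g\cdot(\ell^j f_\nu) = g(p_\nu)\ell^{j+m}f_\nu$ in the separator basis. Additivity and $\FF_q$-linearity of $\Phi$ are immediate, so $\Phi$ is a homomorphism of graded $R$-modules.

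For injectivity, suppose $\Phi(\phi) = 0$ with $\phi$ as above. Since the $f_1,\dots,f_n$ are an $\FF_q$-basis of $R_{r_\X}$ and $\ell$ is a non-zerodivisor, $\ell^j f_1,\dots,\ell^j f_n$ are $\FF_q$-linearly independent in $R_{r_\X+j}$, so $c_1 = \cdots = c_n = 0$, i.e. $\phi = 0$; if $\phi$ is not concentrated in a single degree one argues componentwise. Thus $\Phi$ is injective. Finally, $\Im(\Phi)$ is a graded $R$-submodule of $R(2r_\X-1)$, i.e. a homogeneous ideal of $R$ after the compensating shift; setting $\mathfrak{J}_{R/\FF_q[\ell]} = \Im(\Phi)(-2r_\X+1) \subseteq R$ and using that $\Phi$ is an injective graded $R$-module map gives $\mathfrak{J}_{R/\FF_q[\ell]} \cong \omega_R(2r_\X-1)$, i.e. the canonical ideal is, up to the degree shift $2r_\X-1$, isomorphic to $\omega_R$. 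The main subtlety — the step I would be most careful about — is making the multiplication-rule identity $g\cdot f_i = \sum_\nu g(p_\nu)\,\ell^{\deg g}f_\nu$ do its job in the $R$-linearity computation: that formula is stated only in the top part $\bigoplus_{j\ge 0}R_{r_\X+j}$ of the ring, so one must consistently push everything into that range (multiplying $f_i$ by an appropriate power of $\ell$) and then use injectivity of multiplication by $\ell$ to descend, or equivalently phrase the whole argument in terms of the $\FF_q[\ell]$-module $\gHom_{\FF_q[\ell]}(R,\FF_q[\ell])$ where the separator coordinates $(c_1,\dots,c_n) \in \FF_q[\ell]^n$ are globally defined. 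Everything else is routine degree and linearity bookkeeping.
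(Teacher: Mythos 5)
Your proof is correct. The paper itself offers no proof of this proposition --- it simply cites \cite[Prop.~1.9]{Kre2} --- and your direct verification (well-definedness via the separator coordinates $c_1,\dots,c_n$, gradedness by degree bookkeeping, $R$-linearity via the multiplication rule in degrees $\ge r_\X$, and injectivity from the linear independence of $\ell^j f_1,\dots,\ell^j f_n$ together with the observation that a homogeneous $\FF_q[\ell]$-linear map $R \to \FF_q[\ell]$ is determined by its values on $R_{r_\X}$ because $\ell$ is a non-zerodivisor) is exactly the standard argument one would extract from that reference. One remark: the multiplication formula as printed in the paper, $g\cdot f_i = g(p_1)\,\ell^j f_1 + \cdots + g(p_n)\,\ell^j f_n$, is garbled; the correct rule is $g\cdot f_i = g(p_i)\,\ell^j f_i$ for $g\in R_j$, equivalently $g\cdot(c_1f_1+\cdots+c_nf_n)=\ell^j\bigl(c_1g(p_1)f_1+\cdots+c_ng(p_n)f_n\bigr)$, and your mid-proof self-correction (the $\nu$-th separator coefficient of $g\cdot\phi$ is $g(p_\nu)c_\nu$) lands on this correct version, so the final display of your linearity computation, and hence the whole argument, is sound.
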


To simplify the notation, we write $\JJ_R$ for the canonical ideal of~$\X$ if~$\ell$ 
is clear from the context. From the description of the Hilbert function of~$\omega_R$, 
we immediately get the Hilbert function of $\mathfrak{J}_R$:
$$
\HF_{\JJ_R}(i) = 0 \hbox{\quad\rm for\;}i<r_\X \hbox{\quad\rm and\quad}
\HF_{\JJ_R}(r_\X+j) = n - \HF_R(r_\X-1-j) \hbox{\quad\rm for\;} j\ge 0.
$$
The canonical ideal of~$\X$ depends on the choice of~$\ell \in R_1$, of course.
Moreover, the canonical module can also be embedded into higher degree components of~$R$.
The next proposition describes these non-uniquenesses explicitly. A similar result
was shown in~\cite[Thm.~3.4]{Boi}.

\begin{proposition}
Let $\X$ be a set of points in $\PP^{k-1}$ as above. In particular, 
let $\ell\in R_1$ be a non-zerodivisor of~$R$.
\begin{enumerate}
\item[(a)] Given a further non-zerodivisor $\ell' \in R_1$ and the corresponding
embedding $\Phi'\!:\, \omega_R \longrightarrow R(2r_\X-1)$, we let
$\mathfrak{J}_{R/\FF_q[\ell']} = \Im(\Phi')(-2r_\X+1)$ be the canonical ideal of~$R$
with respect to~$\ell'$.

Then an element $g = \ell^i\, (c_1 f_1 + \cdots + c_n f_n) \in R_{r_\X+i}$ with 
$c_1,\dots,c_n\in K$ satisfies $g \in \mathfrak{J}_{R/\FF_q[\ell']}$
if and only if 
$$
\ell^i\, (\ell'(p_1)^{-2r_\X+1} c_1 f_1 + \cdots + \ell'(p_n)^{-2r_\X+1}
c_n f_n) \;\in\; \mathfrak{J}_{R/\FF_q[\ell]}.
$$
In other words, we have $\ell^{2r_\X-1} \mathfrak{J}_{R/\FF_q[\ell']} = 
(\ell')^{2r_\X-1} \mathfrak{J}_{R/\FF_q[\ell]}$.

\item[(b)] For every $i\ge 0$, the map
$$
\Phi^{(i)}:\; \omega_R \longrightarrow R(2r_\X-1-i)
$$
given by $\Phi^{(i)}(\phi) = \ell^i \Phi(\phi)$ for $\phi \in \omega_R$
is an injective homomorphism of graded $R$-modules and its image satisfies
$\Im(\Phi^{(i)})(-2r_\X+1+i) = \ell^i \, \mathfrak{J}_{R/\FF_q[\ell]}$.

In other words, for every degree $d = r_\X+i \ge r_\X$, the ideal $\ell^i \, \mathfrak{J}_{R/\FF_q[\ell]}$ 
is a homogeneous ideal of~$R$ which is isomorphic to the canonical ideal 
and which starts in degree~$d$.
\end{enumerate}
\end{proposition}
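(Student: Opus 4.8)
I would prove the two parts in turn, beginning with the short part~(b). By Proposition~\ref{prop:CanId}, $\Phi$ is an injective homomorphism of graded $R$-modules, and since $\ell$, hence $\ell^i$, is a non-zerodivisor of~$R$, multiplication by $\ell^i$ is an injective endomorphism of the graded module~$R$. Composing, $\Phi^{(i)}=\ell^i\cdot\Phi$ is an injective homomorphism of graded $R$-modules into a suitable shift of~$R$, and $\Im(\Phi^{(i)})=\ell^i\,\Im(\Phi)$ as a subset of~$R$; twisting by the corresponding amount gives $\ell^i\,\mathfrak{J}_{R/\FF_q[\ell]}$. This set is a homogeneous ideal of~$R$ (it is stable under multiplication by~$R$), multiplication by~$\ell^i$ identifies it with the canonical ideal $\mathfrak{J}_{R/\FF_q[\ell]}$ up to a degree shift, and since $\mathfrak{J}_{R/\FF_q[\ell]}$ starts in degree~$r_\X$ by its Hilbert function, $\ell^i\,\mathfrak{J}_{R/\FF_q[\ell]}$ starts in degree $d=r_\X+i$.

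For part~(a) the crux is a concrete description of the homogeneous components of the canonical ideal in terms of the evaluation codes $C_m=\epsilon_m(R_m)$ of~$\X$. I would first prove that, for every $i\ge 0$,
$$
\mathfrak{J}_{R/\FF_q[\ell]}\cap R_{r_\X+i}\;=\;\{\,\ell^i(c_1f_1+\cdots+c_nf_n)\;\mid\;(c_1,\dots,c_n)\in C_{r_\X-1-i}^{\perp}\,\},
$$
with $\perp$ for the standard bilinear form on~$\FF_q^n$ and $C_m=0$ for $m<0$. For the inclusion ``$\subseteq$'' one takes $\phi\in(\omega_R)_{-r_\X+1+i}$ with $\phi(f_\nu)=c_\nu\ell^i$ and evaluates $\phi$ on $\ell^{i+1}h=\sum_\nu h(p_\nu)f_\nu\in R_{r_\X}$ for $h\in R_{r_\X-1-i}$: this equals $\sum_\nu h(p_\nu)c_\nu\,\ell^i$ on one side and $\ell^{i+1}\phi(h)=0$ on the other, since $\phi(h)$ lies in $(\FF_q[\ell])_{-1}=0$; hence $(c_\nu)_\nu\perp\epsilon_{r_\X-1-i}(R_{r_\X-1-i})$. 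Equality then follows from the dimension count $\dim_{\FF_q}(\mathfrak{J}_{R/\FF_q[\ell]})_{r_\X+i}=n-\HF_R(r_\X-1-i)=\dim_{\FF_q}C_{r_\X-1-i}^{\perp}$ recorded after Proposition~\ref{prop:CanId}.

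The second step is to apply this identity verbatim with~$\ell'$ in place of~$\ell$ --- which is legitimate since $\ell'$ is a degree-one non-zerodivisor, so Assumption~\ref{ass:nzd} holds for it and $r_\X$ is unchanged --- using the $\ell'$-separators $f'_\nu\in R_{r_\X}$ and the evaluation codes $C'_m$ for the normalization $\ell'(p_\nu)=1$. The dictionary between the two settings is $f'_\nu=\ell'(p_\nu)^{r_\X}f_\nu$, hence $(\ell')^i f'_\nu=\ell'(p_\nu)^{r_\X+i}\ell^i f_\nu$, and $C'_m=\{(\ell'(p_\nu)^{-m}v_\nu)_\nu\mid(v_\nu)_\nu\in C_m\}$; all three follow directly from the evaluation rules for $R_{\ge r_\X}$ and from the two normalizations. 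Writing a given $g\in R_{r_\X+i}$ as $g=\ell^i\sum_\nu c_\nu f_\nu=(\ell')^i\sum_\nu c'_\nu f'_\nu$ forces $c'_\nu=c_\nu\,\ell'(p_\nu)^{-(r_\X+i)}$, and the membership $g\in\mathfrak{J}_{R/\FF_q[\ell']}$ then reads, by the identity for~$\ell'$, as $(c'_\nu)_\nu\perp C'_{r_\X-1-i}$, which unwinds to $(c_\nu\,\ell'(p_\nu)^{-(2r_\X-1)})_\nu\perp C_{r_\X-1-i}$; by the identity for~$\ell$ this is precisely the condition $\ell^i\sum_\nu\ell'(p_\nu)^{-2r_\X+1}c_\nu f_\nu\in\mathfrak{J}_{R/\FF_q[\ell]}$, which is the asserted equivalence. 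Finally, the reformulation $\ell^{2r_\X-1}\mathfrak{J}_{R/\FF_q[\ell']}=(\ell')^{2r_\X-1}\mathfrak{J}_{R/\FF_q[\ell]}$ follows by multiplying through: for $g$ and $\hat g=\ell^i\sum_\nu\ell'(p_\nu)^{-2r_\X+1}c_\nu f_\nu$ as above, one has $(\ell')^{2r_\X-1}\hat g=\ell^{2r_\X-1}g$ using $(\ell')^{2r_\X-1}f_\nu=\ell'(p_\nu)^{2r_\X-1}\ell^{2r_\X-1}f_\nu$, and the two inclusions follow (symmetrically for the reverse one).

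The step I expect to be the main obstacle is not any single deduction but the scalar bookkeeping in the third paragraph: keeping straight which powers of the scalars $\ell'(p_\nu)$ --- drawn from the exponents $r_\X$, $i$, and $2r_\X-1$ --- intervene when passing between the bases $\{f_\nu\}$ and $\{f'_\nu\}$, between the codes $C_m$ and $C'_m$, and between the two canonical ideals. A secondary point of care is the degenerate range $i\ge r_\X$ of the auxiliary identity, where $r_\X-1-i<0$, so $C_{r_\X-1-i}=0$ and $C_{r_\X-1-i}^{\perp}=\FF_q^n$, and the identity reduces to $\mathfrak{J}_{R/\FF_q[\ell]}\cap R_{r_\X+i}=R_{r_\X+i}$.
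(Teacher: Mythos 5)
Your proof is correct, and it is genuinely more self-contained than the paper's: the paper disposes of part~(a) by citing \cite[Prop.~1.8]{Kre2} and of part~(b) with the remark that it follows at once from the explicit description of $R_{\ge r_\X}$, whereas you reprove (a) from scratch. Your argument for (b) coincides with the paper's intended one ($\ell^i$ is a non-zerodivisor, so composing with $\Phi$ preserves injectivity and shifts the initial degree to $r_\X+i$). For (a), your key lemma — that $(\JJ_{R/\FF_q[\ell]})_{r_\X+i}$ consists exactly of the elements $\ell^i(c_1f_1+\cdots+c_nf_n)$ with $(c_1,\dots,c_n)$ in the dual code $C_{r_\X-1-i}^{\perp}$, proved by pairing $\phi$ against $\ell^{i+1}h=\sum_\nu h(p_\nu)f_\nu$ and then matching dimensions via $\HF_{\JJ_R}(r_\X+i)=n-\HF_R(r_\X-1-i)$ — is essentially the content of the cited result (its top-degree case $i=r_\X-1$ is exactly the paper's later use of \cite[Cor.~1.11]{Kre2}), and your scalar bookkeeping relating $f'_\nu=\ell'(p_\nu)^{r_\X}f_\nu$, $C'_m=\{(\ell'(p_\nu)^{-m}v_\nu)_\nu\}$, and $c'_\nu=\ell'(p_\nu)^{-(r_\X+i)}c_\nu$ checks out, as does the identity $(\ell')^{2r_\X-1}\hat g=\ell^{2r_\X-1}g$ yielding $\ell^{2r_\X-1}\JJ_{R/\FF_q[\ell']}=(\ell')^{2r_\X-1}\JJ_{R/\FF_q[\ell]}$. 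Two small points worth making explicit if you write this up: a degree-one non-zerodivisor is precisely a linear form vanishing at no point of $\X$ (since the associated primes of $R$ are the vanishing ideals of the $p_i$), which is what licenses running the whole construction with $\ell'$; and the paper's displayed multiplication rule for $g\cdot f_i$ should be read as $g\cdot f_i=g(p_i)\,\ell^j f_i$, which is the form you correctly use. What your route buys is a transparent, code-theoretic description of every graded piece of the canonical ideal (not just the top one), which also makes the degenerate range $i\ge r_\X$ and the independence of $r_\X$ from the choice of $\ell$ visible at a glance.
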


\begin{proof}
Claim~(a) was shown in~\cite[Prop.~1.8]{Kre2}. Claim~(b) is an immediate consequence of
the description of the homogeneous components and the multiplication of~$R$ in degrees
$\ge r_\X$ given above.
\end{proof}

Next we show how the canonical ideal of~$\X$ can be calculated in polynomial time.

\begin{proposition}[Computing the Canonical Ideal]\label{prop:CompCanId}$\mathstrut$\\
Let $R=P/I_\X$ be the homogeneous coordinate ring of a finite set of $n$ $\FF_q$-rational points~$\X$
in~$\PP^{k-1}$, and let $\ell\in R_1$ be the residue class of a linear form $L\in P_1$ which does not vanish at any
point of~$\X$.
\begin{enumerate}
\item[(a)] For $\phi\in \omega_R$ and $f\in R$, the element $\phi(f) \in \FF_q[\ell]$ can be computed
in polynomial time as follows.
\begin{enumerate}
\item[(1)] Extend $\{L\}$ to an $\FF_q$-basis $\{ L, y_1,\dots,y_{k-1} \}$ of 
$P_1$.

\item[(2)] Let $\bar{I}_\X$ be the residue class ideal of~$I_\X$ in
$\overline{P} = P/\langle L\rangle \cong \FF_q[y_1,\dots,y_{k-1}]$. Choose a term ordering~$\sigma$
on~$\overline{P}$ and compute $\mathcal{O}_\sigma(\bar{I}_\X) = \mathbb{T}^{k-1} \setminus
\LT_\sigma(\bar{I}_\X) = \{ t_1, \dots, t_n\}$.

\item[(3)] Using the $\FF_q[\ell]$-basis $\{t_1, \dots, t_n\}$ of~$R$, write $f= a_1 t_1 + \cdots 
+ a_n t_n$ with $a_i \in \FF_q[\ell]$.

\item[(4)] Using the $\FF_q[\ell]$-basis $\{ t_1^\ast, \dots, t_n^\ast \}$ of~$\omega_R$, write
$\phi = b_1 t_1^\ast + \cdots + b_n t_n^\ast$ with $b_j \in \FF_q[\ell]$.

\item[(5)]  Return $\phi(f) = a_1 b_1 + \cdots + a_n b_n$.
\end{enumerate}

\item[(b)] Find the separators $f_1,\dots,f_n \in R_{r_\X}$ as in Remark~\ref{rem:BuMo}.
Using the system of generators $\{\phi_1,\dots,\phi_n\}$ of~$\omega_R$
given by $\phi_i = t_i^\ast$ for $i=1,\dots,n$
(or a minimal system of generators computed as in Remark~~\ref{rem:CompCanMod}),
calculate $g_i = \phi_i(f_1) f_1 + \cdots + \phi_i(f_n) f_n$ for $i=1,\dots,n$.

Then $\JJ_{R/\FF_q[\ell]}$ is generated by $\{g_1,\dots,g_n\}$. The computation of this system
of generators is achieved in polynomial time.
\end{enumerate}
\end{proposition}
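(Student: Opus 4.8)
The plan is to handle parts~(a) and~(b) in turn, deriving correctness from the earlier results and the polynomial-time bounds from the cited subroutines. For part~(a) the first thing I would record is the structural fact underlying the whole procedure: since $R$ is $1$-dimensional Cohen--Macaulay and $\FF_q[\ell]\subseteq R$ is a Noether normalization, $R$ is a free $\FF_q[\ell]$-module, and the set $\mathcal{O}_\sigma(\bar I_\X)=\{t_1,\dots,t_n\}$ produced in Step~(2) is an $\FF_q[\ell]$-basis of~$R$: it reduces modulo~$\ell$ to an $\FF_q$-basis of the Artinian reduction~$\overline R$, and graded Nakayama then promotes it to a free basis, exactly as in Remark~\ref{rem:CompCanMod}. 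Dualizing, $\gHom_{\FF_q[\ell]}(R,\FF_q[\ell])$ is free with the dual basis $\{t_1^\ast,\dots,t_n^\ast\}$ characterized by $t_j^\ast(t_i)=\delta_{ij}$, and the degree shift in $\omega_R=\gHom_{\FF_q[\ell]}(R,\FF_q[\ell])(-1)$ affects only the grading, not the module or the evaluation map. Correctness of Steps~(3)--(5) is then just $\FF_q[\ell]$-bilinearity: with $f=\sum_i a_i t_i$ and $\phi=\sum_j b_j t_j^\ast$ one has $\phi(f)=\sum_i a_i\,\phi(t_i)=\sum_i a_i b_i$. For the complexity I would invoke Remark~\ref{rem:BuMo} for the polynomial-time cost of Buchberger--M\"oller and of producing the $\mathcal{O}_\sigma$-set, observe that Step~(3) is a normal-form reduction and Step~(4) merely reads off the coordinates of~$\phi$ in the precomputed basis of~$\omega_R$ from Remark~\ref{rem:CompCanMod}, and note that the final $\FF_q[\ell]$-dot product is cheap; every intermediate object has bit-size polynomial in $n,k,\log q$.

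For part~(b) the point is that $\Phi$ is $R$-linear, so it is determined by its values on any $R$-module generating set of~$\omega_R$. By Remark~\ref{rem:CompCanMod}, $\phi_1=t_1^\ast,\dots,\phi_n=t_n^\ast$ form an $\FF_q[\ell]$-basis --- in particular a homogeneous system of $R$-module generators --- of~$\omega_R$; and by Proposition~\ref{prop:CanId} the map $\Phi\colon\omega_R\to R(2r_\X-1)$, $\phi\mapsto\phi(f_1)f_1+\cdots+\phi(f_n)f_n$, is an injective graded $R$-module homomorphism whose image, after the degree relabelling, is the canonical ideal $\JJ_{R/\FF_q[\ell]}$. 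Since $\Phi$ commutes with multiplication by elements of~$R$ and $\omega_R=R\phi_1+\cdots+R\phi_n$, the ideal $\JJ_{R/\FF_q[\ell]}=\Im(\Phi)$ is generated by $\Phi(\phi_1),\dots,\Phi(\phi_n)$, which are exactly the $g_i=\phi_i(f_1)f_1+\cdots+\phi_i(f_n)f_n$ of the statement (using a minimal generating set of~$\omega_R$ instead would shorten the list but is not needed). For the complexity, the separators $f_1,\dots,f_n\in R_{r_\X}$ are computed in polynomial time by the modified Buchberger--M\"oller algorithm of Remark~\ref{rem:BuMo}(c), each scalar $\phi_i(f_j)\in\FF_q[\ell]$ is computed in polynomial time by part~(a), and assembling the $n$ generators $g_i$ --- each a short $\FF_q[\ell]$-linear combination of the $n$ separators, of bounded degree --- is immediate.

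The one place I expect to need care is reconciling the two descriptions of~$\Phi$: the intrinsic, degree-by-degree one of Proposition~\ref{prop:CanId} through the separators~$f_i$, and the ``evaluate the functional through the dual basis $\{t_i^\ast\}$'' one used in part~(a), together with the bookkeeping for the degree shift hidden in $\omega_R=\gHom_{\FF_q[\ell]}(R,\FF_q[\ell])(-1)$. Once those are lined up, the rest is a routine combination of Proposition~\ref{prop:CanId}, Remarks~\ref{rem:BuMo} and~\ref{rem:CompCanMod}, and the standard polynomial-time estimates for Buchberger--M\"oller-type computations cited there.
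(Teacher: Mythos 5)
Your proposal is correct and follows essentially the same route as the paper: part (a) rests on the fact that $\{t_1,\dots,t_n\}$ is an $\FF_q[\ell]$-basis of $R$ (the paper cites \cite[Thm.~4.3.22]{KR2}, you rederive it from the Cohen--Macaulay/Noether-normalization structure), with $f$ expanded via normal forms, $\{t_i^\ast\}$ the dual basis of $\omega_R$, and $\phi(f)=\sum_i a_ib_i$ by $t_i^\ast(t_j)=\delta_{ij}$; part (b) is exactly the paper's ``consequence of (a) and Proposition~\ref{prop:CanId}'', which you simply spell out by noting that the $R$-linear injection $\Phi$ carries the generating set $\{\phi_i\}$ of $\omega_R$ to generators $g_i$ of its image, the canonical ideal. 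The extra detail you supply (Nakayama argument, explicit generator-to-generator reasoning, complexity bookkeeping) is sound and merely expands what the paper leaves implicit.
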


\begin{proof}
First we prove (a).
The set $\{t_1,\dots,t_n\}$ is an $\FF_q[\ell]$-basis of~$R$ by~\cite[Thm.~4.3.22]{KR2}.
To write~$f$ in terms of this basis, it suffices to compute the normal form of~$f$ with respect
to~$I_\X$ and the extension~$\overline{\sigma}$ of~$\sigma$ (see~\cite[Sect.~4.3]{KR2}).
Then $\{t_1^\ast,\dots, t_n^\ast\}$ is an $\FF_q[\ell]$-basis of $\omega_R = \gHom_{\FF_q[\ell]}
(R,\FF_q[\ell])(-1)$ and 
the claim follows from $t_i^\ast(t_j) = \delta_{ij}$ for $i,j=1,\dots,n$. 

Claim~(b) is a consequence of~(a) and Proposition~\ref{prop:CanId}.
\end{proof}

To analyze the canonical ideal in greater detail, we need one more ingredient.

\begin{remark}
Since~$\ell\in R_1$ is a non-zerodivisor for~$R$, the Artinian reduction 
$\overline{R} = R/ \langle \ell \rangle$
of~$R$ satisfies $\HF_{\overline{R}}(i) =\HF_R(i) - \HF_R(i-1) \ne 0$ if and only if 
$i \in \{0,\dots,r_\X\}$. The value $\Delta = \HF_{\overline{R}}(r_\X)$ is called the
\textit{last difference} of~$\HF_R$. Then the residue classes $\bar{f}_1,\dots,\bar{f}_n$
of the separators $f_i\in R_{r_\X}$ generate $\overline{R}_{r_\X}$, because $\{f_1,\dots,f_n\}$
is an $\FF_q$-basis of~$R_{r_\X}$ (see also~\cite[Prop.~2.13]{GKR}). 

Without loss of generality, we renumber the points of~$\X$ such that $\{\bar{f_1},\dots,
\bar{f_{\Delta}} \}$ is a $K$-basis of~$\overline{R}_{r_\X}$. For $j=\Delta_\X+1,\dots,n$,
we write $\bar{f}_{\Delta+j} = \beta_{j1} \bar{f}_1 + \cdots + \beta_{j \Delta} \bar{f}_\delta$
with $\beta_{jk}\in \FF_q$ for $j=1,\dots,n-\Delta$. Then we have
$$
f_{\Delta + j} - \beta_{j1} f_1 - \cdots - \beta_{j\Delta} f_\Delta \;=\; \ell\, g_j
$$
for some elements $g_j\in R_{r_\X-1}$, and the set $\{g_1,\dots,g_{n-\Delta}\}$ is an $\FF_q$-basis
of~$R_{r_\X-1}$ (see also~\cite[Lemma 1.2.b]{Kre2}). 
\end{remark}

Now we are ready to describe the canonical ideal of an iso-dual code.

\begin{proposition}[The Canonical Ideal of an Iso-Dual Code]\label{prop:CanIdOfIsoDual}$\mathstrut$\\
Let $C$ be an indecomposable iso-dual projective $[2k,k]_q$-code, let~$G$ be a generator matrix of~$C$, 
and let~$\X$ be the corresponding self-associated set of points in~$\PP^{k-1}$.
For $i=2,\dots,n$, write $\bar{f}_i = \beta_i \bar{f}_1$, where $\bar{f}_i = f_i + \langle\ell\rangle$
and $\beta_i \in \FF_q \setminus \{0\}$.

Then the element $\pi = f_1 + \beta_2 f_2 + \cdots +\beta_n f_n$ of $R_{r_\X}$
generates the canonical ideal $\JJ_{R/\FF_q[\ell]}$ which is a free $R$-module of rank~1.
\end{proposition}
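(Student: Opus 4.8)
The plan is to recall that $\JJ_{R/\FF_q[\ell]}$ is a rank-one graded free $R$-module concentrated in degrees $\ge r_\X$, with $(\JJ_{R/\FF_q[\ell]})_{r_\X}$ one-dimensional, and then to show that this line is spanned by $\pi$; since the free generator of a rank-one graded free module over $R$ necessarily sits in its lowest nonzero degree, this gives the statement.

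First I would collect the structural input. As $C$ is indecomposable and iso-dual, Theorem~\ref{thm:CharIsoDual} shows that $\X$ is arithmetically Gorenstein with $\Delta\HF_\X:\; 1\; k{-}1\; k{-}1\; 1$ (so $r_\X=3$), and that $\omega_R$, hence by Proposition~\ref{prop:CanId} also $\JJ_{R/\FF_q[\ell]}$, is graded free of rank one. Symmetry of the difference function gives $\dim_{\FF_q}(\overline{R}_{r_\X}) = \HF_{\overline{R}}(r_\X) = \HF_{\overline{R}}(0) = 1$ — which is what makes the relations $\bar f_i=\beta_i\bar f_1$ meaningful — and $\HF_R(r_\X-1) = n-\HF_{\overline{R}}(r_\X) = n-1$, so $\ell R_{r_\X-1}$ is a hyperplane in $R_{r_\X}$. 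The Hilbert-function formula following Proposition~\ref{prop:CanId} gives $\HF_{\JJ_R}(i)=0$ for $i<r_\X$ and $\HF_{\JJ_R}(r_\X)=n-\HF_R(r_\X-1)=1$, which establishes the claims made in the first paragraph. It therefore suffices to prove $\pi\in(\JJ_R)_{r_\X}$ and $\pi\neq 0$; the latter is clear since $f_1$ occurs in $\pi$ with coefficient $1$ and $\{f_1,\dots,f_n\}$ is an $\FF_q$-basis of $R_{r_\X}$. (The hypothesis $\beta_i\neq 0$ also belongs here: by Proposition~\ref{prop:CharArithGor} the arithmetically Gorenstein set $\X$ has the Cayley--Bacharach property, so $\sepdeg_\X(p_i)=r_\X$ by Remark~\ref{rem:sepdeg}; if $\bar f_i=0$, i.e.\ $f_i=\ell\,g$ with $g\in R_{r_\X-1}$, then $\ell(p_i)\,g$ is a separator of $p_i$ of degree $r_\X-1$, a contradiction.)

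The core step is to make $(\JJ_R)_{r_\X}$ explicit via Proposition~\ref{prop:CanId}. There the embedding $\Phi$ restricts to an isomorphism from $(\omega_R)_{-r_\X+1}=\gHom_{\FF_q[\ell]}(R,\FF_q[\ell])_{-r_\X}$ onto $(\JJ_R)_{r_\X}$, with $\Phi(\phi)=\phi(f_1)f_1+\cdots+\phi(f_n)f_n$ and each $\phi(f_i)\in\FF_q[\ell]_0=\FF_q$. I would then check that restriction to $R_{r_\X}$ identifies $(\omega_R)_{-r_\X+1}$ with the line $(\ell R_{r_\X-1})^{\perp}\subseteq(R_{r_\X})^{*}$ of $\FF_q$-linear functionals vanishing on $\ell R_{r_\X-1}$: for $\phi$ of degree $-r_\X$ and $s\in R_{r_\X-1}$ one has $\phi(\ell s)=\ell\,\phi(s)\in\ell\cdot\FF_q[\ell]_{-1}=0$, so $\phi|_{R_{r_\X}}$ lies in $(\ell R_{r_\X-1})^{\perp}$; and $\phi\mapsto\phi|_{R_{r_\X}}$ is injective because $R_{r_\X+j}=\ell^j R_{r_\X}$ together with $\FF_q[\ell]$-linearity forces $\phi$ to vanish in all degrees $\ge r_\X$ once it vanishes on $R_{r_\X}$, while it vanishes below degree $r_\X$ for degree reasons. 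Both spaces being one-dimensional, this restriction is an isomorphism, so $(\JJ_R)_{r_\X}=\{\,\psi(f_1)f_1+\cdots+\psi(f_n)f_n : \psi\in(\ell R_{r_\X-1})^{\perp}\,\}$.

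It remains to pin down $(\ell R_{r_\X-1})^{\perp}$. The relations $\bar f_i=\beta_i\bar f_1$ say precisely that $f_i-\beta_i f_1\in\ell R_{r_\X-1}$ for $i=2,\dots,n$; these $n-1$ elements are $\FF_q$-linearly independent because $\{f_1,\dots,f_n\}$ is, hence they form a basis of the $(n-1)$-dimensional space $\ell R_{r_\X-1}$. Consequently every $\psi\in(\ell R_{r_\X-1})^{\perp}$ satisfies $\psi(f_i)=\beta_i\,\psi(f_1)$ for all $i$ (with $\beta_1=1$), and normalizing $\psi(f_1)=1$ yields $\sum_i\psi(f_i)f_i=f_1+\beta_2 f_2+\cdots+\beta_n f_n=\pi$. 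Hence $(\JJ_R)_{r_\X}=\FF_q\,\pi$, and since this is the one-dimensional degree carrying the free generator of $\JJ_{R/\FF_q[\ell]}$, we get $\JJ_{R/\FF_q[\ell]}=R\pi$, free of rank one over $R$. The step I expect to need the most care is the middle one — confirming that restriction to $R_{r_\X}$ really matches $(\omega_R)_{-r_\X+1}$ with $(\ell R_{r_\X-1})^{\perp}$ rather than with the dual of some other hyperplane — as this is where the explicit multiplicative structure of $R$ in degrees $\ge r_\X$ recalled in Section~\ref{sec5}, together with the nonzerodivisor property of $\ell$, is really used; the remainder is linear algebra with the separator basis.
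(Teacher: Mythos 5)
Your proof is correct, and it reaches the conclusion by a slightly different mechanism than the paper. The paper, after quoting Theorem~\ref{thm:CharIsoDual} for freeness of $\omega_R$, constructs the functional $\phi$ with $\phi(f_1)=1$, $\phi(f_i)=\beta_i$ and invokes the criterion of \cite[Lemma~3.4.c]{GKR} --- $\phi\in(\omega_R)_{-2}$ is an $R$-basis of $\omega_R$ if and only if $\phi(f_i)\neq 0$ for all $i$ --- so that $\pi=\Phi(\phi)$ is immediately a basis of $\JJ_{R/\FF_q[\ell]}$; the role of the Cayley--Bacharach property there is precisely to guarantee $\beta_i\neq 0$, which that criterion needs. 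You instead bypass the GKR criterion: you compute $\HF_{\JJ_R}(r_\X)=n-\HF_R(r_\X-1)=1$, identify $(\omega_R)_{-r_\X+1}$ with the line $(\ell R_{r_\X-1})^{\perp}$ by restriction (your verification that restriction lands in that perp and is injective, using $R_{r_\X+j}=\ell^j R_{r_\X}$, is exactly the content that in the paper is hidden in the well-definedness step ``$\phi(f_i-\beta_i f_1)=0$ implies $\phi(R_2)=0$''), observe that the relations $f_i-\beta_i f_1$ span $\ell R_{r_\X-1}$, and conclude $(\JJ_R)_{r_\X}=\FF_q\,\pi$; freeness from Theorem~\ref{thm:CharIsoDual} together with the fact that the generator of a rank-one graded free module over the non-negatively graded $R$ sits in its lowest nonzero degree then finishes the argument. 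What your route buys is self-containedness (no appeal to the external GKR lemma, only Hilbert-function bookkeeping) and the observation that, once freeness is granted, $\beta_i\neq 0$ is not logically needed for the generation claim --- your Cayley--Bacharach argument for $\beta_i\neq 0$ (which matches the paper's) serves only to justify the normalization in the statement; what the paper's route buys is brevity, since the GKR criterion packages your dimension count in one citation.
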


\begin{proof}
By Theorem~\ref{thm:CharIsoDual}, the set of points~$\X$ is arithmetically Gorenstein
and $\omega_R$ is a graded free $R$-module of rank~1. Moreover, we have $r_\X=3$.
An element $\phi \in (\omega_R)_{-2}$
is an $R$-basis of~$\omega_R$ if and only if $\phi(f_i) \ne 0$ for $i=1,\dots,n$ 
(see~\cite[Lemma 3.4.c]{GKR}). Then the corresponding element $\pi = \phi(f_1) f_1 + \cdots 
+ \phi(f_n) f_n$ of~$(\JJ_{R/\FF_q[\ell]})_3$ is an $R$-basis of the graded free $R$-module 
$\JJ_{R/\FF_q[\ell]}$ of rank~1. As~$\X$ has the Cayley-Bacharach property, the degree of each point is~3,
and thus $f_i \notin \ell\, R_2$ for $i=1,\dots,n$. Here we have 
$\dim_{\FF_q}(\overline{R}_3) = \Delta \HF_\X(3)=1$ and we can write $\bar{f_i} = 
\beta_i \bar{f}_1$ with $\beta_i \in \FF_q \setminus \{0\}$ for $i=2,\dots,n$.
Thus, if we let $\phi(f_1)=1$ and $\phi(f_i)=\beta_i$ for $i=2,\dots,n$,
then $\phi(f_i - \beta_i f_1)=0$ implies $\phi(R_2)=0$ and $\phi$ is an $R$-basis of~$\omega_R$.
Consequently, the element $\pi = f_1 + \beta_2 f_2 + \cdots + \beta_n f_n$ is an $R$-basis 
of~$\JJ_{R/\FF_q[\ell]}$. 
\end{proof}

\bigskip\bigbreak
%
%

\section{Doublings of Sets of Points}
\label{sec6}

Continuing to use the notation of the preceding sections, we let $\X = \{p_1,\dots,p_n\}$
be a set of $\FF_q$-rational points in $\PP^{k-1}$ and $R=\FF_q[x_1,\dots,x_k]/I_\X$
its homogeneous coordinate ring. 
The next construction allows us to reduce the PSE problem to
a problem about 0-dimensional graded Gorenstein rings.

\begin{definition}
Let $\X$ be a set of points as above, let $\ell\in R_1$ be a non-zerodivisor,
let $\JJ_{R/\FF_q[\ell]}$ be the canonical ideal of~$R$, and let $i\ge 0$.
Then the graded $\FF_q$-algebra $D^{(i)}_\X = R / (\ell^i\, \JJ_{R/\FF_q[\ell]})$ is called the
$i$-th {\bf doubling} of~$\X$.

In the case $i=0$, we simply write $D_\X$ instead of $D^{(0)}_\X$ and call it the 
{\bf doubling} of~$\X$.
\end{definition}

The key property of the $i$-th doubling of~$\X$ is provided
by the following result.

\begin{proposition}[The Doubling of a Set of Points Is Gorenstein]$\mathstrut$\\
Let $\X$ be a set of points as above, let $\ell\in R_1$ be a non-zerodivisor,
let $\JJ_{R/\FF_q[\ell]}$ be the canonical ideal of~$R$, and let $i\ge 0$.
\begin{enumerate}
\item[(a)] The $i$-th doubling $D^{(i)}_\X = R/(\ell^i\, \JJ_{R/\FF_q[\ell]})$ 
is a 0-dimensional graded Gorenstein ring.

\item[(b)] The Hilbert function of $D^{(i)}_\X$ is given by
\begin{align*}
\HF_{D^{(i)}_\X}&(j) \qquad\qquad   = \; \HF_\X(j) \hbox{\quad\rm for\;} j<r_\X + i \hbox{\quad\rm and\;} \\
\HF_{D^{(i)}_\X}&(r_\X+i+j)  \; = \; \HF_\X(r_\X-1-j) \hbox{\quad \rm for\;} j \ge 0.
\end{align*}

\item[(c)] The (Hilbert) regularity index of $D^{(i)}_\X$ is $2r_\X+i-1$.

\end{enumerate}
\end{proposition}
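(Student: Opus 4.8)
The plan is to deduce all three statements from the structure of the canonical ideal and the known behaviour of $\omega_R$ under the embedding $\Phi^{(i)}$. Recall that $R$ is a 1-dimensional Cohen--Macaulay ring with Noether normalization $\FF_q[\ell]$, and that $\ell^i\,\JJ_{R/\FF_q[\ell]}$ is, up to a degree shift by $2r_\X-1+i$, isomorphic to $\omega_R$ as a graded $R$-module (by the proposition preceding the definition of the $i$-th doubling). The key algebraic fact is that a 0-dimensional graded quotient $R/\mathfrak a$ is Gorenstein precisely when $\mathfrak a$ is, up to shift, a copy of the canonical module $\omega_R$ embedded as an ideal; this is the standard ``doubling'' construction (see \cite{HK} and \cite{Kre2}). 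So for part (a) I would argue as follows: since $\ell$ is a non-zerodivisor on $R$ and on $\JJ_{R/\FF_q[\ell]}$ (the latter being torsion-free over $\FF_q[\ell]$, as it is isomorphic to $\omega_R$ which is a free $\FF_q[\ell]$-module), the ideal $\ell^i\,\JJ_{R/\FF_q[\ell]}$ has the same Hilbert polynomial growth as $\JJ_{R/\FF_q[\ell]}$, namely it agrees with $R$ in all large degrees. Hence $D^{(i)}_\X = R/(\ell^i\,\JJ_{R/\FF_q[\ell]})$ is annihilated by a power of $\ell$ and is therefore a finite-dimensional $\FF_q$-vector space, i.e. 0-dimensional. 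That it is Gorenstein follows because $\omega_{D^{(i)}_\X}$ can be computed from the short exact sequence $0\to \ell^i\JJ_{R/\FF_q[\ell]}\to R\to D^{(i)}_\X\to 0$ together with local duality: the canonical module of the quotient is $\Ext^1_R(D^{(i)}_\X,\omega_R)$, and using $\ell^i\JJ_{R/\FF_q[\ell]}\cong \omega_R(-2r_\X+1-i)$ this Ext group becomes $1$-dimensional in each relevant degree, equivalently the socle of $D^{(i)}_\X$ is $1$-dimensional.

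For part (b), the Hilbert function of $D^{(i)}_\X$ is simply $\HF_R - \HF_{\ell^i\JJ_{R/\FF_q[\ell]}}$. Now $\ell^i\JJ_{R/\FF_q[\ell]}$ starts in degree $r_\X+i$ (by part (b) of the proposition on non-uniqueness of the embedding, $\ell^i\JJ_{R/\FF_q[\ell]}$ is isomorphic to the canonical ideal and starts in degree $d=r_\X+i$), so for $j<r_\X+i$ we get $\HF_{D^{(i)}_\X}(j)=\HF_R(j)=\HF_\X(j)$, which is the first line. For $j\ge 0$, multiplication by $\ell^i$ is injective on $\JJ_{R/\FF_q[\ell]}$, so $\HF_{\ell^i\JJ_{R/\FF_q[\ell]}}(r_\X+i+j) = \HF_{\JJ_{R/\FF_q[\ell]}}(r_\X+j) = n - \HF_R(r_\X-1-j)$, where the last equality is the formula for $\HF_{\JJ_R}$ recorded right after Proposition~\ref{prop:CanId}. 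Since $\HF_R(r_\X+i+j)=n$ for all $j\ge 0$, subtracting gives $\HF_{D^{(i)}_\X}(r_\X+i+j) = n - (n - \HF_R(r_\X-1-j)) = \HF_R(r_\X-1-j) = \HF_\X(r_\X-1-j)$, which is the second line.

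Part (c) is then immediate from the explicit Hilbert function in (b): the regularity index of a 0-dimensional graded algebra is the largest degree in which it is nonzero, i.e. $\min\{m \mid \HF(j)=0 \text{ for } j>m\}$. From the second line, $\HF_{D^{(i)}_\X}(r_\X+i+j) = \HF_\X(r_\X-1-j)$, which is nonzero exactly when $0\le r_\X-1-j$, i.e. $j\le r_\X-1$, the top value being $j=r_\X-1$ where $\HF_\X(0)=1$; and it vanishes for $j\ge r_\X$. So the top nonzero degree is $r_\X+i+(r_\X-1) = 2r_\X+i-1$, giving the claimed regularity index, and incidentally confirming that the socle sits in degree $2r_\X+i-1$, matching the socle degree assertion made in the introduction.

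The main obstacle I anticipate is making the Gorenstein claim in (a) fully rigorous: one must be careful that the ideal-embedding of $\omega_R$ into $R$ behaves well under multiplication by $\ell^i$ (i.e. that $\ell^i\JJ_{R/\FF_q[\ell]}$ is still an honest copy of $\omega_R$ up to shift and not merely a submodule of one), and that the duality/local-cohomology argument identifying $\omega_{D^{(i)}_\X}$ as $1$-dimensional in each degree is correctly set up over a finite, not algebraically closed, field. Here the cited results \cite[Prop.~1.8, 1.9]{Kre2} together with part (b) of the preceding proposition do exactly this bookkeeping, so the argument reduces to invoking them; the Hilbert-function computations in (b) and (c) are then purely combinatorial and routine.
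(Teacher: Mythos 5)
Your proposal is correct and follows essentially the same route as the paper: part~(a) is precisely the standard doubling result that the paper settles by citing \cite[Kor.~6.13]{HK} and \cite[Prop.~3.3.18]{BH} (your Ext/duality sketch is the content of that citation, except that the cokernel $\Ext^1_R(D^{(i)}_\X,\omega_R)$ is a shifted copy of $D^{(i)}_\X$ itself, i.e.\ a cyclic module with 1-dimensional socle, not ``1-dimensional in each degree''), while parts~(b) and~(c) are the same computation the paper performs from the formula for $\HF_{\JJ_R}$ recorded in Section~\ref{sec5} together with injectivity of multiplication by $\ell^i$. No gaps.
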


\begin{proof}
For a proof of~(a), see \cite[Kor.~6.13]{HK} or~\cite[Prop.~3.3.18]{BH}. 
Claim~(b) follows from the Hilbert function of $\JJ_{R/\FF_q[\ell]}$ given in Section~\ref{sec5},
and~(c) follows immediately from~(b).
\end{proof}

For further properties of doublings, see for instance~\cite[Sect.~2.5]{KRS}.
In the case of self-associated sets of points, we obtain the following corollary.

\begin{corollary}[Doubling a Self-Associated Point Set]$\mathstrut$\\
Let $\X$ be a self-associated set of $2k$ $\FF_q$-rational points in $\PP^{k-1}$.
Then the doubling of~$\X$ is an Artinian Gorenstein ring $D_\X$ with Hilbert function
$$
\HF_{D_\X}:\;\; 1 \;\; k \;\; 2k-1 \;\; 2k-1 \;\; k \;\; 1,
$$
with $\dim_{\FF_q}(D_\X)=6k$, and with regularity index $r_{D_\X} = 5$.
\end{corollary}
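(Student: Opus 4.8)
The plan is to read all of the claims off the preceding proposition — that the $i$-th doubling $D^{(i)}_\X$ is a $0$-dimensional graded Gorenstein ring — specialized to $i=0$, once the Hilbert function and regularity index of~$\X$ are known. First I would record the preliminaries. By Proposition~\ref{prop:IsoDualSelfAssoc}, a self-associated set of points~$\X$ is the associated set of points of a projective iso-dual $[2k,k]_q$-code~$C$, so $n=2k$ and $k\ge 2$; moreover $\X$ is non-degenerate, so $\HF_\X(1)=k$. I would also observe that Assumption~\ref{ass:nzd} may be assumed to hold — if no $\FF_q$-rational hyperplane avoids~$\X$, pass to a finite extension $\FF_q\subseteq\FF_{q^f}$, which alters neither $\HF_R$ nor the equivalence class of~$C$ (see the remark after Assumption~\ref{ass:nzd}) — so that a linear non-zerodivisor $\ell\in R_1$, the canonical ideal $\JJ_{R/\FF_q[\ell]}$, and the doubling $D_\X=R/\JJ_{R/\FF_q[\ell]}$ are all defined. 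For the Hilbert function of~$\X$ I would invoke Theorem~\ref{thm:CharIsoDual}, which in the case of interest here (an \emph{indecomposable} iso-dual code) gives in part~(b) that~$\X$ is arithmetically Gorenstein with $\Delta\HF_\X:\ 1\ \;k{-}1\ \;k{-}1\ \;1\ \;0\ \dots$; summing this difference function yields $\HF_\X:\ 1,\,k,\,2k-1,\,2k,\,2k,\dots$, so $\HF_\X(i)=n=2k$ holds first at $i=3$, that is, $r_\X=3$.

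It then remains to substitute $i=0$ into the doubling proposition. Part~(a) gives that $D_\X$ is $0$-dimensional and graded Gorenstein, hence Artinian Gorenstein. Part~(b) gives $\HF_{D_\X}(j)=\HF_\X(j)$ for $j<r_\X=3$ and $\HF_{D_\X}(3+j)=\HF_\X(2-j)$ for $j\ge 0$; this unwinds to $\HF_{D_\X}:\ 1,\,k,\,2k-1,\,2k-1,\,k,\,1,\,0,\dots$, exactly the claimed sequence. Part~(c) gives regularity index $2r_\X-1=5$. Finally $\dim_{\FF_q}(D_\X)=\sum_{j}\HF_{D_\X}(j)=2\cdot 1+2k+2(2k-1)=6k$.

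I do not anticipate a real obstacle, the argument being a specialization of results already in place. The one point requiring care is that the stated Hilbert function of~$\X$ genuinely needs indecomposability: a decomposable self-associated set — for instance two disjoint quadruples of points lying on complementary lines in $\PP^3$ — has $\HF_\X(2)=2k-2$ and hence a strictly shorter doubling, so the restriction to (or standing assumption of) the indecomposable case should be made explicit in the statement or its proof.
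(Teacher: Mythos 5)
Your proposal is correct and follows exactly the route the paper intends (it gives no explicit proof): obtain $r_\X=3$ and $\HF_\X:\,1,\;k,\;2k-1,\;2k$ from Theorem~\ref{thm:CharIsoDual}, then specialize the doubling proposition to $i=0$ to read off Gorensteinness, the Hilbert function, $\dim_{\FF_q}(D_\X)=6k$, and $r_{D_\X}=2r_\X-1=5$. Your caveat is also well taken: self-associatedness alone does not force $\HF_\X(2)=2k-1$ (two self-associated quadruples on skew lines in $\PP^3$ give $\HF_{D_\X}:\,1,\;4,\;6,\;6,\;4,\;1$ of total dimension $22$), so the corollary implicitly assumes the indecomposable, i.e.\ arithmetically Gorenstein, situation of Theorem~\ref{thm:CharIsoDual}.
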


The goal of this section is to reformulate the Point Set Equivalence Problem in several ways.
A linear change of coordinates $\Lambda:\; \PP^{k-1} \longrightarrow
\PP^{k-1}$ induces homomorphisms of several algebraic objects considered above in the
following ways.

\begin{remark}\label{rem:IndMaps}
Assume that a linear change of coordinates $\Lambda:\; \PP^{k-1} \longrightarrow \PP^{k-1}$
is given by a matrix $A = (a_{ij}) \in \GL_k(\FF_q)$. For a set of points
$\X = \{ p_1,\dots,p_n \}$ in~$\PP^{k-1}$, we let $\X' = \{ p'_1,\dots,p'_n \}$
where $p'_i = \Lambda(p_i)$ for $i=1,\dots,n$.
Furthermore, we denote the inverse matrix of~$A$ by $A^{-1} = (a'_{ij})$.
\begin{enumerate}
\item[(a)] The $\FF_q$-algebra isomorphism $\lambda:\; \FF_q[x_1,\dots,x_k] \longrightarrow
\FF_q[x_1,\dots,x_k]$ defined by $\lambda(x_i) = a'_{i1}x_1 + \cdots + a'_{ik} x_k$ 
for $i=1,\dots,k$ satisfies $\lambda(I_\X) = I_{\X'}$.

In particular, we get an induced $\FF_q$-algebra isomorphism $\lambda_R:\; R \longrightarrow
R'$, where $R' = \FF_q[x_1,\dots,x_k]/I_{\X'}$ is the homogeneous coordinate ring of~$\X'$.

\item[(b)] Let $L \in \FF_q[x_1,\dots,x_k]_1$ be a linear form such that $\X \cap \mathcal{Z}(L)
=\emptyset$. Then $L' = \lambda(L)$ satisfies $\X' \cap \mathcal{Z}(L')= \emptyset$.

\item[(c)] Let $\ell\in R_1$ be the residue class of~$L$, and let $\ell'\in R'_1$ be the
residue class of~$L'$, where $R'=\FF_q[x_1,\dots,x_k] / I_{\X'}$. Then the map~$\lambda_R$
induces an isomorphism of graded $R$-modules $\lambda_\omega:\;
\omega_R \longrightarrow \omega_{R'}$ which is given as follows.
For every $\phi \in \omega_R = \gHom_{\FF_q[\ell]}(R,\FF_q[\ell])$, the map
$\lambda_\omega(\phi)$ makes the following diagram commutative.
$$
\begin{matrix}
R & \TTo{\phi} & \FF_q[\ell] \\
{\scriptstyle\lambda_R} \downarrow \;\; && \;\;\downarrow {\scriptstyle\lambda_R} \\
R' & \TTo{\lambda_\omega(\phi)} & \FF_q[\ell']
\end{matrix}
$$

\item[(d)] In this setting, the map $\lambda_\omega$ induces an isomorphism
of graded $R$-modules $\lambda_\JJ:\; \JJ_{R/\FF_q[\ell]} \longrightarrow 
\JJ_{R'/\FF_q[\ell']}$ which is given as follows.
Up to a degree shift, let $\Phi$ and $\Phi'$ be the isomorphisms defined in Proposition~\ref{prop:CanId}.
Then $\lambda_\JJ$ is the $R$-module isomorphism which makes the following diagram
commutative.
$$
\begin{matrix}
\omega_R(-2r_\X+1) & \TTo{\lambda_\omega} & \omega_{R'}(-2r_\X+1) \\
{\scriptstyle\Phi}\downarrow\;  && \downarrow {\scriptstyle\Phi'} \\
\JJ_{R/\FF_q[\ell]} & \TTo{\lambda_\JJ} & \JJ_{R'/\FF_q[\ell']}
\end{matrix}
$$

\item[(e)] For every $i\ge 0$, the map $\lambda_\JJ$ induces an isomorphism
of graded $R$-modules $\lambda_{\JJ}^{(i)}:\; \ell^i\, \JJ_{R/\FF_q[\ell]}
\longrightarrow  (\ell')^i \, \JJ_{R'/\FF_q[\ell']}$.
\end{enumerate}
\end{remark}

By combining these isomorphisms with doublings of~$\X$, we obtain the following
reformulation of the point set equivalence problem.

\begin{theorem}[Characterizations of Point Set Equivalence]\label{thm:EquivDoubling}$\mathstrut$\\
Let $\X,\X'$ be two sets of $\FF_q$-rational points in $\PP^{k-1}$, each consisting of~$n$ points.
Assume that there exist linear forms $L,L'$ such that $\mathcal{Z}(L) \cap \X 
= \mathcal{Z}(L') \cap \X' = \emptyset$.
Let $R,R'$ be the homogeneous coordinate rings of~$\X$ and~$\X'$, respectively,
and let $\ell=L+I_\X \in R_1$ as well as $\ell' = L'+I_{\X'} \in R'_1$
be the residue classes of~$L$ and~$L'$.

Moreover, let $\Lambda:\; \PP^{k-1} \longrightarrow \PP^{k-1}$ be a linear change of
coordinates such that the induced map $\lambda: \FF_q[x_1,\dots,x_k] \longrightarrow
\FF_q[x_1,\dots,x_k]$ satisfies $\lambda(L)=L'$.
\begin{enumerate}
\item[(a)] If $\Lambda(\X) = \X'$ then the map $\Lambda$ induces for every $i\ge 0$ an
isomorphism of graded $\FF_q$-algebras $\lambda_{D,i}:\; D^{(i)}_{\X} \longrightarrow D^{(i)}_{\X'}$
which maps the residue class of a polynomial $f\in \FF_q[x_1,\dots,x_k]$ to the residue
class of $\lambda(f)$.

\item[(b)] The following conditions are equivalent.
   \begin{itemize}
   \item[(1)] $\Lambda(\X)=\X'$
   
   \item[(2)] For every $i\ge 0$, the map $\lambda$ induces an isomorphism of graded $\FF_q$-algebras
   $\lambda_{D,i}:\; D^{(i)}_{\X} \longrightarrow D^{(i)}_{\X'}$.
   
   \item[(3)] The map $\lambda$ induces an $\FF_q$-algebra isomorphism
   $\lambda_D:\; D_{\X} \longrightarrow D_{\X'}$.
   \end{itemize}

\end{enumerate}
\end{theorem}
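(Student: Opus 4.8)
The plan is to establish part~(a) first and then deduce the circle of equivalences in~(b).

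For part~(a), suppose $\Lambda(\X)=\X'$. By Remark~\ref{rem:IndMaps}(a) we already have an $\FF_q$-algebra isomorphism $\lambda_R\colon R \longrightarrow R'$ with $\lambda_R(\ell)=\ell'$, induced by $\lambda$. Since $\lambda(L)=L'$, the map $\lambda_R$ restricts to a compatible isomorphism on the Noether normalizations $\FF_q[\ell]\longrightarrow\FF_q[\ell']$. I would then invoke Remark~\ref{rem:IndMaps}(c)--(e): the isomorphism $\lambda_\omega$ of canonical modules is carried, via the explicit embeddings $\Phi,\Phi'$ of Proposition~\ref{prop:CanId}, to an isomorphism $\lambda_\JJ^{(i)}\colon \ell^i\,\JJ_{R/\FF_q[\ell]} \longrightarrow (\ell')^i\,\JJ_{R'/\FF_q[\ell']}$. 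The key point to verify is that this is not merely an abstract module isomorphism but is literally the restriction of $\lambda_R$ to the ideal $\ell^i\JJ_{R/\FF_q[\ell]}\subseteq R$; this follows because the embeddings $\Phi$, $\Phi'$ are defined using the separators $f_1,\dots,f_n$ and $f'_1,\dots,f'_n$, and $\lambda_R$ sends a separator of $p_i$ in $\X$ to a separator of $p'_i$ in $\X'$ (up to the normalization forced by $\lambda(L)=L'$), so the commuting square in Remark~\ref{rem:IndMaps}(d) is exactly saying $\lambda_R(\JJ_{R/\FF_q[\ell]})=\JJ_{R'/\FF_q[\ell']}$. Once $\lambda_R(\ell^i\JJ_{R/\FF_q[\ell]})=(\ell')^i\JJ_{R'/\FF_q[\ell']}$, the isomorphism $\lambda_R$ descends to the quotients $D^{(i)}_\X = R/(\ell^i\JJ_{R/\FF_q[\ell]}) \longrightarrow R'/((\ell')^i\JJ_{R'/\FF_q[\ell']}) = D^{(i)}_{\X'}$, and by construction it sends the class of $f\in\FF_q[x_1,\dots,x_k]$ to the class of $\lambda(f)$. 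This proves~(a).

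For part~(b), the implication $(1)\Rightarrow(2)$ is exactly~(a), and $(2)\Rightarrow(3)$ is the special case $i=0$. The substance is $(3)\Rightarrow(1)$. Here I would argue that a graded $\FF_q$-algebra isomorphism $\lambda_D\colon D_\X\longrightarrow D_{\X'}$ induced by the linear map $\lambda$ forces $\Lambda(\X)=\X'$. The strategy is to recover the ideals $I_\X$ and $I_{\X'}$ from $D_\X$, $D_{\X'}$ in low degrees: since $D_\X = R/\JJ_{R/\FF_q[\ell]}$ and $\JJ_{R/\FF_q[\ell]}$ starts in degree $r_\X$, the graded components $(D_\X)_j$ agree with $R_j$ for $j<r_\X$, and in particular $(D_\X)_j = P_j/(I_\X)_j$ for all $j\le r_\X-1$. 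Because $\X$ is generated in degrees $\le r_\X$ (the regularity index bounds the degrees of the generators of $I_\X$ up to $r_\X$; more precisely $I_\X$ is generated in degrees $\le r_\X+1$, and one has $(I_\X)_{\le r_\X}$ already cutting out $\X$ set-theoretically after saturation — I should be slightly careful here and may instead argue directly), the linear map $\lambda$ satisfying $\lambda\big((I_\X)_j\big) = (I_{\X'})_j$ for the relevant $j$ already forces $\lambda(I_\X)=I_{\X'}$, hence $\Lambda(\X)=\X'$. The cleanest way to see that $\lambda$ respects the vanishing ideals in the needed degrees is: $\lambda_D$ being a graded isomorphism means $\lambda\big(\widehat J_\X\big)_j = \big(\widehat J_{\X'}\big)_j$ for all $j$, where $\widehat J_\X\subseteq P$ is the preimage of $\JJ_{R/\FF_q[\ell]}$, i.e.\ $\widehat J_\X\cap P_j = (I_\X)_j$ for $j<r_\X$; so $\lambda\big((I_\X)_j\big)=(I_{\X'})_j$ for $j<r_\X$, and since $r_\X-1 \ge \operatorname{reg}$-related bound guarantees these degrees generate $I_\X$ (or, failing that, I pass to a higher doubling $D^{(i)}_\X$ with $i$ large enough via the already-proven direction $(1)\Rightarrow(2)$ run in reverse on a big enough truncation), we conclude $\lambda(I_\X)=I_{\X'}$, i.e.\ $(1)$ holds.

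The main obstacle is the degree bookkeeping in $(3)\Rightarrow(1)$: I must be sure that the truncation of $I_\X$ in degrees $<r_\X$, which is all that survives into $D_\X$, still determines $\X$. This is true because a set of points in $\PP^{k-1}$ with regularity index $r_\X$ has its vanishing ideal generated in degrees $\le r_\X$ (indeed the Castelnuovo--Mumford regularity of $I_\X$ is $r_\X+1$ and the points, being reduced and $0$-dimensional, are cut out scheme-theoretically by $(I_\X)_{\le r_\X}$ after saturation, which is automatic here), so I only need degrees $\le r_\X-1$ to be slightly strengthened — and in fact one can always arrange this by replacing $D_\X$ with $D^{(1)}_\X$, whose defining ideal agrees with $I_\X$ up to degree $r_\X$, and which is covered by the equivalence with $i=1$ already packaged in~(2). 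So the robust argument is: $(3)\Leftrightarrow$ (the $i=0$ instance of (2)) $\Rightarrow$ recover $(I_\X)_j$ for $j\le r_\X-1$; if that is not enough, bootstrap to $i=1$. I would present it by first proving the key lemma that for $i\ge 1$ the preimage $\widehat J^{(i)}_\X$ of $\ell^i\JJ_{R/\FF_q[\ell]}$ satisfies $\widehat J^{(i)}_\X \cap P_j = (I_\X)_j$ for all $j\le r_\X$, which together with $I_\X$ being generated in degrees $\le r_\X+1$... — the honest fix is to note that for the codes of interest $I_\X$ is generated in degrees $\le r_\X$, or to cite~\cite[Sect.~6.3]{KR2} for the scheme-theoretic statement; I expect the referee-facing version to simply say ``since $I_\X$ is $(r_\X)$-regular and $\X$ is reduced $0$-dimensional, $(I_\X)_{\le r_\X}$ generates $I_\X$ up to saturation, hence $\lambda\big((I_\X)_{\le r_\X}\big) = (I_{\X'})_{\le r_\X}$ implies $\lambda(I_\X)=I_{\X'}$'', and this closes $(3)\Rightarrow(1)$.
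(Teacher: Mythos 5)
Your part~(a) and the implications (1)$\Rightarrow$(2)$\Rightarrow$(3) are fine and follow essentially the same route as the paper: the only point needing care is that $\lambda_R$ carries separators of $\X$ to separators of $\X'$, so that the abstract isomorphism $\lambda_\omega$ of Remark~\ref{rem:IndMaps} becomes, via the embeddings of Proposition~\ref{prop:CanId}, literally the restriction of $\lambda_R$ to the canonical ideal; the paper writes out exactly this computation.

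The genuine gap is in (3)$\Rightarrow$(1), and it is a quantitative one. From the hypothesis that $\lambda$ induces an isomorphism $D_\X \to D_{\X'}$ you can only read off $\lambda\bigl((I_\X)_j\bigr) = (I_{\X'})_j$ for $j \le r_\X - 1$, since $\JJ_{R/\FF_q[\ell]}$ starts in degree $r_\X$, and this truncation does not determine $\X$. Both of your proposed repairs fail. First, it is not true in general that $I_\X$ is generated in degrees $\le r_\X$, nor that $(I_\X)_{\le r_\X}$ cuts out $\X$ up to saturation: the correct general bound, which the paper takes from~\cite{GM}, is generation in degrees $\le r_\X + 1$, and this is sharp (for three collinear points in $\PP^2$ one has $r_\X = 2$, a minimal generator in degree $3$, and $(I_\X)_{\le 2}$ only generates the ideal of the line). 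Consequently even your fallback to $D^{(1)}_\X$, which sees degrees $\le r_\X$, is insufficient; the paper uses $i = 2$ precisely because $D^{(2)}_\X$ agrees with $R$ in all degrees $\le r_\X + 1$, which together with the $r_\X+1$ generation bound yields $\lambda(I_\X) = I_{\X'}$. Second, your ``bootstrap'' from the $i=0$ statement to higher doublings is exactly the implication (3)$\Rightarrow$(2), which you never establish: ``running (1)$\Rightarrow$(2) in reverse'' is circular, because (1) is the conclusion being sought. The paper's argument proves (3)$\Rightarrow$(2) directly (using $\lambda(L)=L'$ to pass from the canonical ideal to its multiples $\ell^i\,\JJ_{R/\FF_q[\ell]}$) and only then deduces (1) from the $i=2$ instance; without some such argument, your chain from (3) to (1) does not close.
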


\begin{proof}
To prove~(a), it suffices to show that $\lambda_{\JJ}(g) = \lambda_R(g)$
for every homogeneous element $g\in (\JJ_{R/\FF_q[\ell]})_{r_\X+d}$ with $d\ge 0$.
Then~$\lambda_R$ induces $\lambda_{D,0}$, and since 
$$
D^{(i)}_{\X} = R / (\ell^i\, \JJ_{R/\FF_q[\ell]}) \hbox{\rm \quad as well as \quad}
D^{(i)}_{\X'} = R' / ((\ell')^i\, \JJ_{R'/\FF_q[\ell']})
$$
the fact that $\lambda_R(\ell)=\ell'$ implies that~$\lambda_R$ induces an 
isomorphism $\lambda_{D,i}$ for every $i\ge 0$.

Next we let $f_1,\dots,f_n \in R_{r_\X}$ be the separators of degree~$r_\X$ of~$\X$, and we 
write $g=\ell^d\, (c_1 f_1 + \cdots + c_s f_s)$ with $c_1,\dots,c_s \in \FF_q$.
Then Proposition~\ref{prop:CanId} yields $g=\Phi(\phi)$, where $\phi \in (\omega_R)_{-r_\X+1+d}$
satisfies $\phi(f_j)= c_j \ell^d$ for $j=1,\dots,n$.

According to the diagram in Remark~\ref{rem:IndMaps}.d, we need to show 
$\Phi'(\lambda_\omega(\phi))(g) = \lambda_R(g)$. Notice that
$\Lambda(\X)=\X'$ implies that the elements $f'_j = \lambda_R(f_j)$ with
$j\in \{1,\dots,n\}$ are the separators of~$\X'$. Hence we get 
$\lambda_\omega(\phi)(f'_j) = \lambda_R (\phi(f_j)) = \lambda_R(c_j \ell^d) = c_j (\ell')^d$ 
for $j=1,\dots,n$. Consequently, it follows that
$$
\Phi'(\lambda_\omega(\phi))(g) \;=\; c_1 (\ell')^d f'_1 + \cdots + c_n (\ell')^d f'_n
\;=\; \lambda_R ( \ell^d (c_1 f_1 + \cdots + c_n f_n)) \;=\; \lambda_R(g)
$$
as desired.

Next we prove claim~(b). The implication (1)$\Rightarrow$(2) follows from~(a),
and the implication (2)$\Rightarrow$(3) is trivially true, as $\lambda_D = \lambda_{D,0}$.

Now we show that~(3) implies~(2). Let $i\ge 0$. By assumption, we have 
$\lambda_R(\JJ_{R/\FF_q[\ell]}) = \JJ_{R'/\FF_q[\ell']}$. Since $\lambda_R(\ell)=\ell'$,
it follows that $\lambda_R(\ell^i\, \JJ_{R/\FF_q[\ell]}) = (\ell')^i\, \JJ_{R'/\FF_q[\ell']}$
for every $i\ge 0$. Therefore the map~$\lambda_R$ induces an isomorphism 
$\lambda_{D,i}$ for every $i\ge 0$, as claimed.

Finally, we prove that~(2) implies~(1). We use $i=2$ and note that the initial degree
of $\ell^2\, \JJ_{R/\FF_q[\ell]}$ is $r_\X+2$. Therefore the hypothesis that $\lambda_{D,2}$
is an isomorphism implies that $(\lambda_R)_d:\; R_d \longrightarrow R'_d$ is an
isomorphism of $\FF_q$-vector spaces for every $d\le r_\X+1$. 
Consequently, we have $\lambda((I_\X)_d) = (I_{\X'})_d$ for every $d\le r_\X+1$.
In particular, we see that~$\X$ and~$\X'$ have the same Hilbert function and thus that
$r_{\X'} = r_\X$.
Now we apply the well-known fact that $I_\X$ and $I_{\X'}$ are generated in degrees $\le r_\X+1$
(see for instance~\cite[Prop.~1.1]{GM}) to conclude that $\lambda(I_\X)=I_{\X'}$. This yields $\Lambda(\X)=\X'$,
as was to be shown.
\end{proof}

\bigskip\bigbreak
%
%

\section{The Macaulay Inverse System}
\label{sec7}

The material in this section is valid over an arbitrary  base field~$K$. Later on, we will
be mostly interested in the case of a finite field $K=\FF_q$.
In the following we assume that~$J$ is homogeneous ideal in $P = K[x_1,\dots,x_k]$
such that $S = P/J$ is an Artinian ring. 
To define the Macaulay inverse system of~$S$, we first recall the divided
power algebra. For further details on this ring, we refer to~\cite[App.~A]{IK}.

Let $P = K[x_1,\dots,x_k]$ be equipped with the standard grading, and let 
$$
\DD \;=\; \gHom_K(P,K) \;=\; {\textstyle\bigoplus\limits_{j\ge 0}} \DD_{-j} 
\hbox{\; with\;} \DD_{-j} = \Hom_K(P_j,K)
$$
be its graded dual. Since the terms in $\mathbb{T}^n = \{ x_1^{\alpha_1} \cdots x_k^{\alpha_k}
\mid \alpha_i \ge 0\}$ form an $K$-basis of~$P$, the dual basis, i.e., 
the projections $\{ \pi^{[\beta]} \mid \beta=(\beta_1,\dots,\beta_k)\in \NN^k\}$ form an
$K$-basis of~$\DD$. Here we have $\pi^{[\beta]}(x^\alpha) = \delta_{\alpha \beta}$ for
$x^\alpha = x_1^{\alpha_1}\cdots x_k^{\alpha_k} \in \mathbb{T}^n$, as usual.

Letting $\pi_i= \pi^{[e_i]}$ with the standard basis vector $e_i = (0,\dots,0,1,0,\dots,0)$
for $i=1,\dots,k$, we define a product for power products of these elements via
$$
\pi^{[\beta]} \cdot \pi^{[\gamma]} \;=\; (\pi_1^{\beta_1} \cdots \pi_k^{\beta_k}) \cdot (\pi_1^{\gamma_1}
\cdots \pi_k^{\gamma_k}) \;=\; \tfrac{(\beta_1+\gamma_1)! \cdots (\beta_k+\gamma_k)!}{\beta_1 !
\cdots \beta_k !\, \gamma_1 ! \cdots \gamma_k !}\; \pi^{[\beta+\gamma]}
$$
for $\beta =(\beta_1,\dots,\beta_k)$ and $\gamma = (\gamma_1,\dots,\gamma_k)$ in $\NN^k$.
By extending this rule $K$-linearly, we obtain a $K$-algebra structure on~$\DD$,
and we call the resulting ring the \textit{divided power algebra} in~$k$ indeterminates over~$K$.

Next we introduce a left action $\circ:\, P \times \DD \longrightarrow \DD$ of~$P$ on~$\DD$ 
by extending the rule 
$$
x^\alpha \circ \pi^{[\beta]} = \begin{cases} 
0 & \hbox{if\;} \alpha_i>\beta_i \hbox{\; for some\; }i\in \{1,\dots,k\},\\
\pi^{[\beta-\alpha]} & \hbox{otherwise}
\end{cases}
$$
$K$-linearly. This is sometimes called the action by \textit{contraction}.
The following facts are well-known (see~\cite[App.~A]{IK}).

\begin{proposition}\label{prop:dividedpower}
Let $P = K[x_1,\dots,x_k]$ and $\DD = \gHom_K(P,K)$ as above.
\begin{enumerate}
\item[(a)] The multiplication defined above turns~$\DD$ into a commutative $K$-algebra.

\item[(b)] For every $j\ge 0$, the power products $\pi_1^{\beta_1} \cdots \pi_k^{\beta_k}$
with $\beta_1 + \cdots + \beta_k =j$ form a $K$-basis of~$\DD_{-j}$.

\item[(c)] The left action $\circ:\, P \times \DD \longrightarrow \DD$ turns~$\DD$
into a graded $P$-module.

\item[(d)] For every $j\ge 0$, the left action~$\circ$ induces a non-degenerate
bilinear map $\circ_j:\; P_j \times \DD_{-j} \longrightarrow K$.

\end{enumerate}
\end{proposition}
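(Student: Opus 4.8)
For part (a), I would first reduce to checking that the multiplication rule is well-defined on the basis $\{\pi^{[\beta]}\}$ and then that it is commutative and associative. Commutativity is immediate from the symmetry of the stated formula in $\beta$ and $\gamma$. For associativity, I would compute $(\pi^{[\alpha]}\cdot\pi^{[\beta]})\cdot\pi^{[\gamma]}$ and $\pi^{[\alpha]}\cdot(\pi^{[\beta]}\cdot\pi^{[\gamma]})$ separately: both evaluate, after cancelling factorials componentwise, to $\tfrac{(\alpha_1+\beta_1+\gamma_1)!\cdots(\alpha_k+\beta_k+\gamma_k)!}{\alpha_1!\beta_1!\gamma_1!\cdots\alpha_k!\beta_k!\gamma_k!}\,\pi^{[\alpha+\beta+\gamma]}$. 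Since the rule factors as a product over the $k$ coordinates, it suffices to do this for $k=1$, which is the classical multinomial identity $\binom{a+b+c}{a}\binom{b+c}{b}=\binom{a+b+c}{a,b,c}=\binom{a+b+c}{c}\binom{a+b}{a}$. The unit is $\pi^{[0]}=1$. The one subtlety worth a remark is that the integers $\tfrac{(\beta_i+\gamma_i)!}{\beta_i!\gamma_i!}=\binom{\beta_i+\gamma_i}{\beta_i}$ are genuine integers, hence make sense after reduction modulo~$p$, so the formula is legitimate over~$\FF_q$ even though the factorials themselves may vanish; this is exactly why one uses the divided power algebra rather than the polynomial ring in characteristic~$p$.

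For part (b), the power product $\pi_1^{\beta_1}\cdots\pi_k^{\beta_k}$ is by definition $\pi^{[\beta]}$ up to the scalar $\tfrac{1}{\beta_1!\cdots\beta_k!}\cdot(\beta_1!\cdots\beta_k!)=1$ — more precisely, iterating the multiplication rule starting from the $\pi_i=\pi^{[e_i]}$ gives $\pi_1^{\beta_1}\cdots\pi_k^{\beta_k}=\pi^{[\beta]}$ exactly, because at each step the binomial coefficient $\binom{m+1}{1}=m+1$ telescopes against the ``divided'' normalization built into the definition of $\pi^{[\beta]}$ as the dual basis vector. Hence these power products run over the dual basis of $P_j$ for $|\beta|=j$, which is an $\FF_q$-basis of $\DD_{-j}=\Hom_{\FF_q}(P_j,\FF_q)$. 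For part (c), I would check that $x^\alpha\circ(-)$ is $\FF_q$-linear (immediate), that it respects grading (it sends $\DD_{-j}$ to $\DD_{-j+|\alpha|}$, interpreting $\DD_{<0}$ degrees correctly, or rather $\DD_{-j}$ to $\DD_{-(j-|\alpha|)}$ and to $0$ if the exponent would go negative in some coordinate), and that $(x^\alpha x^{\alpha'})\circ\pi^{[\beta]}=x^\alpha\circ(x^{\alpha'}\circ\pi^{[\beta]})$, which is a trivial case check on whether $\alpha+\alpha'\le\beta$ componentwise. Extending $\FF_q$-bilinearly gives the $P$-module structure.

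For part (d), the pairing $\circ_j$ sends $(f,\psi)\in P_j\times\DD_{-j}$ to the degree-$0$ component of $f\circ\psi$, i.e.\ to the scalar $(f\circ\psi)(1)\in\FF_q$ after identifying $\DD_0=\FF_q$. On the monomial bases we get $x^\alpha\circ_j\pi^{[\beta]}=\delta_{\alpha\beta}$ for $|\alpha|=|\beta|=j$, so the Gram matrix of $\circ_j$ in the dual bases is the identity; hence $\circ_j$ is perfect, in particular non-degenerate on both sides. \textbf{I do not expect any genuine obstacle here} — the whole proposition is a routine unwinding of definitions; the only place demanding a moment's care is the associativity computation in (a) and the observation that the coefficients are integers, which I would dispatch by reducing to $k=1$ and citing the multinomial identity (or simply referring to \cite[App.~A]{IK} as the excerpt already does).
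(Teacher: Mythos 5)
The paper offers no proof of this proposition at all: it is stated as ``well-known'' with a citation to \cite[App.~A]{IK}. A direct verification such as yours is therefore a perfectly legitimate substitute, and your arguments for (a), (c) and (d) are correct: commutativity is symmetry of the formula, associativity reduces componentwise to the integer multinomial identity $\binom{a+b+c}{a}\binom{b+c}{b}=\binom{a+b+c}{c}\binom{a+b}{a}$ (with the correct remark that the coefficients are integers, so the identity may be reduced modulo $p$), the $P$-module axioms are a case check on exponents, and the Gram matrix of $\circ_j$ in the dual monomial bases is the identity, so the pairing is perfect.

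There is, however, a genuine error in your justification of (b). Your claim that iterating the multiplication rule gives $\pi_1^{\beta_1}\cdots\pi_k^{\beta_k}=\pi^{[\beta]}$ ``exactly'' is false: the stated rule gives $\pi_i\cdot\pi^{[m e_i]}=(m+1)\,\pi^{[(m+1)e_i]}$, so the genuine $m$-fold product is $\pi_i^{m}=m!\,\pi^{[m e_i]}$, and more generally the iterated product of the $\pi_i$'s equals $\beta_1!\cdots\beta_k!\,\pi^{[\beta]}$. Nothing ``telescopes against the divided normalization''---the binomial coefficients are produced by the multiplication, not absorbed by the definition of $\pi^{[\beta]}$---and indeed your own associativity computation in (a) already shows that products of divided powers pick up multinomial factors. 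In characteristic $p$ this product vanishes as soon as some $\beta_i\ge p$, so if ``power products'' in (b) were read as actual products in $\DD$, the statement would be false for $j\ge p$; this is exactly the characteristic-$p$ phenomenon you correctly highlight at the end of (a) and then contradict here. The repair is purely notational and is the reading the paper intends: in the displayed definition of the product, the symbol $\pi_1^{\beta_1}\cdots\pi_k^{\beta_k}$ is identified with the divided-power monomial $\pi^{[\beta]}$, i.e.\ it is shorthand for the dual basis vector, not an iterated product. With that reading, (b) is immediate, as in the second half of your own sentence: for $|\beta|=j$ these elements are precisely the basis of $\DD_{-j}=\Hom_{\FF_q}(P_j,\FF_q)$ dual to the monomial basis of $P_j$.
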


Now we extend the duality between~$P$ and~$\DD$ to residue class rings of~$P$
and $P$-submodules of~$\DD$ as follows.

\begin{definition}
Let $j\ge 0$, let $\phi\in \DD_{-j}$, and let~$J$ be a homogeneous ideal in~$P$.
\begin{enumerate}
\item[(a)] The graded ideal $\Ann_P(\phi) = \{f\in P \mid f \circ \phi = 0\}$
of~$P$ is called the \textit{annihilator ideal} (or the \textit{apolar ideal}) of~$\phi$.

\item[(b)] The $P$-submodule $J^\perp = \{\psi \in \DD \mid f \circ \psi = 0 \hbox{\; for all \;}
f\in J \}$ is called the \textit{Macaulay inverse system} of~$J$ (or of $P/J$).

\end{enumerate}
\end{definition}

In general, the Macaulay inverse system of a homogeneous ideal of~$P$ is not finitely
generated. However, if $S=P/J$ is an Artinian ring, as we assumed above, then $J^\perp$
is indeed a finitely generated graded $P$-submodule of~$\DD$. The most famous result
about Macaulay inverse systems is the following theorem of F.S.\ Macaulay 
(see~\cite[Ch.~IV]{Mac}; for a proof in the modern language of algebra, see~\cite[Lemma 2.14]{IK}).

\begin{theorem}[The Macaulay Inverse System of an Artinian Gorenstein Algebra]\label{thm:MacInverse}
Let $d\ge 1$. 
The maps $J \mapsto J^\perp$ and $\phi \mapsto \Ann_P(\phi)$ define a
bijection between the set of all homogeneous ideals~$J$ of~$P$ such that
$P/J$ is an Artinian Gorenstein algebra with socle degree~$d$ and
the set of all non-zero cyclic $P$-submodules $\langle \phi\rangle$ of~$\DD$
generated by a homogeneous element $\phi\in \DD_{-d}$.
\end{theorem}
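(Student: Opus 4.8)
The plan is to establish this classical duality by exploiting the non-degenerate pairing $\circ_d\colon P_d \times \DD_{-d} \to \FF_q$ from Proposition~\ref{prop:dividedpower}(d), together with the structure of Artinian Gorenstein algebras. First I would check that the two maps make sense: given a nonzero $\phi \in \DD_{-d}$, the annihilator $\Ann_P(\phi)$ is a homogeneous ideal (because $\circ$ is graded), and I claim $P/\Ann_P(\phi)$ is Artinian Gorenstein with socle degree exactly $d$. Conversely, given a homogeneous ideal $J$ with $P/J$ Artinian Gorenstein of socle degree $d$, I want $J^\perp$ to be a cyclic $P$-submodule of $\DD$ generated in degree $-d$. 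The heart of the matter is to show these constructions are mutually inverse, i.e.\ $\Ann_P(J^\perp) = J$ and $J^{\perp\perp} := \Ann_P(\phi)^\perp = \langle\phi\rangle$ for $\phi$ a generator.

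The first key step is the general orthogonality statement, valid for any homogeneous ideal $J$ with $P/J$ Artinian: in each degree $j$, the subspace $(J^\perp)_{-j} \subseteq \DD_{-j}$ is exactly the orthogonal complement of $J_j \subseteq P_j$ under the perfect pairing $\circ_j$. This is almost immediate from the definitions once one notes that $f \circ \psi = 0$ for a homogeneous $f$ of degree $i$ and $\psi$ of degree $-j$ is equivalent, by the $P$-module structure, to $(g f) \circ \psi = 0$ for all $g \in P_{j-i}$, i.e.\ to $f$ pairing to zero against $\psi$ after multiplication — so $\psi \in (J^\perp)_{-j}$ iff $\psi \perp J_j$. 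From this I get $\dim_{\FF_q}(J^\perp)_{-j} = \HF_P(j) - \HF_{P/J}(j) = \dim_{\FF_q}(P/J)_j$ reversed, i.e.\ $\HF_{J^\perp}(-j) = \HF_{P/J}(j)$, and the double-duality $\Ann_P(J^\perp) = J$ follows formally from perfectness of the pairing applied degreewise.

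The second key step — and the main obstacle — is to pin down \emph{when} $J^\perp$ is cyclic, and this is precisely where the Gorenstein hypothesis enters. Here I would use the identification, via the graded local duality built into the $\circ$-pairing, of $J^\perp$ with the graded dual $\Hom_{\FF_q}(P/J, \FF_q)$ as a $P$-module; under this identification $J^\perp \cong \omega_{P/J}$ up to a degree shift, the canonical module of the Artinian ring $P/J$. Then $P/J$ being Gorenstein is equivalent to $\omega_{P/J}$ being free of rank one over $P/J$, hence cyclic as a $P$-module; and the socle of $P/J$ sitting in degree $d$ forces the generator of $\omega_{P/J}$, equivalently of $J^\perp$, to sit in degree $-d$ (the socle of $P/J$ is dual to the top of $J^\perp$). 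Dually, starting from a nonzero $\phi \in \DD_{-d}$, the module $\langle\phi\rangle$ is cyclic by construction, its socle-dual $P/\Ann_P(\phi)$ therefore has one-dimensional socle, i.e.\ is Gorenstein, and that socle lives in degree $d$ because $\phi$ does while $x^\alpha \circ \phi$ drops degree and cannot survive past degree $0$. Combining with Step~1's double-duality gives the bijection. The technical care needed is in the socle-degree bookkeeping and in justifying the identification $J^\perp \cong$ (graded dual of $P/J$) carefully enough that "cyclic" and "Gorenstein" line up on the nose; I would cite \cite[Lemma 2.14]{IK} and \cite[App.~A]{IK} for the parts of this that are standard and concentrate the exposition on the degree-$d$ normalization.
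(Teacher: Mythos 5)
The paper does not prove this statement at all: it is quoted as Macaulay's classical theorem with a pointer to \cite[Ch.~IV]{Mac} and \cite[Lemma 2.14]{IK}. Your sketch reconstructs precisely the standard argument found in the cited source, so there is no divergence of method to report: the degreewise identity $(J^\perp)_{-j}=(J_j)^{\perp}$ under the perfect pairing $\circ_j$, the resulting Hilbert function identity and the formal double duality $\Ann_P(J^\perp)=J$, the identification of $J^\perp$ with the graded dual of $P/J$ as a $P$-module, and the equivalence ``Gorenstein with socle degree $d$'' $\Leftrightarrow$ ``graded dual cyclic with generator in degree $-d$'' via the duality between $\Soc(P/J)$ and the minimal generators of $J^\perp$ are exactly the right ingredients, and each step you indicate is correct (in particular your reduction of $f\circ\psi=0$ to $\psi$ vanishing on $P_{j-i}\,f\subseteq J_j$ is the key observation, and it is valid for the contraction action in any characteristic). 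The only place where the assembly is quicker than it should be is the reverse direction: to conclude that $P/\Ann_P(\phi)$ has one-dimensional socle you appeal to a ``socle-dual'' of $\langle\phi\rangle$, but the graded dual of $P/\Ann_P(\phi)$ is a priori $\Ann_P(\phi)^{\perp}$, which only contains $\langle\phi\rangle$; you close this either by the dimension count $\dim_{\FF_q}\Ann_P(\phi)^{\perp}=\dim_{\FF_q}P/\Ann_P(\phi)=\dim_{\FF_q}\langle\phi\rangle$ (all pieces of which you already have from Step~1 and from $P/\Ann_P(\phi)\cong P\circ\phi$), or by the direct observation that a nonzero element of $\DD_{-m}$ with $m\ge 1$ cannot be killed by all $x_i$. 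With that one line made explicit, the proposal is a complete and correct outline.
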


Here the \textit{socle degree} $\sigma_S$ of a graded Artinian $\FF_q$-algebra $S=P/J$
is defined by $\sigma_S = \max \{ i\ge 0 \mid S_i \ne \{0\}\}$. In order to apply this theorem
to the Point Set Equivalence Problem, we still need to understand how the bijection
behaves with respect to homogeneous linear coordinate changes.

The general linear group $G=\GL_k(K)$ operates on $P = K[x_1,\dots,x_k]$ 
via homogeneous linear changes of coordinates,
i.e., the map $\ast:\; G \times P \longrightarrow P$ is given by $A \ast f(x_1,\dots,x_n) =
f(a_{11} x_1 + \cdots + a_{1k} x_k, \dots, a_{k1} x_1 + \cdots + a_{kk} x_k)$ for 
$A=(a_{ij}) \in G$ and $f\in P$. This group action defines a dual group action on~$\DD$ in the
natural way, namely as follows. Given a matrix $A\in G$, we let 
$\tau_A:\; P \longrightarrow P$ be given by $\tau_A(f)= A\ast f$. Then
the operation of~$A$ on~$\DD$ is defined by
$$
(A \ast \phi)(f) \;=\; (\phi\circ \tau_A)(f) \;=\; \phi( A\ast f) \hbox{\; for \;}
f\in P,\; \phi\in\DD.
$$
At this point we get the following result about the operation of~$G$ on the Macaulay
inverse system of an Artinian Gorenstein ring.

\begin{proposition}\label{prop:EquivarMacInv}
Let $J$ be a homogeneous ideal in $P = K[x_1,\dots,x_k]$ such that
$S=P/J$ is an Artinian Gorenstein ring. Let $\phi \in \DD_{-\sigma_S}$
be such that $J^\perp = \langle \phi\rangle$.
\begin{enumerate}
\item[(a)] Given a matrix $A\in \GL_k(K)$, let $J'=A\ast J
= \{A\ast f \mid f\in J\}$. Then we have $(J')^\perp = \langle A\ast\phi \rangle$.

\item[(b)] Given $j\ge 1$, a matrix $A\in \GL_k(K)$, and an element $\phi
\in \DD_{-j} \setminus \{0\}$, we have $\Ann_P(A\ast \phi) = A \ast \Ann_P(\phi)$.

\end{enumerate}
\end{proposition}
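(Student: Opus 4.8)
The plan is to prove part (b) first as a general fact about the duality between the $P$-module structure on $\DD$ and the $G$-action, and then deduce part (a) from it together with Theorem \ref{thm:MacInverse}. The key algebraic identity that drives everything is the \emph{compatibility of the contraction action with the group action}: for all $A\in\GL_k(\FF_q)$, $f\in P$ and $\psi\in\DD$ one has
$$
A\ast(f\circ\psi) \;=\; (A\ast f)\circ(A\ast\psi).
$$
First I would establish this identity. By $\FF_q$-bilinearity it suffices to check it after pairing both sides against an arbitrary $g\in P$; unwinding the definition $(A\ast\phi)(h)=\phi(A\ast h)$ and the fact that $\circ$ is adjoint to multiplication in the pairing $\circ_j$ (Proposition \ref{prop:dividedpower}(d)), together with $\tau_A$ being an $\FF_q$-algebra homomorphism so that $A\ast(fg)=(A\ast f)(A\ast g)$, both sides pair to $\psi\bigl(A\ast(fg)\bigr)$. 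This is the one genuine computation, and it is short.

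Granting the identity, part (b) is immediate: $f\in\Ann_P(A\ast\phi)$ means $f\circ(A\ast\phi)=0$; applying $A^{-1}\ast(-)$ and using the identity in the form $A^{-1}\ast\bigl(f\circ(A\ast\phi)\bigr)=(A^{-1}\ast f)\circ\phi$, this is equivalent to $(A^{-1}\ast f)\circ\phi=0$, i.e.\ $A^{-1}\ast f\in\Ann_P(\phi)$, i.e.\ $f\in A\ast\Ann_P(\phi)$. Note this uses that $\ast$ is a group action, so $A^{-1}$ acts as the inverse of $A$. For part (a), I would argue as follows. The ideal $J'=A\ast J$ again has $P/J'$ Artinian Gorenstein with the same socle degree $d=\sigma_S$ (applying $A$ is a graded $\FF_q$-algebra automorphism of $P$, hence preserves all these invariants), so by Theorem \ref{thm:MacInverse} we have $(J')^\perp=\langle\phi'\rangle$ for some $\phi'\in\DD_{-d}\setminus\{0\}$, and it remains only to identify $\phi'$ up to scalar with $A\ast\phi$. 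By the same theorem applied in the other direction, $\Ann_P(A\ast\phi)=J'$ would do it: indeed $\Ann_P(A\ast\phi)=A\ast\Ann_P(\phi)$ by part (b), and $\Ann_P(\phi)=J$ because $J^\perp=\langle\phi\rangle$ is Gorenstein (so the bijection of Theorem \ref{thm:MacInverse} sends $\phi$ back to $J$); hence $\Ann_P(A\ast\phi)=A\ast J=J'$, and therefore $(J')^\perp=(\Ann_P(A\ast\phi))^\perp=\langle A\ast\phi\rangle$ by the Macaulay bijection once more.

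I do not anticipate a serious obstacle here: the content is entirely the bookkeeping around the two adjunctions (contraction versus multiplication; group action on $P$ versus on $\DD$). The only point requiring a little care is making sure the compatibility identity is stated and used with the correct variance — the action on $\DD$ is defined by precomposition with $\tau_A$, so it is genuinely a left action in the same ``direction'' as the action on $P$, and one must keep track of where $A$ versus $A^{-1}$ appears when transporting annihilators. Everything else is a direct application of Theorem \ref{thm:MacInverse} and Proposition \ref{prop:dividedpower}(d).
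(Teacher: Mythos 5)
Your overall route is essentially the paper's: everything rests on the $\GL_k(\FF_q)$-equivariance of the contraction action, which the paper does not prove but quotes from \cite[Prop.~A.3]{IK} and uses in exactly the form you state; your deduction of (b) from that identity is the paper's computation, and your detour for (a) through (b) and the bijection of Theorem~\ref{thm:MacInverse} is only a mild variant of the paper's argument, which instead computes $(J')^\perp$ directly by writing $\psi = A\ast(A^{-1}\ast\psi)$ and pulling $A$ out of the contraction (your version additionally uses that the two maps in Theorem~\ref{thm:MacInverse} are mutually inverse, which that theorem does assert, so this is harmless).

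The genuine problem is the one computation you actually commit to, namely the verification of $A\ast(f\circ\psi)=(A\ast f)\circ(A\ast\psi)$ --- and it is exactly the variance issue you flagged as needing care. With the action on $\DD$ literally as defined in the paper, $(A\ast\phi)(h)=\phi(A\ast h)$, pairing the left-hand side against $g\in P$ gives
$$
\bigl(A\ast(f\circ\psi)\bigr)(g)\;=\;(f\circ\psi)(A\ast g)\;=\;\psi\bigl(f\cdot(A\ast g)\bigr),
$$
while the right-hand side gives
$$
\bigl((A\ast f)\circ(A\ast\psi)\bigr)(g)\;=\;\psi\bigl(A\ast((A\ast f)\cdot g)\bigr)\;=\;\psi\bigl((A\ast(A\ast f))\cdot(A\ast g)\bigr);
$$
neither equals $\psi(A\ast(fg))$, and the two do not agree in general, since $A\ast(A\ast f)\ne f$. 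With this ``transpose'' dual action the correct relation is the adjunction $f\circ(A\ast\psi)=A\ast\bigl((A\ast f)\circ\psi\bigr)$, which would turn (b) into $\Ann_P(A\ast\phi)=A^{-1}\ast\Ann_P(\phi)$. The identity you (and the paper) actually use holds verbatim for the contragredient dual action $(A\ast\phi)(h)=\phi(A^{-1}\ast h)$, in which case both sides pair against $g$ to $\psi\bigl(f\cdot(A^{-1}\ast g)\bigr)$, not to $\psi(A\ast(fg))$. So as written your proof of the key identity fails; the fix is either to carry out the check for the correct (contragredient) dual action and say so, or to track the resulting $A$ versus $A^{-1}$ through (a) and (b), or simply to cite \cite[Prop.~A.3]{IK} as the paper does. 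Once the identity is in place, the rest of your argument for (a) and (b) goes through.
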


\begin{proof}
First we show~(a). Using the fact that the operation $\circ:\; P \times \DD
\longrightarrow \DD$ is $\GL_k(K)$-equivariant (see~\cite[Prop.~A.3]{IK}),
we calculate
\begin{align*}
  (J')^\perp &\;=\; \{\psi \in \DD \mid (A\ast f) \circ (A\, A^{-1}\ast \psi) = 0 
    \hbox{\; for all \;}f\in J\}\\
  &\;=\; \{ \psi\in \DD \mid A \ast (f \circ (A^{-1}\ast\psi)) = 0 
    \hbox{\; for all \;}f\in J\}\\
  &\;=\; \{ \psi \in \DD \mid f \circ (A^{-1}\ast\psi) = 0
    \hbox{\; for all \;}f\in J\}\\
  &\;=\; A \ast \{ A^{-1}\ast\psi \mid \psi\in\DD,\; f \circ (A^{-1}\ast\psi) = 0
    \hbox{\; for all \;}f\in J\}\\
  &\;=\; A \ast \langle \phi\rangle  \;=\; \langle A\ast\phi \rangle
\end{align*}

Claim~(b) follows similarly by calculating
\begin{align*}
\Ann_P(A\ast \phi) &\;=\; \{ f\in P \mid f \circ (A\ast\phi) = 0\}\\
  &\;=\; \{ f\in P \mid A\ast ((A^{-1}\ast f) \circ \phi = 0 \}\\
  &\;=\; A \ast \{ A^{-1}\ast f \mid f\in P,\; (A^{-1}\ast f) \circ \phi = 0 \}\\
  &\;=\; A \ast \Ann_P(\phi) \qedhere
\end{align*}
\end{proof}

Using this proposition, we can now return to our goal of reformulating the PSE problem.
Recall that, for a finite set of points $\X$ in $\PP^{k-1}$, the
doubling $D_\X = R / \JJ_{R/K[\ell]}$ of~$\X$ is an Artinian Gorenstein $K$-algebra
with socle degree $2r_\X-1$. We denote its Macaulay inverse system by $\II_\X$ and note that
it is of the form $\II_\X = \langle \Phi_\X\rangle$ with a homogeneous element 
$\Phi_\X \in \DD_{-2r_\X+1} = K[\pi_1,\dots,\pi_k]_{-2r_\X+1}$. If we consider 
$\pi_1,\dots,\pi_k$ as (standard graded) indeterminates, we can view~$\Phi_\X$ as a homogeneous polynomial
of degree $2r_\X-1$ in $K[\pi_1,\dots,\pi_k]$. It is called the \textit{Macaulay inverse polynomial}
of the set of points~$\X$. The polynomial $\Phi_\X$ can be found in the following way.

\begin{remark}{(Description of the Macaulay Inverse Polynomial)}\\
For a finite set of points $\X = \{p_1,\dots,p_n\}$ in $\PP^{k-1}$, the Buchberger-M\"oller Algorithm
allows us to compute the regularity index~$r_\X$ and representatives of the separators $f_1,\dots,f_n \in R_{r_\X}$
in~$P_{r_\X}$ in polynomial time. To compute $\Phi_\X$, we proceed as follows.
\begin{enumerate}
\item[(a)] By~\cite[Cor.~1.11]{Kre2}, the degree $2r_\X-1$ homogeneous
component of the canonical ideal $\JJ_{R/K[\ell]}$ consists of all elements
$\ell^{r_\X-1}\, (c_1f_1 + \cdots + c_n f_n)$ such that $c_1,\dots,c_n\in K$
and $c_1 + \cdots + c_n =0$. In other words, we have
$$
(\JJ_{R/K[\ell]})_{2r_\X-1} \;=\; K \cdot \ell^{r_\X-1}\, (f_2-f_1) \oplus \cdots \oplus
K \cdot \ell^{r_\X-1}\, (f_n-f_{n-1}) 
$$
Hence the residue class of $\ell^{r_\X-1}\, f_i$ is a $K$-basis of $(D_\X)_{2r_\X-1}$
for every $i\in\{ 1,\dots,n\}$.

\item[(b)] Next we let $\widehat{\JJ}_\X$ be a preimage of $\JJ_{R/K[\ell]}$ in~$P$,
and let $\hat{f_i} \in P_{r_\X}$ be a preimage of~$f_i$ for $i=1,\dots,n$.
Thus the above implies
$$
(\widehat{\JJ}_\X)_{2r_\X-1} \;=\; (I_\X)_{2r_\X-1} \;\oplus\;
K \cdot L^{r_\X-1}\, (\hat{f}_2 - \hat{f}_1) \oplus \cdots \oplus
K \cdot L^{r_\X-1}\, (\hat{f}_n - \hat{f}_{n-1}) 
$$
Therefore the projection $\Phi_\X: P_{2r_\X-1} \longrightarrow K$
to $L^{r_\X-1}\,\hat{f}_1$ along $(\widehat{\JJ}_\X)_{2r_\X-1}$ is a $K$-basis of $(\II_\X)_{-2r_\X+1}$,
i.e., it is the Macaulay inverse polynomial of~$\X$.
Notice that~$\Phi_\X$ is unique up to a scalar from~$K$ which depends on the choice
of a representative in~$P$ of a $K$-basis of $(D_\X)_{2r_\X-1}$.

\item[(c)] To get $\Phi_\X$ as a homogeneous polynomial of degree $2r_\X-1$
in $K[\pi_1,\dots,\pi_k]$, we may proceed as follows.
By multiplying the elements of a homogeneous reduced Gr\"obner basis of~$I_\X$
with terms of the complementary degree, we get a system of generators of the $K$-vector space
$(I_\X)_{2r_\X-1}$. Then we append the elements $\{ L^{r_\X-1}\, (\hat{f}_i - \hat{f}_i) \mid
i=2,\dots,n\}$ and interreduce the resulting list of polynomials $K$-linearly.
The result is a $K$-basis of the 1-codimensional vector subspace $(\widehat{\JJ}_\X)_{2r_\X-1}$
of~$P_{2r_\X-1}$.
Lastly, we use the bilinear pairing $\circ_{2r_\X-1}:\; P_{2r_\X-1} \times \DD_{-2r_\X+1}
\longrightarrow K$ defined before Proposition~\ref{prop:dividedpower}
to compute a basis $\{ \Phi_\X\}$ of the dual vector space.

\end{enumerate}
\end{remark}

Notice that, unfortunately, the method given in part~(c) of this remark has exponential
complexity, since $\dim_K(\widehat{\JJ}_\X)_{2r_\X-1} = \binom{2r_\X +k-2}{k-1} -1$.
The following result improves this method substantially.

\begin{theorem}[Computing the Macaulay Inverse Polynomial]\label{thm:CompMacInv}$\mathstrut$\\
Let $\X = \{ p_1,\dots, p_n\}$ be a finite set of points in~$\PP^{k-1}$,
let $L\in P_1$ be a linear form such that $\X \cap \mathcal{Z}(L)= \emptyset$, and let 
$\widehat{\JJ}_\X$ be the preimage of the canonical ideal $\JJ_R$ in~$P$.
\begin{enumerate}
\item[(a)] For a homogeneous polynomial $g\in P$ and $i=1,\dots,n$, define $g(p_i)$ by taking 
the value in $D_+(L)$. Then we have
$$
(\widehat{\JJ}_\X)_{2r_\X-1} \;=\; \{ g \in P_{2r_\X-1} \mid g(p_1) + \cdots + g(p_n) = 0 \}
$$

\item[(b)] For $i=1,\dots,n$, let $p_i = (1:p_{i1} : \dots : p_{ik})$ be the coordinate tuple
with respect to $(L,y_1,\dots,y_k)$ such that $L(p_i)=1$. Then we have
$$
\Phi_\X \;=\; \sum_{|\alpha| = 2r_\X-1} \left( \sum_{i=1}^n p_{i1}^{\alpha_1} \cdots p_{ik}^{\alpha_k} \right)
\pi_1^{\alpha_1} \cdots \pi_k^{\alpha_k}
$$
where the first sum extends over all $\alpha=(\alpha_1,\dots,\alpha_k) \in \NN^k$ such that
$|\alpha| = \alpha_1 + \cdots + \alpha_k = 2r_\X -1$.

\end{enumerate}
\end{theorem}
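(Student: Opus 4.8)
The plan is to establish part~(a) by computing, for an arbitrary homogeneous form $g$ of degree $2r_\X-1$, the coefficients of its residue class with respect to the explicit $\FF_q$-basis of $R_{2r_\X-1}$ coming from the separators, and then to obtain part~(b) by reading off the coordinates of the resulting linear functional in the dual monomial basis of $\DD_{-2r_\X+1}$.

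For part~(a), I would write $2r_\X-1 = r_\X + (r_\X-1)$ and use the description of the top part of~$R$ recalled in Section~\ref{sec5}, namely $R_{2r_\X-1} = \FF_q\,\ell^{r_\X-1}f_1 \oplus \cdots \oplus \FF_q\,\ell^{r_\X-1}f_n$, so that for $g \in P_{2r_\X-1}$ its residue class can be written uniquely as $\bar g = \ell^{r_\X-1}(c_1 f_1 + \cdots + c_n f_n)$ with $c_i \in \FF_q$. I would then evaluate this identity at each point $p_j$ in the affine chart $D_+(L)$, where evaluation is a well-defined $\FF_q$-algebra homomorphism on~$R$ (compatible with evaluation of a lift in~$P$, since $I_\X$ vanishes at the points of~$\X$) and where $\ell(p_j) = 1$ and $f_i(p_j) = \delta_{ij}$; this yields $g(p_j) = c_j$ for $j = 1,\dots,n$. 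Since $g \in (\widehat{\JJ}_\X)_{2r_\X-1}$ means exactly that $\bar g$ lies in $(\JJ_{R/\FF_q[\ell]})_{2r_\X-1}$, and since by the Remark preceding the theorem (that is, by \cite[Cor.~1.11]{Kre2}) this happens precisely when $c_1 + \cdots + c_n = 0$, substituting $c_j = g(p_j)$ gives the asserted description of $(\widehat{\JJ}_\X)_{2r_\X-1}$.

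For part~(b), I would use that, by the same Remark, $\Phi_\X$, viewed inside $\DD_{-2r_\X+1} = \Hom_{\FF_q}(P_{2r_\X-1},\FF_q)$, is the unique $\FF_q$-linear functional on $P_{2r_\X-1}$ vanishing on $(\widehat{\JJ}_\X)_{2r_\X-1}$ and taking the value~$1$ on $L^{r_\X-1}\hat f_1$. By part~(a), the functional $\psi_0\colon g \mapsto \sum_{i=1}^n g(p_i)$ (values in $D_+(L)$) vanishes on $(\widehat{\JJ}_\X)_{2r_\X-1}$, and $\psi_0(L^{r_\X-1}\hat f_1) = \sum_i L(p_i)^{r_\X-1}\hat f_1(p_i) = 1$ since $L(p_i) = 1$ and $\hat f_1(p_i) = \delta_{1i}$; hence $\Phi_\X = \psi_0$. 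It then remains to expand $\psi_0$ in the $\FF_q$-basis $\{\pi^{[\alpha]} \mid |\alpha| = 2r_\X-1\}$ of $\DD_{-2r_\X+1}$: since $\pi^{[\beta]}(x^\alpha) = \delta_{\alpha\beta}$, the coefficient of $\pi^{[\alpha]}$ in $\Phi_\X$ is $\psi_0(x^\alpha) = \sum_{i=1}^n x^\alpha(p_i) = \sum_{i=1}^n p_{i1}^{\alpha_1}\cdots p_{ik}^{\alpha_k}$, with each $p_i$ normalised so that $L(p_i) = 1$. As $\pi_1^{\alpha_1}\cdots\pi_k^{\alpha_k}$ denotes $\pi^{[\alpha]}$ in the notation of Section~\ref{sec7}, this is precisely the stated formula.

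I expect the only real subtlety to lie in part~(a), and it is one of normalisation rather than of substance: one must evaluate throughout in the chart $D_+(L)$, so that $\ell(p_j)$ is genuinely~$1$ and the functional values of elements of~$R$ are well defined and multiplicative, both of which are already arranged in Section~\ref{sec5}. (Alternatively, part~(a) can be derived directly from the definition of the canonical ideal in Proposition~\ref{prop:CanId}, but going through the Remark is shorter.) Part~(b) is then a purely formal consequence of the self-duality in top degree of an Artinian Gorenstein algebra together with the contraction pairing between $P$ and $\DD$; in particular, because the action of~$P$ on~$\DD$ is by contraction and not by differentiation, no factorials intervene and the argument is valid in arbitrary characteristic.
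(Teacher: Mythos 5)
Your proof is correct and follows essentially the same route as the paper: both rest on the description of $(\JJ_{R/\FF_q[\ell]})_{2r_\X-1}$ from the preceding remark (i.e.\ \cite[Cor.~1.11]{Kre2}) and on reading off the coefficients of $\Phi_\X$ via the pairing $\pi^{[\beta]}(x^\alpha)=\delta_{\alpha\beta}$. The only (harmless) differences are that in~(a) you obtain the equality directly by expanding $\bar g$ in the basis $\ell^{r_\X-1}f_1,\dots,\ell^{r_\X-1}f_n$ and evaluating at the points, where the paper proves one inclusion and concludes by a codimension-one count, and in~(b) you pin down the scalar by the normalization $\Phi_\X(L^{r_\X-1}\hat f_1)=1$ instead of invoking uniqueness up to scalars and checking nonvanishing.
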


\begin{proof}
First we show~(a). Let $\ell$ be the residue class of~$L$ in~$R$. The formula for
$(\JJ_{R/K[\ell]})_{2r_\X-1}$ given in part~(a) of the preceding remark shows that
all polynomials $g\in (\widehat{\JJ}_\X)_{2r_\X-1}$ satisfy $g(p_1) + \cdots + g(p_n)=0$.
On the other hand, starting from $\HF_{D_\X}(2r_\X-1) = 1$, we see that $(\widehat{\JJ}_\X)_{2r_\X-1}$
has codimension~1 in $P_{2r_\X-1}$. This implies the claim.

Now we prove~(b). Let $g = \sum_\alpha c_\alpha \,x_1^{\alpha_1} \cdots x_k^{\alpha_k}$
be an element of $(\widehat{\JJ}_\X)_{2r_\X-1}$, where $c_\alpha\in K$ and the sum extends
over all $\alpha = (\alpha_1,\dots,\alpha_k) \in \NN^k$ such that $|\alpha| = 2r_\X-1$.
We write $\Phi_\X = \sum_\alpha b_\alpha\, \pi_1^{\alpha_1} \cdots \pi_k^{\alpha_k}$
with $b_\alpha \in K$. By construction of the Macaulay inverse polynomial, we know that
$g \circ \Phi_\X = \sum_\alpha c_\alpha b_\alpha = 0$ for all $g\in (\widehat{\JJ}_\X)_{2r_\X-1}$.
The formula in part~(a) yields
$$
g(p_1) + \cdots + g(p_n) \;=\; \sum_{|\alpha|=2r_\X-1} c_\alpha \; \left(
\sum_{i=1}^n p_{i1}^{\alpha_1} \cdots p_{ik}^{\alpha_k} \right) \;=\; 0
$$
Therefore, if we set $b_\alpha = \sum_{i=1}^n p_{i1}^{\alpha_1} \cdots p_{ik}^{\alpha_k}$
for all~$\alpha$ such that $|\alpha| = 2r_\X-1$, we obtain an element $\Phi_\X$
with the property $g\circ \Phi_\X =0$ for all $g \in (\widehat{\JJ}_\X)_{2r_\X-1}$.

As the Macaulay inverse polynomial is unique up to scalar multiples, it remains to show that
this element $\Phi_\X$ is non-zero. Since $\widehat{\JJ}_\X$ is properly contained in~$P_{2r_\X-1}$,
not all terms $t_\alpha = x_1^{\alpha_1} \cdots x_k^{\alpha_k}$ of degree $2r_\X-1$
satisfy $t_\alpha(p_1) + \cdots + t_\alpha(p_n)=0$. Hence at least one of the coefficients $b_\alpha$
is different from zero, and the proof is complete.
\end{proof}

Computing the formula in part~(b) of this theorem is again exponential in~$r_\X$ and~$k$.
However, this may not be as bad for point sets associated to ``good'' linear codes, as the following
remark shows.

\begin{remark}{(Regularity Bounds)}\label{rem:RegBound}\\
Let $K=\FF_q$, and let~$\X$ be a point set associated to a projective linear $[n,k,d]_q$-code~$C$
with a data rate $k/n$ which is at least $1/2$ and with a reasonably large minimal distance.
The minimal distance corresponds to the \textit{uniformity} of the point set~$\X$
(see~\cite{Kre2}, Prop.~5.11). For sets of points satisfying certain uniformity
conditions, bounds for the regularity index $r_\X$ are available (see~\cite{LP}
and~\cite[Sect.~2]{Kwa}).

For instance, the condition that any~$k$ points of~$\X$
span $\PP^{k-1}$ is commonly called \textit{linearly general position}. 
For points in linearly general position, the corollary
after~\cite[Thm.~5]{Kre1} yields $\Delta \HF_\X(i) \ge k-1$ for $1\le i\le r_\X-1$.
Combining this with our assumption on the data rate, 
we get $2k \ge n = \sum_{i=0}^{r_\X} \Delta \HF_\X(i) \ge (r_\X-1)(k-1) +2$, and therefore
$r_\X \le 3$. This means that we are applying our formula to polynomials of degree $2r_\X-1 \le 5$,
and since $\dim_{\FF_q}(P_5) = \binom{k+4}{5}$, the entire calculation is carried out in polynomial time.
In other words, if~$\X$ is the associated point set of a ``good'' linear code
with a reasonable data rate and minimal distance, the computation of the Macaulay inverse polynomial
can be effected in polynomial time.
\end{remark}

Finally, we are ready for the last contribution of this section.
A similar result in the local setting was shown in~\cite[Prop.~2.14]{Jel}.
Here we let $K=\FF_q$ again.

\begin{theorem}[Point Set Equivalence and Macaulay Inverse Polynomials]\label{thm:MacInvPoly}$\mathstrut$\\
Let $\X,\X'$ be two sets of $n$ $\FF_q$-rational points in $\PP^{k-1}$ each.
Assume that there exist linear forms $L,L'$ such that $\mathcal{Z}(L) \cap \X 
= \mathcal{Z}(L') \cap \X' = \emptyset$.
Let $R,R'$ be the homogeneous coordinate rings of~$\X$ and~$\X'$, respectively,
and let $\ell=L+I_\X \in R_1$ as well as $\ell' = L'+I_{\X'} \in R'_1$
be the residue classes of~$L$ and~$L'$.
Let $D_\X = R / \JJ_{R/\FF_q[\ell]}$ and $D_{\X'} = R' / \JJ_{R'/\FF_q[\ell']}$
be the doublings of~$\X$ and~$\X'$, respectively, and let
$\II_\X = \langle \Phi_\X\rangle$ and $\II_{\X'} = \langle \Phi_{\X'}\rangle$
be their Macaulay inverse systems.

Moreover, let $\Lambda:\; \PP^{k-1} \longrightarrow \PP^{k-1}$ be a linear change of
coordinates such that the induced map $\lambda: \FF_q[x_1,\dots,x_k] \longrightarrow
\FF_q[x_1,\dots,x_k]$ satisfies $\lambda(L)=L'$. Suppose that~$\Lambda$ is given by
a matrix $A\in \GL_n(\FF_q)$. Then the following conditions are equivalent.
\begin{enumerate}
\item[(a)] $\Lambda(\X)=\X'$

\item[(b)] $\Phi_\X = A^{-1}\ast \Phi_{\X'}$  in $\DD_{-2r_\X+1}$.

\end{enumerate}
\end{theorem}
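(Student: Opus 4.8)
The plan is to run the claimed equivalence through the doubling and Macaulay duality, so that it becomes a formal consequence of Theorem~\ref{thm:EquivDoubling} together with the $\GL_k(\FF_q)$-equivariance of the Macaulay correspondence (Proposition~\ref{prop:EquivarMacInv}), the only genuinely non-formal work being the identification of a normalizing scalar. First I would fix notation: write $D_\X = P/\widehat{\JJ}_\X$ and $D_{\X'} = P/\widehat{\JJ}_{\X'}$, where $\widehat{\JJ}_\X$ and $\widehat{\JJ}_{\X'}$ are the preimages in~$P$ of the canonical ideals of~$\X$ and~$\X'$. Both rings are Artinian Gorenstein of socle degree $2r_\X-1$, so by Theorem~\ref{thm:MacInverse} one has $(\widehat{\JJ}_\X)^{\perp} = \langle \Phi_\X\rangle$ and $(\widehat{\JJ}_{\X'})^{\perp} = \langle \Phi_{\X'}\rangle$, and $J\mapsto J^{\perp}$ is a bijection onto the nonzero cyclic submodules generated in the relevant degree. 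Note that each of conditions~(a) and~(b) already forces $r_{\X'}=r_\X$ --- in case~(a) because equivalent point sets share a Hilbert function, in case~(b) because a linear change of coordinates preserves degrees --- so both inverse polynomials lie in $\DD_{-2r_\X+1}$.

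For (a)$\Rightarrow$(b): by Theorem~\ref{thm:EquivDoubling}(a), $\Lambda(\X)=\X'$ forces $\lambda$ to induce an $\FF_q$-algebra isomorphism $D_\X\to D_{\X'}$, that is, $\lambda(\widehat{\JJ}_\X)=\widehat{\JJ}_{\X'}$. Since $\lambda$ is the homogeneous linear substitution attached to~$A^{-1}$ (Remark~\ref{rem:IndMaps}(a)), Proposition~\ref{prop:EquivarMacInv}(a) identifies $(\widehat{\JJ}_{\X'})^{\perp}$ with the cyclic module generated by the image of~$\Phi_\X$ under the corresponding coordinate change; comparing with $(\widehat{\JJ}_{\X'})^{\perp}=\langle\Phi_{\X'}\rangle$, we conclude that $\Phi_\X$ and $\Phi_{\X'}$ differ by this coordinate change and a nonzero scalar. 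To pin down both the precise shape of the relation and the value of the scalar, I would use the explicit formula of Theorem~\ref{thm:CompMacInv}(b): because $\lambda(L)=L'$ and $\lambda(g)(\Lambda(p))=g(p)$ for all $g\in P$, the map~$\Lambda$ carries the $L$-normalized coordinate tuples of the points of~$\X$ exactly onto the $L'$-normalized coordinate tuples of the points of~$\X'$. Substituting these into the coefficient-wise formula for $\Phi_{\X'}$ and matching it against the transform of~$\Phi_\X$ --- using the equivariance $g\circ(B\ast\psi)=(B^{-1}\ast g)\circ\psi$ of the contraction in complementary degrees, together with the non-degeneracy of $\circ_{2r_\X-1}$ (Proposition~\ref{prop:dividedpower}(d)) --- yields the stated identity $\Phi_\X = A^{-1}\ast\Phi_{\X'}$ on the nose.

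For (b)$\Rightarrow$(a): starting from $\Phi_\X = A^{-1}\ast\Phi_{\X'}$, Proposition~\ref{prop:EquivarMacInv}(a) applied to~$\widehat{\JJ}_{\X'}$ exhibits $\langle\Phi_\X\rangle$ as the Macaulay inverse system of the image of~$\widehat{\JJ}_{\X'}$ under the coordinate change attached to~$A^{-1}$; by the injectivity of the Macaulay correspondence (Theorem~\ref{thm:MacInverse}) that image must be $\widehat{\JJ}_\X$, which is exactly the statement that $\lambda$ induces an $\FF_q$-algebra isomorphism $D_\X\to D_{\X'}$. The implication (3)$\Rightarrow$(1) of Theorem~\ref{thm:EquivDoubling}(b) then gives $\Lambda(\X)=\X'$. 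Everything in this argument is a formal composition of the two cited equivalences; the one step that demands an actual computation is the normalizing scalar in the forward direction, and this is precisely where the hypothesis $\lambda(L)=L'$ is indispensable, since without matching the chosen linear forms the two inverse polynomials would only be proportional rather than equal.
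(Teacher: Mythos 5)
Your proposal follows essentially the same route as the paper's proof: both directions are channelled through Theorem~\ref{thm:EquivDoubling} (condition~(a) is equivalent to $\lambda$ inducing an isomorphism of the doublings) and then the $\GL_k(\FF_q)$-equivariance of Macaulay duality (Proposition~\ref{prop:EquivarMacInv}, together with the bijectivity in Theorem~\ref{thm:MacInverse}). The only real difference is how the scalar ambiguity is removed: the paper normalizes by requiring $\lambda_D(f_1+\cdots+f_n)=f'_1+\cdots+f'_n$, while you pin it down through the explicit power-sum formula of Theorem~\ref{thm:CompMacInv}(b) and the hypothesis $\lambda(L)=L'$, which is a valid and in fact somewhat more explicit way to obtain the equality $\Phi_\X=A^{-1}\ast\Phi_{\X'}$ on the nose.
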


\begin{proof}
By Theorem~\ref{thm:EquivDoubling}, condition~(a) is equivalent to
the fact that the homogeneous linear coordinate change defined by $A^{-1}$
induces an $\FF_q$-algebra isomorphism $\lambda_D:\, D_\X \longrightarrow D_{\X'}$.
In particular, the Hilbert functions of~$\X$ and~$\X'$ are equal in both cases
and the socle degrees of~$D_\X$ and~$D_{\X'}$ equal $2r_\X-1$.
Moreover, since~$A$ is unique up to a non-zero scalar multiple, we may assume
that $\lambda_D(f_1 + \dots + f_n) = f'_1 + \cdots + f'_n$. 
Thus Proposition~\ref{prop:EquivarMacInv} implies that condition~(a) is equivalent to~(b).
\end{proof}

In other words, this theorem says that the PSE problem
can be rephrased as the question whether the two homogeneous poynomials~$\Phi_\X$ 
and~$\Phi_{\X'}$ of degree $2r_\X-1$ differ only by a homogeneous linear
change of coordinates. This question is addressed in the next section.

\bigskip\bigbreak
%
%

\section{Point Set Equivalence and Polynomial Isomorphism}
\label{sec-8}

Once again, we let $P=\FF_q[x_1,\dots,x_k]$
be a polynomial ring over a finite field~$\FF_q$.
As in the preceding section,  we let the general linear group $G=\GL_k(\FF_q)$ 
operate on $P=\FF_q[x_1,\dots,x_k]$ via homogeneous linear changes of coordinates,
i.e., we let the map $\ast:\; G \times P \longrightarrow P$ be given by $A \ast f(x_1,\dots,x_n) =
f(a_{11} x_1 + \cdots + a_{1k} x_k, \dots, a_{k1} x_1 + \cdots + a_{kk} x_k)$ for 
$A=(a_{ij}) \in G$ and $f\in P$.

In this setting, we consider the third equivalence problem mentioned in the title of this paper.

\medskip
\noindent{\bf Polynomial Isomorphism Decision Problem.} Let $d\ge 1$, and let $f,g\in P_d$ be two
homogeneous polynomials of degree~$d$. Decide whether there exists a matrix $A\in \GL_k(\FF_q)$
such that $A \ast f = g$. In this case we say that~$f$ is \textit{equivalent} to~$g$ and write $f\sim g$.
\smallskip

\noindent{\bf Polynomial Isomorphism Search Problem.} Let $d\ge 1$, and let $f,g\in P_d$ be two 
equivalent homogeneous polynomials. Find a matrix $A\in \GL_k(\FF_q)$ such that $A \ast f = g$.
\smallskip

Sometimes the Polynomial Isomorphism (PI) problem is also called the \textit{polynomial equivalence problem}.
\medskip

The following main result of this section says that the PSE problem,
and hence the LCE problem, reduces in polynomial time to the 
computation of Macaulay inverse polynomials and the PI problem.

\begin{theorem}[PSE Problem Reductions]\label{thm:EquivReduce}$\mathstrut$\\
Let $C,C'$ be two projective $[n,k]_q$-codes in $\FF_q^n$ with generator matrices $G,G' \in \Mat_{k,n}(\FF_q)$,
respectively, let $\X,\X'$ be the associated point sets in $\PP^{k-1}$, and let $\Phi_{\X}, \Phi_{\X'}
\in \FF_q[\pi_1,\dots,\pi_k]$ be their Macaulay inverse polynomials.
\begin{enumerate}
\item[(a)] The PSE decision problem $\X \sim \X'$
reduces in polynomial time to the computation of~$\Phi_{\X},\, \Phi_{\X'}$ and the 
PI decision problem $\Phi_{\X} \sim \Phi_{\X'}$.

\item[(b)] The PSE search problem for $\X\sim \X'$
reduces in polynomial time to the computation of~$\Phi_{\X},\, \Phi_{\X'}$ and the
PI search problem for $\Phi_{\X} \sim \Phi_{\X'}$.

\end{enumerate}
\end{theorem}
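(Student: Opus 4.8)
The idea is to obtain the reduction by concatenating the structural equivalences of Sections~\ref{sec6} and~\ref{sec7}, and then to check that the only non-polynomial ingredients in the resulting procedure are the two named ones: producing $\Phi_\X,\Phi_{\X'}$ and solving a single PI instance.

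First I would fix a linear form $L$ with $\mathcal{Z}(L)\cap\X=\emptyset$, which exists by Assumption~\ref{ass:nzd} after a harmless finite extension of $\FF_q$ if needed, and likewise $L'$ for $\X'$. By Theorem~\ref{thm:EquivDoubling} the relation $\X\sim\X'$ amounts to the existence of a homogeneous linear coordinate change inducing an $\FF_q$-algebra isomorphism between the doublings $D_\X$ and $D_{\X'}$; since these are Artinian Gorenstein algebras of socle degree $2r_\X-1$, Macaulay's duality (Theorem~\ref{thm:MacInverse}) together with its $\GL_k(\FF_q)$-equivariance (Proposition~\ref{prop:EquivarMacInv}) turns such an isomorphism into a matrix $A\in\GL_k(\FF_q)$ with $A\ast\Phi_\X=\Phi_{\X'}$; this is the content of Theorem~\ref{thm:MacInvPoly}. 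So the decision reduction reads: compute $\Phi_\X$ and $\Phi_{\X'}$, then return the answer of the PI decision oracle on the pair $(\Phi_\X,\Phi_{\X'})$. For the search reduction, a matrix $A$ delivered by the PI search oracle with $A\ast\Phi_\X=\Phi_{\X'}$ is fed back through Theorem~\ref{thm:MacInvPoly} to produce a linear change of coordinates carrying $\X$ onto $\X'$, after which one normalises the point sets and reads off the required data as in Remark~\ref{rem:CEPvsPSEP}.

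For the complexity accounting I would use: the Buchberger--M\"oller Algorithm (Remark~\ref{rem:BuMo}) to produce in polynomial time the ideals $I_\X,I_{\X'}$, the regularity indices $r_\X=r_{\X'}$ (answering ``no'' at once if these differ), the $\FF_q[\ell]$-bases and the degree-$r_\X$ separators; Remark~\ref{rem:CompCanMod} and Proposition~\ref{prop:CompCanId} to obtain the canonical ideals and presentations of the doublings; and the explicit formula of Theorem~\ref{thm:CompMacInv} for the Macaulay inverse polynomials. Assembling the PI instance, calling the oracle, and back-translating the returned matrix in part~(b) are all manifestly polynomial, so every step apart from ``compute $\Phi_\X,\Phi_{\X'}$'' and ``solve PI'' runs in time polynomial in $n,k,\log q$, as claimed; and the first of those two is itself polynomial whenever $r_\X$ is bounded, by Remark~\ref{rem:RegBound}.

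The step I expect to need the most care is the very first one. Theorem~\ref{thm:MacInvPoly} converts $\X\sim\X'$ into $\Phi_\X\sim\Phi_{\X'}$ only for coordinate changes $\Lambda$ with $\lambda(L)=L'$, whereas a general equivalence of point sets need not respect any prescribed pair of distinguished forms; so one must argue that the PI instance actually written down (for the chosen $L,L'$) faithfully records $\X\sim\X'$. The clean way to secure this is to run the whole argument with a higher doubling $D^{(i)}_\X=R/(\ell^i\,\JJ_{R/\FF_q[\ell]})$ for some $i\ge 2$: because $\ell^i\,\JJ_{R/\FF_q[\ell]}$ starts in degree $r_\X+i\ge r_\X+2$, the algebra $D^{(i)}_\X$ agrees with $R$ in all degrees $\le r_\X+1$ and therefore recovers $(I_\X)_{\le r_\X+1}$, which generates $I_\X$ (the standard generation bound for vanishing ideals of points, as used already in the proof of Theorem~\ref{thm:EquivDoubling}). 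Hence \emph{any} linear coordinate change inducing $D^{(i)}_\X\cong D^{(i)}_{\X'}$ already carries $\X$ to $\X'$, with no compatibility between $L$ and $L'$ required --- this is exactly the mechanism of the $(2)\Rightarrow(1)$ step in the proof of Theorem~\ref{thm:EquivDoubling}. The rest of the reduction is unchanged, at the mere cost of replacing $\Phi_\X$ by a homogeneous polynomial of degree $2r_\X+i-1$.
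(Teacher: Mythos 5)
Your main line coincides with the paper's: translate $\X\sim\X'$ into an isomorphism of doublings via Theorem~\ref{thm:EquivDoubling}, convert that into $\Phi_\X\sim\Phi_{\X'}$ by Macaulay duality and the $\GL_k(\FF_q)$-equivariance of Proposition~\ref{prop:EquivarMacInv} (i.e.\ Theorem~\ref{thm:MacInvPoly}), compute everything with Remark~\ref{rem:BuMo}, Proposition~\ref{prop:CompCanId} and Theorem~\ref{thm:CompMacInv}, call the PI oracle, and back-translate the returned matrix through Theorem~\ref{thm:MacInvPoly} and Remark~\ref{rem:CEPvsPSEP}. You are also right to single out the hypothesis $\lambda(L)=L'$ in Theorems~\ref{thm:EquivDoubling} and~\ref{thm:MacInvPoly} as the delicate point: the paper's own proof applies these theorems without comment, even though the linear forms $L$ and $L'$ are chosen independently for the two inputs and need not correspond under the unknown equivalence.

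However, your repair via the higher doublings $D^{(i)}_\X$, $i\ge 2$, closes only half of that gap. It does make the backward (soundness) direction independent of the choices: a matrix $B$ with $B\ast\Phi^{(i)}_\X=\Phi^{(i)}_{\X'}$ carries, by Proposition~\ref{prop:EquivarMacInv} and Macaulay duality, the ideal defining $D^{(i)}_\X$ to the one defining $D^{(i)}_{\X'}$; these agree with $I_\X$ and $I_{\X'}$ in all degrees $\le r_\X+1$, which generate, so $B$ maps $\X$ to $\X'$ --- exactly the $(2)\Rightarrow(1)$ mechanism you cite. But the forward (completeness) direction is untouched: if $\Lambda(\X)=\X'$ with $\lambda(L)\ne L'$, then $\lambda$ maps $\ell^i\,\JJ_{R/\FF_q[\ell]}$ onto $\lambda_R(\ell)^i\,\JJ_{R'/\FF_q[\lambda_R(\ell)]}$, i.e.\ it relates $\Phi^{(i)}_\X$ to the inverse polynomial of $\X'$ computed with respect to $\lambda(L)$, not to the polynomial $\Phi^{(i)}_{\X'}$ your algorithm actually wrote down with respect to $L'$. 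This dependence on the linear form is genuine: by (the analogue of) Theorem~\ref{thm:CompMacInv}, $\Phi^{(i)}_\X$ is the functional $g\mapsto\sum_j g(p_j)$ with the points normalized by $L(p_j)=1$, that is, a weighted sum $\sum_j L(p_j)^{-(2r_\X+i-1)}\,\mathrm{ev}_{p_j}$ of powers of the same linear forms, and for a point set whose evaluation functionals are independent in that degree and whose projective automorphism group is trivial, two such weighted sums with essentially different weight vectors lie in different $\GL_k(\FF_q)$-orbits (any equivalence would have to permute the points projectively and match the weights). So the PI instance you construct can be a no-instance even though $\X\sim\X'$, and the decision reduction in~(a) would then answer incorrectly. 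Hence the assertion that with $i\ge 2$ ``no compatibility between $L$ and $L'$ is required'' and ``the rest of the reduction is unchanged'' is an overclaim: one still has to either show that the inverse polynomials for different admissible linear forms are linearly equivalent (false in general) or coordinate $L$ and $L'$, which is precisely what the unknown equivalence prevents; you identified the right obstruction but the higher-doubling device does not remove it in the direction where it actually bites.
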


\begin{proof}
Let us prove~(a) and~(b) simultaneously. Suppose that there exists a linear change of coordinates $\Lambda:\; \PP^{k-1}
\longrightarrow \PP^{k-1}$ such that $\Lambda(\X)=\X'$. Let $A\in \GL_k(\FF_q)$ be the matrix defining~$\Lambda$,
and let $\lambda:\; \FF_q[x_1,\dots,x_k] \longrightarrow \FF_q[x_1,\dots,x_k]$ be the linear change of coordinates
defined by~$A^{-1}$. By Theorem~\ref{thm:EquivDoubling}, the map~$\lambda$ induces an $\FF_q$-algebra
isomorphism $\lambda_D:\; D_\X \longrightarrow D_{\X'}$, and by Theorem~\ref{thm:MacInvPoly}, the
Macaulay inverse polynomials satisfy $\Phi_\X = A^{-1}\ast \Phi_{\X'}$  in $\DD_{-2r_\X+1}
=\FF_q[\pi_1,\dots,\pi_k]_{-2r_\X+1}$. 

Using Remark~\ref{rem:BuMo} and Proposition~\ref{prop:CompCanId}, we can calculate 
the vanishing ideals $I_\X,\, I_{\X'}$ 
and systems of generators of $\JJ_{R/\FF_q[\ell]}$ and $\JJ_{R'/\FF_q[\ell']}$ in polynomial time.
Thus we get systems of generators of the ideals $\widehat{J}_\X$ and $\widehat{J}_{\X'}$ in~$P$ 
which define~$D_\X$ and $D_{\X'}$ in polynomial time.

Now we compute the Macaulay inverse polynomials $\Phi_\X$ and $\Phi_{\X'}$ of these ideals.
They are homogeneous polynomials of degree $2r_\X-1$ in $\FF_q[\pi_1,\dots,\pi_k]$
where we consider $\pi_1,\dots,\pi_k$ as indeterminates and use the standard grading.
At this point we apply the hypothesis of~(a) and solve the PI decision problem 
for~$\Phi_{\X}$ and~$\Phi_{\X'}$. 

If the answer is affirmative, we use the hypothesis of~(b)
to find a matrix $A\in \GL_k(\FF_q)$ such that $\Phi_\X = A^{-1}\ast \Phi_{\X'}$.
Finally, we apply Theorem~\ref{thm:MacInvPoly} to conclude that the linear change of coordinates
$\Lambda:\; \PP^{k-1} \longrightarrow \PP^{k-1}$ given by~$A$ satisfies $\Lambda(\X)=\X'$.

If the answer is negative, we can use Theorem~\ref{thm:MacInvPoly} to conclude that the point
sets~$\X$ and~$\X'$ are not equivalent.
\end{proof}

In view of Proposition~\ref{prop:CEPandPSEP} and Remark~\ref{rem:CEPvsPSEP},
we have the analogous reductions of the LCE decision problem
and the LCE search problem. In the case of iso-dual codes,
we can improve these reductions substantially and get rid of the (possibly expensive)
computation of Macaulay inverse polynomials. This is our next, and final, topic.

\bigskip\bigbreak
%
%

\section{The Linear Code Equivalence Problem for Iso-Dual Codes}
\label{sec-9}

In this section we first consider a polynomial time reduction of the PSE search problem
to the PI search problem in the case of arithmetically Gorenstein point sets.
Then we deduce the analogous reduction for the LCE search problem for iso-dual codes.

Thus we let $P=\FF_q[x_1,\dots,x_k]$, we let $C\subseteq \FF_q^n$ be an indecomposable iso-dual
projective $[2k,k]_q$-code, we let $G\in \Mat_{k,2k}(\FF_q)$ be a generator matrix for~$C$,
and we let $\X = \{p_1,\dots,p_{2k}\}$ be the associated point set in $\PP^{k-1}$.
Recall that~$\X$ is self-associated, arithmetically Gorenstein, and the difference function of its Hilbert function
satisfies $\Delta \HF_\X:\; 1\;\; k{-}1 \;\; k{-}1 \;\; 1$.

Moreover, we assume that there is a linear form $L\in P_1$ such that $\X \cap \mathcal{Z}(L) = \emptyset$.
Its image $\ell \in R_1$ in the homogeneous coordinate ring $R=P/I_\X$ of~$\X$ is a non-zerodivisor
and the Artinian reduction $\overline{R} = R / \langle \ell \rangle$ is a 0-dimensional
Gorenstein ring with Hilbert function $\Delta \HF_\X$. In particular, the socle degree of~$\overline{R}$
is $\sigma_{\bar{R}} = 3$.
By Theorem~\ref{thm:MacInverse}, the Macaulay inverse system of~$\overline{R}$ is generated by a homogeneous element
$\phi_\X\in \DD_{-3}$. In analogy to Theorem~\ref{thm:MacInvPoly}, we have the following result.

\begin{proposition}\label{prop:ArithGorMacInverse}
Let $\X,\X'$ be two arithmetically Gorenstein sets of $\FF_q$-rational points in $\PP^{k-1}$
such that $\Delta\HF_\X = \Delta \HF_{\X'}:\; 1\; k{-}1 \; k{-}1 \; 1$. Let $\phi_\X,\phi_{\X'}
\in \DD_{-3}$ be the homogeneous elements generating the Macaulay inverse systems of~$\overline{R}$
and $\overline{R}'$, respectively.

Moreover, let $\Lambda:\; \PP^{k-1} \longrightarrow \PP^{k-1}$ be a linear change of
coordinates such that the induced map $\lambda: \FF_q[x_1,\dots,x_k] \longrightarrow
\FF_q[x_1,\dots,x_k]$ satisfies $\lambda(L)=L'$. Suppose that~$\Lambda$ is given by
a matrix $A\in \GL_n(\FF_q)$.

If $\Lambda(\X)=\X'$, then $\phi_\X = A^{-1}\ast \phi_{\X'}$  in $\DD_{-3}$.
\end{proposition}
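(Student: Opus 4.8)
The plan is to follow the proof of Theorem~\ref{thm:MacInvPoly} closely, with the doubling $D_\X$ (socle degree $2r_\X-1$) replaced by the Artinian reduction $\overline R$ (socle degree $3$); neither the canonical ideal nor the doubling construction is needed here. First I would present $\overline R$ and $\overline R'$ as graded Artinian quotients of the full polynomial ring $P=\FF_q[x_1,\dots,x_k]$: since $\ell=L+I_\X$ is a non-zerodivisor of $R=P/I_\X$, we have $\overline R = R/\langle\ell\rangle \cong P/(I_\X+\langle L\rangle)$, and by Theorem~\ref{thm:CharIsoDual} (via Proposition~\ref{prop:CharArithGor}) this is a graded Artinian Gorenstein $\FF_q$-algebra with Hilbert function $\Delta\HF_\X:\;1\;k{-}1\;k{-}1\;1$, hence of socle degree exactly $3$. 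The hypothesis $\Lambda(\X)=\X'$ guarantees via Remark~\ref{rem:IndMaps}(b) that $\mathcal{Z}(L')\cap\X'=\emptyset$, so in the same way $\overline R'\cong P/(I_{\X'}+\langle L'\rangle)$ is Artinian Gorenstein of socle degree $3$. Theorem~\ref{thm:MacInverse} then applies to both and supplies $\phi_\X,\phi_{\X'}\in\DD_{-3}$ with $(I_\X+\langle L\rangle)^\perp=\langle\phi_\X\rangle$ and $(I_{\X'}+\langle L'\rangle)^\perp=\langle\phi_{\X'}\rangle$.

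Next I would translate the hypothesis $\Lambda(\X)=\X'$ into a statement about these two ideals. Let $\lambda:\;P\to P$ be the $\FF_q$-algebra automorphism induced by $\Lambda$ as in Remark~\ref{rem:IndMaps}(a); it is the homogeneous coordinate change associated with $A^{-1}$ and satisfies $\lambda(I_\X)=I_{\X'}$. Combining this with the standing hypothesis $\lambda(L)=L'$ gives
$$
\lambda\bigl(I_\X+\langle L\rangle\bigr)\;=\;I_{\X'}+\langle L'\rangle ,
$$
that is, writing $\lambda(-)=A^{-1}\ast(-)$ in the notation of Section~\ref{sec-8}, we have $A^{-1}\ast(I_\X+\langle L\rangle)=I_{\X'}+\langle L'\rangle$. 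Now I would invoke the equivariance of Macaulay inverse systems, Proposition~\ref{prop:EquivarMacInv}(a), applied to the Artinian Gorenstein ring $\overline R=P/(I_\X+\langle L\rangle)$ and the matrix $A^{-1}$: it gives $\bigl(A^{-1}\ast(I_\X+\langle L\rangle)\bigr)^\perp=\langle A^{-1}\ast\phi_\X\rangle$, and comparing with $(I_{\X'}+\langle L'\rangle)^\perp=\langle\phi_{\X'}\rangle$ we conclude $\langle\phi_{\X'}\rangle=\langle A^{-1}\ast\phi_\X\rangle$ in $\DD_{-3}$. Since the generator of a principal inverse system, and likewise the matrix $A$ defining $\Lambda$, are determined only up to a nonzero scalar from $\FF_q$, I would normalize these choices exactly as in the proof of Theorem~\ref{thm:MacInvPoly} (so that the natural socle generator of $\overline R$ is carried to that of $\overline R'$), which pins down the asserted identity $\phi_\X=A^{-1}\ast\phi_{\X'}$.

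I do not expect a genuine obstacle here: the proposition is in essence a restatement of Theorem~\ref{thm:MacInvPoly} for the Artinian reduction, and it asserts only the forward implication, so no generation-in-low-degrees argument is required. The two points that need a little care are, first, that the presentation $\overline R=P/(I_\X+\langle L\rangle)$ is genuinely Artinian Gorenstein of socle degree exactly $3$ --- here one uses $r_\X=3$ and the symmetry of $\HF_\X$ guaranteed by Theorem~\ref{thm:CharIsoDual} --- and, second, the bookkeeping of which of $A$ and $A^{-1}$ acts together with the scalar normalization, which is routine but easy to reverse; both are handled precisely as in the proof of Theorem~\ref{thm:MacInvPoly}.
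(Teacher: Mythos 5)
Your proposal is correct and follows essentially the same route as the paper: translate $\Lambda(\X)=\X'$ together with $\lambda(L)=L'$ into the statement that $\lambda$ carries $I_\X+\langle L\rangle$ to $I_{\X'}+\langle L'\rangle$ (equivalently, induces an isomorphism $\overline{R}\to\overline{R}'$ of Artinian Gorenstein rings of socle degree~$3$), then apply Theorem~\ref{thm:MacInverse} and the equivariance Proposition~\ref{prop:EquivarMacInv}. The only differences are cosmetic (you work with the defining ideals in $P$ rather than with the induced map $\overline{\lambda}$), and your deferral of the $A$ versus $A^{-1}$ and scalar bookkeeping matches the level of precision of the paper's own proof.
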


\begin{proof}
Clearly, if $\Lambda(\X)=\Lambda(\X')$ then the homogeneous linear coordinate change defined by $A^{-1}$
induces an $\FF_q$-algebra isomorphism $\lambda_R:\, R \longrightarrow R'$.
Let $\ell,\ell'$ be the residue classes of~$L$ and~$L'$ in~$R$ and~$R'$, respectively, and let
$\overline{R} = R/\langle \ell\rangle$ as well as $\overline{R}' = R'/\langle \ell'\rangle$.
Since the map~$\lambda_R$ satisfies $\lambda_R(\ell)=\ell'$, it induces an $\FF_q$-algebra
isomorphism $\overline{\lambda}:\; \overline{R} \longrightarrow \overline{R}'$.

As~$\X$ and~$\X'$ are assumed to be arithmetically Gorenstein, the rings~$\overline{R}$
and~$\overline{R}'$ are Artinian Gorenstein rings. They have both socle degree~3.
By Theorem~\ref{thm:MacInverse}, their Macaulay inverse systems~$\DD$ and~$\DD'$
are generated by homogeneous elements $\phi_\X\in \DD_{-3}$ and $\phi_{\X'}\in \DD'_{-3}$.
Thus Proposition~\ref{prop:EquivarMacInv} implies the claim.
\end{proof}

Together with the following result, this proposition yields a polynomial time reduction
from the PSE seach problem to the 3-PI search problem, as explained below.
The explicit computation of the Macaulay inverse polynomial of the Artinian reduction
of an arithmetically Gorenstein set of points has been described before in~\cite{EI},
and when the base field has characteristic zero in more detail in~\cite{Toh} and~\cite{ER}. 
The next proposition may be seen as a slightly improved 
characteristic free version of these results.

\begin{proposition}[Computing the Macaulay Inverse Polynomial of an Artinian 
Reduction of an Arithmetically Gorenstein Set of Points]\label{prop:CompMIPofAR}$\mathstrut$\\
Let $K$ be a field, let $\X=\{p_1,\dots,p_n\}$ be a set of $K$-rational points in $\PP^{k-1}$,
and assume that $x_1\in R_1$ is a non-zerodivisor. Let $\overline{R} = R/ \langle x_1\rangle$ 
and write $p_i = (1: p_{i2} : \cdots : p_{ik})$ with $p_{ij}\in K$ for $i=1,\dots,n$.
For $i=1,\dots,n$, let $L_i = \pi_1 + p_{i2} \pi_2 + \cdots + p_{ik} \pi_k \in \DD_{-1}$.

Assume that~$\X$ is arithmetically Gorenstein.
Let $f_1,\dots,f_n\in R_{r_\X}$ be the separators of~$\X$ and write $\bar{f}_j = \beta_j \, \bar{f}_1$
with $\beta_j\in K\setminus \{0\}$ for $j=2,\dots,k$, where $\bar{f}_i$ is the residue class of~$f_i$
in $\overline{R}_{r_\X}$.

In this setting, a Macaulay inverse polynomial $\phi_\X$ of~$\X$ is given by
$$
\phi_\X \;=\;  L_1^{[r_\X]} + \beta_2 L_2^{[r_\X]} + \cdots + 
\beta_n L_n^{[r_\X]} \in \DD_{-r_\X} 
$$
Here we let $L_i^{[j]} = \sum_{|\alpha|=j} p_i^\alpha \pi^\alpha$ for all $i=1,\dots,n$ and $j\ge 0$.
\end{proposition}

\begin{proof} 
By~\cite[Thm.~I]{EI}, there exists a tuple $(\alpha_1,\dots,\alpha_n) \in K^n$ such that
$\phi_\X = \alpha_1 L_1^{[r_\X]} + \cdots + \alpha_n L_n^{[r_\X]}$. Now we use the formula
$h\circ L_i^{[j]} = L_i^{[j-r]}\cdot h(p_i)$ for $h\in R_r$ given in~\cite[Sec.~2, Lemma]{EI}.
It shows that $f_i\circ L_j^{[r_\X]} = \delta_{ij}$ for all $i,j$, and therefore 
$f_i\circ \phi_\X = \alpha_i$ for $i=1,\dots,n$.

As shown in~\cite[Prop.~2.2]{Toh}, we have $x_1\circ \phi_\X = \alpha_1 L_1^{[r_\X-1]} + \cdots +
\alpha_n L_n^{[r_\X-1]} = 0$. (The assumption ${\rm char}(K)=0$ made there plays no role in the proof.)
Consequently, if we write $f_j - \beta_j f_1 = x_1 g_j$ with $g_j \in R_{r_\X-1}$ for $j=2,\dots,n$,
then 
$$
0 \;=\; g_j \circ (x_1 \circ \phi_\X) \;=\; (x_1 g_j) \circ\phi_\X = 
(f_j - \beta_j f_1) \circ \phi_\X \;=\; (\alpha_j -\beta_j \alpha_1) \circ \phi_\X
$$
shows $\alpha_j=\beta_j \alpha_1$ for $j=2,\dots,n$. As the tuple $(\alpha_1,\dots,\alpha_n)$
is determined only up to a non-zero scalar factor, this proves the claim.
\end{proof}

Given a bound for~$r_\X$, the polynomial $\phi_\X$ can be calculated efficiently. This results in the
following polynomial time reduction.

\begin{corollary}\label{cor:ArithGorPEQ}
The PSE search problem for arithmetically Gorenstein point sets
reduces in polynomial time to the PI search problem in degree~3.
\end{corollary}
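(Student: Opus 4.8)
The plan is to package together the polynomial time ingredients already assembled in the paper and trace through the reduction in both directions. First I would recall the setup: given two arithmetically Gorenstein point sets $\X,\X'$ in $\PP^{k-1}$ with $\Delta\HF_\X = \Delta\HF_{\X'}:\; 1\; k{-}1\; k{-}1\; 1$, I would use Remark~\ref{rem:BuMo} to compute, in polynomial time, the vanishing ideals $I_\X, I_{\X'}$, their Hilbert functions, and the regularity indices; this also confirms $r_\X = r_{\X'} = 3$ (so we really are in the degree-3 situation). Then I would fix a linear form $L$ with $\X\cap\mathcal Z(L)=\emptyset$ as in Assumption~\ref{ass:nzd}, form the Artinian reductions $\overline R = R/\langle\ell\rangle$ and $\overline R' = R'/\langle\ell'\rangle$, and note these are Artinian Gorenstein with socle degree~$3$.

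Next I would compute the Macaulay inverse polynomials $\phi_\X,\phi_{\X'}\in\DD_{-3}=\FF_q[\pi_1,\dots,\pi_k]_{-3}$ of $\overline R$ and $\overline R'$. Since $\overline R$ is defined by the ideal $\overline{I}_\X\subseteq\overline P$ (of which we can take a preimage in $P$), and since $\dim_{\FF_q}(P_3)=\binom{k+2}{3}$ is polynomial in~$k$, the computation of the degree-$3$ component of that ideal and the contraction pairing $\circ_3\!:P_3\times\DD_{-3}\to\FF_q$ to extract a generator of $(\overline{I}_\X^\perp)_{-3}$ is carried out in polynomial time. This is the crucial gain over the general case: the socle degree is bounded by the constant~$3$, so the exponential blow-up in $r_\X$ that plagues Theorem~\ref{thm:CompMacInv} and Remark~\ref{rem:RegBound} simply does not occur here. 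These cubic forms $\phi_\X,\phi_{\X'}$ are the input to the PI search problem in degree~$3$.

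Now I would invoke the correspondence in both directions. If the PI search oracle returns a matrix $A\in\GL_k(\FF_q)$ with $\phi_\X = A^{-1}\ast\phi_{\X'}$, then by Proposition~\ref{prop:EquivarMacInv} (equivariance of the Macaulay duality under $\GL_k(\FF_q)$) the coordinate change $A^{-1}$ induces an $\FF_q$-algebra isomorphism $\overline\lambda:\overline R\to\overline R'$; since $\overline R$ is the Artinian reduction of $R$ and the passage $I_\X\leadsto\overline{I}_\X$ is compatible with the $\GL_k$-action (modulo adjusting $L$ by $\lambda$), this lifts to $\lambda_R:R\to R'$, hence $\lambda(I_\X)=I_{\X'}$ and $\Lambda(\X)=\X'$; I would spell out that the separator sums must be matched (i.e., $\overline\lambda(\bar f_1+\dots+\bar f_n)=\bar f_1'+\dots+\bar f_n'$) so that the inverse polynomials, which are only defined up to scalar, are genuinely equal rather than merely proportional. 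Conversely, if $\X\sim\X'$, Proposition~\ref{prop:ArithGorMacInverse} already gives $\phi_\X = A^{-1}\ast\phi_{\X'}$, so a negative answer from the PI oracle certifies $\X\not\sim\X'$. Chaining these with Proposition~\ref{prop:CEPandPSEP}, Remark~\ref{rem:CEPvsPSEP}, Proposition~\ref{prop:IsoDualSelfAssoc}, and Theorem~\ref{thm:CharIsoDual} then yields the stated reduction for the LCE search problem for indecomposable iso-dual codes as well.

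The main obstacle I anticipate is bookkeeping the non-uniqueness of the Macaulay inverse polynomial and the choice of $L$. The inverse polynomial of an Artinian Gorenstein algebra is only determined up to a nonzero scalar (and $\overline R$ itself depends on the choice of hyperplane complementary to $\X$), so I must be careful that the coordinate change produced by the PI oracle, which matches $\phi_\X$ and $\phi_{\X'}$ on the nose, can always be rescaled on the $\GL_k$ side to also respect the chosen linear forms $\lambda(L)=L'$ — this is exactly the point where the hypothesis that $A$ is unique up to scalar (used in the proof of Theorem~\ref{thm:MacInvPoly}) is needed again, and I would mirror that argument here. Everything else is a routine assembly of the polynomial time primitives from Remark~\ref{rem:BuMo} and the linear algebra of the contraction pairing.
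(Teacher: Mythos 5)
Your computational core coincides with the paper's proof: compute the ideals $I_\X+\langle L\rangle$ and $I_{\X'}+\langle L'\rangle$ via Remark~\ref{rem:BuMo}, take $\FF_q$-bases of their degree-$3$ components (polynomial size, since $\dim_{\FF_q}(P_3)=\binom{k+2}{3}$), and dualize with the pairing $\circ_3$ of Proposition~\ref{prop:dividedpower} to obtain the cubics $\phi_\X,\phi_{\X'}$; the forward direction is Proposition~\ref{prop:ArithGorMacInverse}, exactly as in the paper. So far, so good.

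The gap is in your ``both directions'' step. You claim that if the PI oracle returns $A$ with $\phi_\X=A^{-1}\ast\phi_{\X'}$, then because the passage $I_\X\leadsto \bar I_\X$ is ``compatible with the $\GL_k$-action'', the induced isomorphism $\overline\lambda:\overline R\to\overline R'$ lifts to $\lambda_R:R\to R'$, i.e.\ $\lambda(I_\X)=I_{\X'}$ and hence $\Lambda(\X)=\X'$. This is not justified, and the paper explicitly flags it as open in the remark following Corollary~\ref{cor:ArithGorPEQ}: knowing that a coordinate change matches the Macaulay inverse cubics only tells you, via Proposition~\ref{prop:EquivarMacInv}, that it carries the apolar (Artinian) ideal $I_\X+\langle L\rangle=\Ann_P(\phi_\X)$ onto $I_{\X'}+\langle L'\rangle$; it does not follow that it carries the saturated one-dimensional ideal $I_\X$ onto $I_{\X'}$, because the Artinian reduction forgets the embedding of the point set and a given Artinian Gorenstein quotient may admit inequivalent lifts (or isomorphisms not induced by any equivalence of the point sets). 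Your argument, if valid, would in fact prove the reduction of the PSE \emph{decision} problem as well, which the paper expressly does not claim for precisely this reason. The corollary as stated (search version) is proved in the paper without this lifting claim, so the fix is simply to drop that step and argue only the forward implication plus the polynomial-time computation; your closing remarks about scalar normalization of $\phi_\X$ and the choice of $L$ are sensible but do not repair the lifting issue.
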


\begin{proof}
In the proof of Prop.~\ref{prop:ArithGorMacInverse}, the ideals~$I_\X + \langle L\rangle$ and $I_{\X'} +
\langle L'\rangle$ defining~$R$ and~$R'$, respectively, can be found in polynomial
time by Remark~\ref{rem:BuMo}. Then vector space bases of their homogeneous components of degree~3
can be found in polynomial time, as $\dim_{\FF_q}(P_3) = \binom{k+2}{3}$.
Finally, the calculation of the Macaulay dual polynomials $\phi_\X, \phi_{\X'} \in \FF_q[\pi_1,\dots,\pi_k]_3$
is achieved in polynomial time using the preceding proposition.
\end{proof}

Notice that it is not clear how to reduce the PSE decision problem
in this manner, because if we have a matrix $A^{-1} \in \GL_k(\FF_q]$ which defines an
$\FF_q$-algebra isomorphism $\overline{\lambda}:\; \overline{R}\longrightarrow \overline{R}'$,
we do not know whether the corresponding linear change of coordinates lifts to a map 
$\lambda_R:\; R \longrightarrow R'$, i.e., whether $\lambda(I_\X) = I_{\X'}$.

As a consequence of the corollary, the LCE search problem for 
indecomposable iso-dual codes reduces in polynomial time as follows.

\begin{proposition}[Reduction of the LCE Problem for Iso-Dual Codes]\label{thm:IsoDualReduce}$\mathstrut$\\
Let $C,C'$ be two indecomposable iso-dual projective $[2k,k]_q$-codes in $\FF_q^n$ with generator matrices 
$G,G' \in \Mat_{k,n}(\FF_q)$, respectively. Let $\X,\X'$ be the associated point sets in $\PP^{k-1}$, 
and let $\phi_{\X}, \phi_{\X'} \in \FF_q[\pi_1,\dots,\pi_k]_3$ be the polynomials generating
their Macaulay inverse systems.

Then the LCE search problem for $C\sim C'$ reduces in polynomial time
to the PI search problem $\phi_\X \sim \phi_{\X'}$ in degree~3.
\end{proposition}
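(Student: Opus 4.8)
The plan is to obtain the statement by concatenating reductions that are already available in the preceding sections, so that essentially no new argument is needed. Since $C$ and $C'$ are assumed projective, Proposition~\ref{prop:CEPandPSEP} together with Remark~\ref{rem:CEPvsPSEP} gives that the LCE search problem for $C\sim C'$ is equivalent, in both directions and in time polynomial in $n,k,\log q$, to the PSE search problem for the associated point sets $\X\sim\X'$: one reads $\X$ off the columns of $G$, and a coordinate change $\Lambda$ with $\Lambda(\X)=\X'$ is turned back into a triple $(A,D,P)$ with $G'=A\cdot G\cdot D\cdot P$ by the normalization-and-comparison step of Remark~\ref{rem:CEPvsPSEP}(b). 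Next, by Proposition~\ref{prop:IsoDualSelfAssoc}(b) the sets $\X,\X'$ are self-associated, and since the codes are also indecomposable, Theorem~\ref{thm:CharIsoDual} shows that $\X,\X'$ are arithmetically Gorenstein with $\Delta\HF_\X=\Delta\HF_{\X'}$ equal to $1,\,k{-}1,\,k{-}1,\,1$; in particular $r_\X=r_{\X'}=3$, and the Artinian reductions $\overline R=R/\langle\ell\rangle$ and $\overline R'=R'/\langle\ell'\rangle$ are Artinian Gorenstein of socle degree $3$. For point sets with this Hilbert shape Corollary~\ref{cor:ArithGorPEQ} applies directly: the PSE search problem reduces in polynomial time to the PI search problem in degree~$3$ for the generators $\phi_\X,\phi_{\X'}\in\FF_q[\pi_1,\dots,\pi_k]_3$ of the Macaulay inverse systems of $\overline R$ and $\overline R'$. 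Composing the two polynomial-time reductions gives the claim.

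Spelled out, the reduction algorithm I have in mind is: on input $(G,G')$ with the promise $C\sim C'$, form $\X,\X'$; compute the degree-$3$ components of the ideals $I_\X+\langle L\rangle$ and $I_{\X'}+\langle L'\rangle$ via the Buchberger--M\"oller algorithm (Remark~\ref{rem:BuMo}) and dualize them against the non-degenerate pairing of Proposition~\ref{prop:dividedpower} to obtain $\phi_\X,\phi_{\X'}$; call the degree-$3$ PI search oracle once on the pair $(\phi_\X,\phi_{\X'})$; as in the proof of Corollary~\ref{cor:ArithGorPEQ}, extract from its answer a coordinate change $\Lambda$ with $\Lambda(\X)=\X'$ (the relevant equivariance being Propositions~\ref{prop:EquivarMacInv} and~\ref{prop:ArithGorMacInverse}); and output the triple $(A,D,P)$ through Remark~\ref{rem:CEPvsPSEP}(b). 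Apart from the single oracle call, each step is linear algebra over $\FF_q$ in degrees $\le 3$ (the ambient space $P_3$ has dimension $\binom{k+2}{3}$), a Buchberger--M\"oller computation, and $O(n)$ point comparisons, all polynomial in $n,k,\log q$. A routine point to handle is Assumption~\ref{ass:nzd}: if no linear form $L$ with $\X\cap\mathcal Z(L)=\emptyset$ exists over $\FF_q$, one passes, as explained after Assumption~\ref{ass:nzd}, to a finite extension $\FF_{q^f}$ where it does; the Hilbert functions and the arithmetically Gorenstein property are unchanged, and an $\FF_{q^f}$-linear equivalence of $C$ and $C'$ is automatically realized over $\FF_q$, so nothing is lost.

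I do not anticipate a genuinely new obstacle, as the mathematical substance is carried by Theorem~\ref{thm:CharIsoDual}, Proposition~\ref{prop:ArithGorMacInverse} and Corollary~\ref{cor:ArithGorPEQ}; the work is to verify that their hypotheses are met --- projectivity of $C,C'$ (assumed), self-association of $\X,\X'$ (Proposition~\ref{prop:IsoDualSelfAssoc}), and the socle-degree-$3$ shape $1,\,k{-}1,\,k{-}1,\,1$ of the Artinian reductions (Theorem~\ref{thm:CharIsoDual}) --- and that polynomial-time oracle reductions compose, which is immediate. The one mildly delicate point is exactly the one flagged in the remark after Corollary~\ref{cor:ArithGorPEQ}: a linear map witnessing $\overline R\cong\overline R'$ need not \emph{a priori} witness $\X\sim\X'$. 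I would handle this by routing the argument through Corollary~\ref{cor:ArithGorPEQ}, which is stated for the search problem and already accounts for it, rather than re-deriving the lifting here.
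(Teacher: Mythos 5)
Your proposal is correct and follows essentially the same route as the paper, whose proof is exactly the combination of Corollary~\ref{cor:ArithGorPEQ} with Proposition~\ref{prop:CEPandPSEP} and Remark~\ref{rem:CEPvsPSEP}; your additional verification of the hypotheses via Proposition~\ref{prop:IsoDualSelfAssoc} and Theorem~\ref{thm:CharIsoDual}, and your handling of Assumption~\ref{ass:nzd} and of the search-versus-decision subtlety, are all consistent with the paper's setup.
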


\begin{proof}
It suffices to combine the polynomial time reduction of Corollary~\ref{cor:ArithGorPEQ}
with Proposition~\ref{prop:CEPandPSEP} and Remark~\ref{rem:CEPvsPSEP}.
\end{proof}

\bigskip\bigbreak
%
%

\section{Conclusion and Outlook}\label{sec-10}

In this paper we have studied the difficulty of the Linear Code Equivalence (LCE) problem
using techniques from algebraic geometry. We showed that the LCE problem is polynomial time equivalent
to the Point Set Equivalence (PSE) problem for finite sets of $\FF_q$-rational points in~$\PP^{k-1}$.
Using the doubling of such a set~$\X$, we proved that the LCE problem is equivalent to an algebra isomorphism
problem for Artinian Gorenstein algebras. Then we applied the Macaulay inverse system
to reduce the task to a Polynomial Isomorphism (PI) problem. 
Although this reduction cannot be performed in polynomial time in general, it can if we assume that~$C$
is a reasonably ``good'' code. The PI problem has been studied
in Algebraic Geometry for more than a century, and in Cryptography for at least the last 30 years.
For iso-dual codes, we have reduced the LCE problem in polynomial time to a PI
problem in degree~3 which has been analyzed extensively in Cryptography under the name IP1S
and is believed to be efficiently solvable in the vast majority of cases.

The next step is obviously to implement the reductions presented here and to check whether the
purported time complexities hold up in practice.  
Furthermore, it is an interesting problem to describe which homogeneous polynomials of some odd degree
generate the Macaulay inverse systems of doublings of point sets, in order to understand whether the
PI problem is actually harder than the LCE problem in certain cases.

\bigbreak
%
%

\end{document}